\newtheorem{definition}{Definition}
\newtheorem{assumption}{Assumption}
\newtheorem{remark}{Remark}
\newtheorem{lemma}{Lemma}
\newtheorem{theorem}{Theorem}
\newtheorem{proposition}{Proposition}
\newcommand{\norm}[1]{\left\lVert#1\right\rVert}
\begin{document}
%


\title{RIS-Aided Localization under Position and Orientation Offsets in the Near and Far Field}
\author{Don-Roberts~Emenonye,
        Harpreet~S.~Dhillon,
        and~R.~Michael~Buehrer
\thanks{D.-R. Emenonye, H. S. Dhillon and R. M.  Buehrer are with Wireless@VT,  Bradley Department of Electrical and Computer Engineering, Virginia Tech,  Blacksburg,
VA, 24061, USA. Email: \{donroberts, hdhillon, rbuehrer\}@vt.edu. The support of the US National Science Foundation (Grants ECCS-2030215 and CNS-2107276) is gratefully acknowledged. 
}
}

\maketitle
\IEEEpeerreviewmaketitle

\begin{abstract}
This paper presents a rigorous Bayesian analysis of the information in the signal (consisting of both the line-of-sight (LOS) path and reflections from multiple reconfigurable intelligent surfaces (RISs)) that originate from a single base station (BS) and is received by a user equipment (UE). For a comprehensive Bayesian analysis, both near and far field regimes are considered.  The Bayesian analysis views both the location of the RISs and previous information about the UE as {\em a priori} information for UE localization. With outdated {\em a priori} information, the position and orientation offsets of the RISs become parameters that need to be estimated and fed back to the BS for correction. We first show that when the RIS elements have a half wavelength spacing, this RIS orientation offset is a factor in the pathloss of the RIS paths. Subsequently, we show through the Bayesian equivalent Fisher information matrix (EFIM) for the channel parameters that the RIS orientation offset cannot be corrected when there is an unknown phase offset in the received signal in the far-field regime. However, the corresponding EFIM for the channel parameters in the received signal observed in the near-field shows that this unknown phase offset does not hinder the estimation of the RIS orientation offset when the UE has more than one receive antenna. Furthermore, we use the EFIM for the UE location parameters to present bounds for UE localization in the presence of RIS uncertainty. We rigorously show that regardless of size and propagation regime, the RISs are only helpful for localization when there is {\em a priori} information about the location of the RISs.  Finally, through numerical analysis of the EFIM and its smallest eigenvalue, we demonstrate the loss in information when the far-field model is {\em incorrectly} applied to the signals received at a UE experiencing near-field propagation.
 \end{abstract}
\begin{IEEEkeywords}
Reconfigurable intelligent surfaces, 6G localization, RIS location uncertainty, near-field, Bayesian FIM.
\end{IEEEkeywords}
\section{Introduction}
Location awareness is crucial for the success of numerous services, including internet-of-things and autonomous vehicles \cite{zekavat2011handbook,6894213,nguyen2017wireless,8454389,8306879,7707439}. The most common way to obtain location information is through the Global Positioning Systems (GPS); with its constellation of satellites across the globe, GPS provides sub-meter location accuracy in favorable propagation environments. However, GPS may be unavailable when there is no clear line-of-sight (LOS) to a sufficient number of satellites, such as in urban alleys, underground tunnels, and inside buildings. In such situations, the fallback option is to provide location information through existing wireless networks, such as the ubiquitous cellular network. However, localization using wireless networks depends strongly on the propagation environment. Since reconfigurable intelligent surface (RIS) allows us to control the propagation environment to some extent, its utility for localization has naturally been explored recently. Specifically, an RIS is envisioned as a planar surface comprising of sub-wavelength-sized meta-materials capable of controlling a wireless propagation channel by applying a desired transformation on the incoming signal through the software control of each meta-material. This unique ability to control the harsh wireless channel has led to many works investigating the suitability of RISs for localization \cite{9781656,9729782,8264743,9508872,9625826,9500663,9782100,9528041,9774917}. Although the usefulness of RISs for localization has been well established in the recent literature, the fundamental limits of RIS-aided localization in the near and far field propagation regimes under the effect of RIS misorientation and misalignment  have not been investigated, which is the main topic of this paper\footnote{ In this paper, misorientation is said to occur  when an antenna array is perturbed after deployment such that its orientation changes. Hence, there is an orientation offset. Likewise, misalignment is said to occur when an antenna array is perturbed such that the position of its centroid is changed after deployment.\color{black}}\color{black}.
\subsection{Background and Problem Statement}
A well-researched solution to location awareness involves the measurement/extraction of metrics such as time-of-arrival (TOA), time-difference-of-arrival (TDOA), and angle-of-arrival (AOA)  from the received signals \cite{5109631,4802193,6725647,7383332,6731596,6955781,4600186,8264804,8827486,5762798}.  A common analysis is to obtain the fundamental limits of a location awareness system while starting from the received signals.  In \cite{5571900,5571889,9606768,7364259}, the fundamental limits of location awareness are derived starting from the actual received signals while also including possible {\em a priori} information about various propagation characteristics. The {\em a priori} information is included by providing the Bayesian/General Fisher information bounds rather than the conventional Fisher information bounds. These two bounds differ in how we view the location parameters. In the derivation of the general Fisher Information Matrix (FIM), the location parameters are random variables, while in the derivations of the FIM, the location parameters are deterministic. In all cases, the FIM has a strong relationship to the signal-to-noise ratio (SNR).

The spacing between individual antennas/elements on an antenna array is on the order of the operating wavelength. Hence, exploring higher frequency bands naturally leads to considering numerous antennas/elements at the transmitters and receivers. Starting from the received signals, the investigation of fundamental limits of the location accuracy considering this setup of higher frequency bands with numerous antennas is an important research area \cite{garcia2017direct,8240645,8356190,8515231,8755880,fascista2021downlink,li2019massive,guerra2018single,9082200}. A key aspect of this literature is the possibility of localization using a single anchor, and this research direction is termed {\em single anchor localization}. Although non-line-of-sight (NLOS) propagation provides valuable information for location awareness in single anchor localization\cite{8515231}, the overall performance is heavily dependent on the presence of a LOS path from the transmitter to the receiver \cite{8240645,8356190,8515231}. However, an LOS path is not guaranteed at high operating frequencies, such as millimeter wave (mmWave) or terahertz, because of their high susceptibility to blockages  \cite{8240645}.

At such high frequencies, RISs have the potential to provide  a virtual LOS path from the transmitter to the receiver \cite{9781656,9729782,8264743,9508872,9625826,9500663,9782100,9528041,9774917}. This virtual LOS path can generally improve localization accuracy by supplementing the LOS path or serving as the primary LOS path. \emph{However, these RISs can be placed on movable objects\cite{9528041,boulogeorgos2022outage} or indoors \cite{9201330}, and information about their location can become outdated through misorientation or misalignment, hence, this work statistically characterizes the information available in the signals reflected by misoriented/misaligned RISs to a receiver operating in both near and far field propagation regimes. Moreover, the RIS location information (before misorientation/misalignment) may be imprecise as the ability to have exact location/orientation information is questionable. Hence, this work investigates the available information under these location uncertainties. 
}


\color{black}

\subsection{Related Works}
The following three existing research directions are of interest to this paper: i) RIS-aided localization, ii) single anchor localization and iii) Bayesian/general limits of localization. The pertinent works from these directions are reviewed next.
\subsubsection{RIS-aided localization} Prior works on RIS-aided localization can be grouped mainly into i) continuous RIS\cite{8264743,9729782,9781656} and discrete RIS \cite{9508872,9625826,9500663,9782100,9528041}, and ii) near-field\cite{8264743,9729782,9781656,9508872,9500663,9625826} and far-field propagation \cite{9782100,9528041,9774917}. In \cite{9781656}, a vision for RIS-aided localization, including the effects of using RISs with large apertures, controlling the phases of RIS elements in real-time, operating at high-frequency bands, and using highly directive and polarized antennas, is presented. In \cite{9729782}, physical structures such as walls are assumed to be coated with the requisite meta-materials, and the FIM for positioning is investigated. While authors in \cite{9729782} start their investigation of positioning bounds from the received signals, authors in \cite{8264743} begin their analysis of the location bounds from Maxwell's equations. In \cite{8264743}, the authors also show the potential of optimizing the individual phases of the RIS elements to improve both communications and localization performance. In \cite{9508872}, starting from the received signals, the fundamental limits for localization of a receiver are derived. Subsequently, these limits are used to develop an algorithm for improving SNR. Authors in \cite{9625826} use the degrees of freedom offered by the RIS phases to transform each corresponding RIS element into virtual anchors to develop ranging-based limits. The possibility of using RISs as a lens receiver along with directional and random beamforming is investigated in \cite{9500663}. A non-trivial case of joint synchronization and localization with a single receive antenna is investigated in \cite{9782100}; both theoretical bounds and algorithms to attain the bounds are presented. In \cite{9528041}, non-stationary users are equipped with RISs, and using the reflected signals from these RISs, the FIMs for the user positions are derived. In \cite{9774917}, the case of having a single transmitter, an RIS with multiple elements, and a single antenna receiver are investigated under receiver mobility. In our work in \cite{emenonye2022fundamentals}, the General FIMs are investigated for the case of multiple RISs, and it's shown that the angle of incidence and angles of reflection can not be estimated without {a priori} information about the complex path gains.

Note that RISs have also been studied in the radar context. The FIM for RIS-assisted radar has been studied in \cite{9963716,esmaeilbeig2022joint,9827797}. Also, in \cite{9732186}, the foundational radar detection problem is studied with two RISs (one close to the target and one close to the radar). In \cite{9264225}, enabling the coexistence of a cellular network and a radar system with RISs is investigated. Finally, authors in \cite{9454375} use an RIS to provide an additional echo of the target. Finally, note that the  performance of users under random RIS deployment and blockage is studied in \cite{9500724}.
\color{black}

\subsubsection{Single anchor localization} 
The deployment of a large number of antennas needed for single anchor localization and the deployment of a large number of elements required for RIS-aided localization is enabled by the use of higher frequencies. However, this is not the only similarity between the two research areas. For example, multipath signals are separable both by using orthogonal phases in the RIS elements or by using a large number of receive antennas in the single-anchor setup. Moreover, the techniques used to derive the closed-form expression of the FIM and analyze the impact of nuisance parameters are transferable from single anchor localization \cite{8240645,8356190,8515231} to RIS-aided localization \cite{emenonye2022fundamentals}. In \cite{garcia2017direct}, an alternative to two-step metrics-based AOA localization is provided by jointly processing the LOS and NLOS signals from multiple transmitters. In \cite{8240645}, the location bounds for single anchor localization are derived, and an algorithm based on the sparsity of mmWave channels is developed to achieve these bounds. Authors in \cite{8356190} provide closed-form expressions for the FIMs, present conditions in which the multipath observed at the receiver are separable, and analyze the impact of nuisance parameters on the location parameters of interest. The work in \cite{8515231} leverages  \cite{8356190} to show that NLOS can provide valuable information for localization. The non-trivial single receiver localization problem is considered with multiple observations \cite{8755880} and with a single observation \cite{fascista2021downlink}. In \cite{li2019massive}, the fine resolution obtainable by estimating the delays of multipath components is used for simultaneous localization and mapping. While \cite{guerra2018single} considers the localization performance under both spatial multiplexing and beamforming, localization performance is investigated under hardware impairments in \cite{9082200}.
\subsubsection{Bayesian/general limits of localization}
The general FIM derived in this work to incorporate RIS location as {\em a priori} information for localization is similar to previous works incorporating {\em a priori} information about the channel parameters and network configurations \cite{8264804,8827486,5762798,5571900,5571889,9606768,7364259}. In \cite{8264804}, hard decisions on the positioning are replaced by probabilistic values; more specifically, probabilities are assigned to all probable locations of the transmitter/receiver. Subsequently, the work is extended to include probabilistic location values, and {\em a priori} information about the context of operation \cite{8827486}. Finally, the previous two results are extended to include cooperative localization, and navigation in \cite{5762798}. In \cite{5571900}, {\em a priori} information about the channel parameters and {\em a priori} information about the transmitter or receiver location are included in a comprehensive investigation of the fundamental limits of wideband localization. This work also shows that NLOS is only helpful with {\em a priori} information about the channel parameters, and it is extended to cooperative localization in \cite{5571889}. The cooperative localization framework is then developed for collaborative positioning in massive networks \cite{9606768}. 

\subsection{Contributions}
This paper develops a rigorous Bayesian framework for the localization of a user equipment (UE) operating in the downlink of a multiple-input multiple-output (MIMO) system with multiple RISs (some with uncertain locations) reflecting signals from the base station (BS) to the UE in addition to the LOS path. Under this scenario,  our main contributions are highlighted next:
\subsubsection{Misorientation Aware Pathloss}
We present a generalized pathloss model for discrete RISs with  half a wavelength spacing such that the RIS's area is equivalent to the sum of the effective apertures of all RIS elements, i.e., the square of half the wavelength. The developed pathloss model is derived by viewing the RIS elements as a conducting screen with an electrically-small low-gain element. Hence, the gain due to each RIS element is almost entirely specified by the direction vector pointing out from that element. This direction vector depends on the orientation of the BS, RISs, and UE. With this pathloss model, we explicitly derive the first derivatives needed for the derivations of the FIM. Furthermore, unlike other works, the product rule is required to present some entries in the FIM because the RIS orientation is present both in the pathloss and the array response vectors. Hence, the resulting bound on the covariance of the estimation error for any unbiased estimator depends on the orientation of the RISs.
\subsubsection{Impact of RIS Orientation Offset and an Unknown Phase Offset}
We incorporate uncertainties in RIS location by strictly considering the position and orientation of the RISs as {\em a priori} information for localization and subsequently present a derivation of the general FIM. This viewpoint presents the opportunity of investigating the estimation of RIS orientation offset at the receiver. We analyze the possibility of estimating this offset through the developed Bayesian framework under both near-field and far-field propagation regimes. First, we rigorously show that with no {\em a priori} information about the channel parameters, when the signal reflected from a misoriented RIS is received in the far-field with an unknown phase offset, the resulting general FIM containing both the orientation offset and the unknown phase offset is rank deficient. This result indicates that irrespective of the number of receive antennas at the UE, in the absence of {\em a priori} information about the unknown phase offset, the orientation offset of an RIS can not be estimated and corrected in the far-field. Second, we show that the rank deficiency observed in the FIM containing both the orientation offset and the unknown phase offset for the far-field is not observed in the corresponding FIM for the near-field with more than one receive antenna. This result is non-trivial as {\em a priori} information about an unknown phase offset is challenging to obtain. Hence, practically, the possibility of estimating and correcting the RIS orientation offset through the received signal at the UE only exists in the case of near-field propagation. 
\subsubsection{General Bounds for UE Location under RIS Uncertainty}
We use the notion of the equivalent Fisher information matrix (EFIM) to derive fundamental limits of the UE location under RIS uncertainty. The derived result indicates that regardless of RIS size and propagation regime (near-field or far-field), RISs are only helpful when there is {\em a priori} information about their location. Also, to understand the loss in information when a far-field model is {\em incorrectly} applied to the signals received at a UE experiencing near-field propagation, we use simulations to analyze the EFIM in both cases (i.e., using the {\em correct} near-field model and the {\em incorrect} far-field model for this near-field propagation regime).

It is important to note that the Fisher information analysis of the available information in the signal received at a UE after reflections from RISs has been investigated in \cite{8264743,9729782,9781656, 9508872,9625826,9500663,9782100,9528041}. However, the analyses in those works fail to consider the possibility of uncertainty in the locations of the RISs. We consider these uncertainties in our work: i) by deriving a pathloss model that incorporates the RIS orientation offset, ii) by determining under what propagation regimes the RIS orientation offset can be estimated, and iii) by rigorously showing the impact of RIS uncertainty on the available information about the UE's location. 
\color{black}

\textit{Notation:} the superscript $(\cdot)^{[0]}$ is related to the LOS path; the superscript $(\cdot)^{[m]}, m \in \{1,2,\cdots \}$, is related to the $m^{\text{th}}$ RIS path;
the transpose operator is $(\cdot)^{\mathrm{T}}$; the hermitian transpose operator is $(\cdot)^{\mathrm{H}}$; the complex conjugate operator is $(\cdot)^{{*}}$;
the operation $[\bm{V}]_{[g_1:v_1,g_2:v_2 ]}$ extracts the submatrix represented by both the rows that lie in the range, $g_1:v_1$,  and the columns that lie in the range $g_2:v_2$ of the matrix $\bm{V}$; the matrix trace operator is  $\operatorname{Tr}(\cdot)$; $\bm{1}$ is a vector of all ones;  $\bm{I}_{N_{{V}}}$ is an $ N_{{V}} \times N_{{V}}$ identity matrix;
the operator for the Euclidean norm  is $\norm{\cdot}$; $ \succ$ characterizes the positive definiteness of a matrix;
the $\nabla$ operator represents the first derivative; 
 the notation $\mathbb{E}_{\bm{v}}\{\cdot\}$ is the expectation operator with respect to the random vector $\bm{v}$; the following function represents the FIM: 
 $
 \bm{F}_{\bm{v}}(\bm{w} ; \bm{x}, \bm{y}) \triangleq \mathbb{E}_{\bm{v}}\left\{\left[\nabla_{ \bm{x}} \ln f(\bm{w})\right]\left[\nabla_{\bm{y}} \ln f(\bm{w})\right]^{\mathrm{T}}\right\}.
 $

 \section{Misorientation and Misalignment Aware System Model}
We consider the downlink of an RIS-assisted MIMO system operating with $N_{{B}}$ antennas at a single BS, $N_{{R}}^{[{m}]}$ reflecting elements at the ${m}^{\text {th}}$ RIS where ${m} \in  \mathcal{M}_1 = \{1, 2, \cdots, {M}_1 \}$,  and $N_{{U}}$ antennas at the UE. This paper occasionally refers to the BS, RISs, and UE as communication entities. The $V^{\text{th}}$ entity has an arbitrary but known array geometry and the centroid of this entity  is {\em initially} located at $\Tilde{\bm{p}}_{{V}} = [\Tilde{x}_{{{V}}}, \Tilde{y}_{{{V}}}, \Tilde{z}_{{{V}}}]^{\mathrm{T}} \in \mathbb{R}^3$, the ${v}^{\text{th}}$ element on this entity is initially located at $\Tilde{\bm{s}}_{{v}} \in \mathbb{R}^3$.  The point, $\Tilde{\bm{s}}_{{v}}$, is defined with respect to the centroid of the $V^{\text{th}}$ entity and is  uniquely represented with respect to the global origin as $\Tilde{\bm{p}}_{{v}} = \Tilde{\bm{p}}_{{V}} + \Tilde{\bm{s}}_{{v}}$. A representation in matrix form  of the points describing all the elements on the $V^{\text{th}}$ entity is $ \Tilde{\bm{S}}_v = [\Tilde{\bm{s}}_1, \Tilde{\bm{s}}_2, \cdots, \Tilde{\bm{s}}_{N_V} ],$
where $N_V$ is the number of elements on the $V^{\text{th}}$ entity. For the BS and RISs, the initial location refers to the location at deployment and can be assumed to be known. On the other hand, the knowledge of the initial location of the UE is application dependent. These initial locations might change with time, and the location server might be unaware of the change, thereby the location server will have outdated information about the locations of these entities, which results in what we call misorientation and misalignment. \color{black} The effect is particularly prominent when some of these entities, such as RISs, are deployed indoors or on movable objects \cite{9528041,boulogeorgos2022outage}. The exact reason behind misorientation and misaligned is not important for our analysis. The new (misoriented and misaligned) position of an element on a given entity is 
\begin{equation}
    \label{equ:entity_element}
    \begin{aligned}
{\bm{p}}_{{v}} &=  \Tilde{\bm{p}}_{{V}}  +\bm{\xi}_{{V}} +  \bm{Q}_{V}\Tilde{\bm{s}}_{{v}}, \\
{\bm{p}}_{{v}} &=  {\bm{p}}_{{V}}  +  {\bm{s}}_{{v}},
    \end{aligned}
\end{equation}
 where ${\bm{p}}_{{V}}= [x_{{{V}}}, y_{{{V}}}, z_{{{V}}}]^{\mathrm{T}} = \Tilde{\bm{p}}_{{V}}  +\bm{\xi}_{{V}}$, ${\bm{s}}_{{v}} = \bm{Q}_{V}\Tilde{\bm{s}}_{{v}} $, $\bm{Q}_{V} = \bm{Q}\left(\alpha_{V}, \psi_{V}, \varphi_{V}\right)$ defines a $3$D rotation matrix \cite{lavalle2006planning}, and $\bm{\xi}_{{V}} \in \mathbb{R}^3 $ is a vector specifying misalignment. The orientation angles of the $V^{\text{th}}$ entity are vectorized as $\bm{\Phi}_V = \left[\alpha_{V}, \psi_{V}, \varphi_{V}\right]^{\mathrm{T}}$. In matrix form, all the points describing the elements on the $V^{\text{th}}$ entity that has been misoriented are collectively represented as $ {\bm{S}}_v = [{\bm{s}}_1, {\bm{s}}_2, \cdots, {\bm{s}}_{N_V} ].$
\begin{figure}[htb!]
\centering
{\includegraphics[scale=0.3]{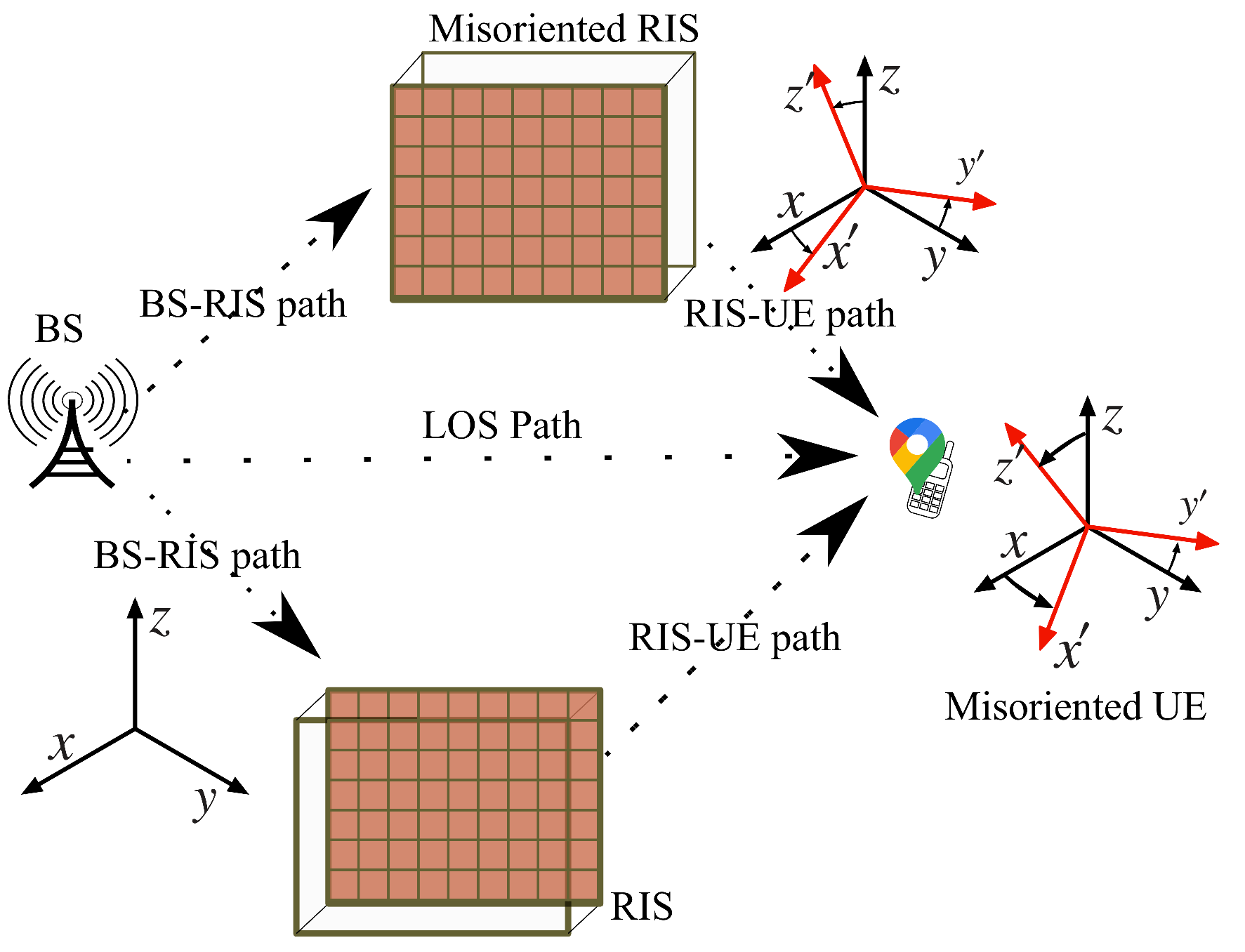}}
\caption{An illustration showing the BS which serves as the global origin with a misoriented RIS and a perfectly located RIS providing reflected paths to the UE in addition to an LOS path.} 
\label{fig:Results/system_model_final_1}
\end{figure}
The position of the $V^{\text{th}}$ entity's centroid located at ${\bm{w}}_{{V}}$ can be described in relation to the position of the $G^{\text{th}}$ entity's centroid located at ${\bm{w}}_{{G}}$ as ${\bm{w}}_{{V}} = {\bm{w}}_{{G}}     +d_{\bm{w}_{G} \bm{w}_{V}} \bm{\Delta}_{\bm{w}_{G} \bm{w}_{V}},$
where ${\bm{w}} \in \{{\bm{p}}, \Tilde{{\bm{p}}} \}$, $d_{\bm{w}_{G} \bm{w}_{V}}$ is the distance from point ${\bm{w}}_{{G}}$ to point ${\bm{w}}_{{V}}$ and $\bm{\Delta}_{\bm{w}_{G} \bm{w}_{V}}$ is the corresponding unit direction vector $\bm{\Delta}_{\bm{w}_{G} \bm{w}_{V}} = [\cos \phi_{{\bm{w}_{{G}} {\bm{w}}_{{V}}}} \sin \theta_{\bm{w}_{G} \bm{w}_{V}}, \sin \phi_{\bm{w}_{G} \bm{w}_{V}} \sin \theta_{\bm{w}_{G} \bm{w}_{V}}, \cos \theta_{\bm{w}_{G} \bm{w}_{V}}]^{\mathrm{T}}$. The points corresponding to the $v^{\text{th}}$ element on the $V^{\text{th}}$ entity is defined as $\bm{w}_{{v}}^{} = d_{\bm{w}_{{v}}^{}} \bm{\Delta}_{\bm{w}_{{v}}^{}},$
where ${\bm{w}} \in \{{\bm{s}}, \Tilde{{\bm{s}}} \}$,  $d_{\bm{w}_{{v}}^{}}$ is the distance from the centroid of the $V^{\text{th}}$ entity to its ${v^{\text{th}}}$ element, and $\bm{\Delta}_{\bm{w}_{{v}}^{}}$ is the corresponding unit direction vector, $\bm{\Delta}_{\bm{w}_{{v}}^{}} = [\cos \phi_{\bm{w}_{{v}}^{}} \sin \theta_{\bm{w}_{{v}}^{}}, \sin \phi_{\bm{w}_{{v}}^{}} \sin \theta_{\bm{w}_{{v}}^{}}, \cos \theta_{\bm{w}_{{v}}^{}}]^{\mathrm{T}}$.  The angles $\phi$ and $\theta$ in the respective unit vectors correspond to the appropriate azimuth and elevation angles, respectively.

In this paper, we consider the localization of the UE in the downlink of an RIS-assisted wireless system. Hence, the position and orientation of the UE are unknowns that need to be estimated. In addition to the traditional localization setup, we consider the possibility of an RIS location getting perturbed after placement. The RIS can be perturbed through misorientation and misalignment. In this paper, misorientation is said to occur  when the orientation of an entity is changed after deployment. Likewise, misalignment is said to occur when the centroid of an entity is moved after deployment. Mathematically, the misorientation of the $m^{\text{th}}$ RIS can be described as 
$
{\bm{s}}_{{r}}^{[m]} = \bm{Q}_{{R}}^{[m]}\Tilde{\bm{s}}_{{r}}^{[m]},
$
where $\Tilde{\bm{s}}_{{r}}^{[m]}$ is the location of the $r^{\text{th}}$ element on the $m^{\text{th}}$ RIS after initial deployment and ${\bm{s}}_{{r}}^{[m]}$ is the location of the $r^{\text{th}}$ element on the $m^{\text{th}}$ RIS after it has been misoriented. Also, the misalignment of the $m^{\text{th}}$ RIS can be mathematically described as 
 $
{\bm{p}}_{{R}}^{[m]} =  \Tilde{\bm{p}}_{{R}}^{[m]}  +\bm{\xi}_{{R}}^{[m]},
$
where $\Tilde{\bm{p}}_{{R}}^{[m]}$ is the location of the centroid of  the $m^{\text{th}}$ RIS after initial deployment and ${\bm{p}}_{{R}}^{[m]}$ is the location of the  of the centroid of  the $m^{\text{th}}$ RIS after it has been misaligned.  Finally, the misorientation and misalignment can be jointly described as
 $
 \begin{aligned}
{\bm{p}}_{{r}}^{[m]} =  \Tilde{\bm{p}}_{{R}}^{[m]}  +\bm{\xi}_{{R}}^{[m]} +  \bm{Q}_{{R}}^{[m]}\Tilde{\bm{s}}_{{r}}^{[m]}, 
{\bm{p}}_{{r}}^{[m]} =  {\bm{p}}_{{R}}^{[m]}  +  {\bm{s}}_{{r}}^{[m]}.
\end{aligned}
$
Hence, in this paper, we investigate the effect of having a perturbed RIS on the pathloss and localization performance of a UE. In addition, we also explore the possibility of estimating the RIS orientation offset in the near and far field propagation regimes.
\color{black}

\begin{lemma}
\label{lemma:distance_v_w}
The distance from the  $g^{\text{th}}$ element (located at ${\bm{p}}_{{g}^{}}$) on the $G^{\text{{th}}}$ entity (located at ${\bm{p}}_{{G}^{}}$)  to the  $v^{\text{th}}$ element (located at ${\bm{p}}_{{v}^{}}$)  on the $V^{\text{{th}}}$ entity (located at ${\bm{p}}_{{V}^{}}$), is given by 
\begin{equation}
\label{equ_lemma:distance_v_w}
\begin{aligned}
d_{{{\bm{p}}}_{{g}} {{\bm{p}}}_{{v}}}^2 &= \norm{\Tilde{\bm{p}}_{ V} - \Tilde{\bm{p}}_{ G}   }^2 + \norm{\bm{Q}^{}_{ V}\Tilde{\bm{s}}_{v} - \bm{Q}^{}_{ G}\Tilde{\bm{s}}_{g}   }^2 + \norm{\bm{\xi}_{ V} - \bm{\xi}_{ G}}^2 \\
&+2(\Tilde{\bm{p}}_{ V} -\Tilde{\bm{p}}_{ G} )^{T}
(\bm{\xi}_{ V} - \bm{\xi}_{ G}) \\ 
&+ 2(\Tilde{\bm{p}}_{ V} - \Tilde{\bm{p}}_{ G} )^{T}
(\bm{Q}^{}_{ V}\Tilde{\bm{s}}_{v} - \bm{Q}^{}_{ G}\Tilde{\bm{s}}_{g} ) \\ 
& +2(\bm{Q}^{}_{ V}\Tilde{\bm{s}}_{v} - \bm{Q}^{}_{ G}\Tilde{\bm{s}}_{g} )^{T}
(\bm{\xi}_{ V} - \bm{\xi}_{ G}).
\end{aligned}
\end{equation}
\end{lemma}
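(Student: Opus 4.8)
The plan is to treat this as a direct expansion of a squared Euclidean norm. First I would invoke the definition of the misoriented and misaligned element position in \eqref{equ:entity_element}, which gives ${\bm{p}}_{v} = \Tilde{\bm{p}}_{V} + \bm{\xi}_{V} + \bm{Q}_{V}\Tilde{\bm{s}}_{v}$ for the $v^{\text{th}}$ element on the $V^{\text{th}}$ entity, together with the analogous expression ${\bm{p}}_{g} = \Tilde{\bm{p}}_{G} + \bm{\xi}_{G} + \bm{Q}_{G}\Tilde{\bm{s}}_{g}$ for the $g^{\text{th}}$ element on the $G^{\text{th}}$ entity. Since $d_{{\bm{p}}_{g}{\bm{p}}_{v}}^2 = \norm{{\bm{p}}_{v} - {\bm{p}}_{g}}^2$ by the definition of Euclidean distance, subtracting the two position expressions writes the displacement vector as a sum of three differences.

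Next I would introduce the shorthand $\bm{a} = \Tilde{\bm{p}}_{V} - \Tilde{\bm{p}}_{G}$ (the difference of the nominal centroid positions), $\bm{b} = \bm{Q}_{V}\Tilde{\bm{s}}_{v} - \bm{Q}_{G}\Tilde{\bm{s}}_{g}$ (the difference of the rotated intra-entity offsets), and $\bm{c} = \bm{\xi}_{V} - \bm{\xi}_{G}$ (the difference of the misalignment vectors), so that ${\bm{p}}_{v} - {\bm{p}}_{g} = \bm{a} + \bm{b} + \bm{c}$. I would then apply the scalar-product identity $\norm{\bm{a}+\bm{b}+\bm{c}}^2 = \norm{\bm{a}}^2 + \norm{\bm{b}}^2 + \norm{\bm{c}}^2 + 2\bm{a}^{\mathrm{T}}\bm{b} + 2\bm{a}^{\mathrm{T}}\bm{c} + 2\bm{b}^{\mathrm{T}}\bm{c}$, which is simply the expansion of $(\bm{a}+\bm{b}+\bm{c})^{\mathrm{T}}(\bm{a}+\bm{b}+\bm{c})$.

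Finally I would substitute $\bm{a}$, $\bm{b}$, and $\bm{c}$ back in, matching the three squared-norm terms to the first line of \eqref{equ_lemma:distance_v_w} and the three cross terms to the remaining lines. Because this is purely the algebraic expansion of a squared norm of a sum, I do not expect any genuine obstacle; the only point needing care is the bookkeeping of the cross terms, namely pairing $\bm{a}^{\mathrm{T}}\bm{c}$, $\bm{a}^{\mathrm{T}}\bm{b}$, and $\bm{b}^{\mathrm{T}}\bm{c}$ with the factors in the order written in the statement. Here I would note that $\bm{a}^{\mathrm{T}}\bm{b} = \bm{b}^{\mathrm{T}}\bm{a}$ by symmetry of the inner product, so the particular ordering of the factors in each cross term of the lemma is immaterial and the identification is immediate.
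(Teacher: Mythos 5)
Your proposal is correct and matches the paper's approach: the paper's proof is the one-line remark that the result follows from the definition of the Euclidean norm, and your write-up simply spells out that same expansion of $\norm{\bm{a}+\bm{b}+\bm{c}}^2$ after substituting the element positions from (\ref{equ:entity_element}).
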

\begin{proof}
The proof follows from the definition of the Euclidean norm.
\end{proof} 
\begin{remark}
Lemma \ref{lemma:distance_v_w} indicates clearly that the orientation of the source and destination entities are contained in the distance expression and this affects the derivations of the FIM (See Appendix \ref{Appendix:first_derivative}).
\end{remark}

From Lemma \ref{lemma:distance_v_w},  the distance from elements on an entity to elements on another entity is impacted by the orientation of both entities. Therefore, orientation affects the TOA measured at one entity with respect to another. Hence, if an entity is misoriented,  any location information inferred at any entity using this misoriented entity will have an inherent error. This partly motivates the need to study the impact of RIS misorientation on localization performance.
\color{black}


In the rest of this paper, we assume that there are at least ${M} \geq 2$ paths between the $b^{\text{th}}$ BS antenna and the $u^{\text{th}}$ UE antenna. The path, $m = 0$,  corresponds to the LOS path between the $b^{\text{th}}$ BS antenna and the $u^{\text{th}}$ UE antenna, while the remaining paths are RIS paths provided to the  $u^{\text{th}}$ UE antenna through the RISs. All other paths from natural scatters or natural reflectors are assumed to be much weaker than the LOS and the RIS paths. Hence, these paths are ignored in this paper \footnote{ This is not restrictive, as the non-RIS paths can always be removed by dividing the transmission into two stages. In the first stage, the RISs are turned off, and the signals from the non-RIS paths are collected. Then, the RISs are turned on in the second stage, and the signals from the non-RIS paths and RIS paths are collected. Subsequently, the signals from the RIS paths are isolated by subtracting the signals collected during the first stage from the signals collected during the second stage. After this, the LOS path can be identified as the first arriving path.\color{black}}.\color{black}

\subsection{Transmit Processing}
 We consider the transmission of $T$ Orthogonal frequency division multiplexing (OFDM) symbols, each containing $N$ subcarriers. The BS precodes $N_D$ data streams at each subcarrier using a precoder $\bm{F}[n] \in \mathbb{C}^{N_B \times N_D}$, subsequently the BS transforms the signal to the time domain using $N_B$ $N-$point inverse fast Fourier transforms (IFFT), and adds a cyclic prefix of sufficient length $N_{cp}$ to the transformed  symbols. In the time domain, this cyclic prefix has length $T_{cp} = N_{cp}T_s$ where $T_s = 1/B$ represents the sampling period, and the final transmitted symbol is $\bm{x}^{'}[n] = \bm{F}[n]\bm{x}[n]$
where $\boldsymbol{x}[n]=\left[x_{1}[n], \ldots, x_{N_{D}}[n]\right]^{\mathrm{T}} \in \mathbb{C}^{N_{D}}$.  
The ${d}^{\text{th}}$ column of the beamforming matrix, $\bm{F}$ directs its beam to a specific point defined by  ${\bm{p}_d}$. Hence, $\bm{f}_{{d}} \triangleq\left[e^{-j 2 \pi f_{n}\tau_{{\bm{b}_{1}} {\bm{p}_d}} },e^{-j 2 \pi f_{n}\tau_{\bm{b}_{2} {\bm{p}_d}}}, \cdots, e^{-j 2 \pi f_{n}\tau_{\bm{b}_{N_B} {\bm{p}_d}} } \right]^{\mathrm{H}},$
where  the frequency of the $n^{\text{th}}$ subcarrier is $f_{n}=f_{c}+n \Delta f-B / 2$, the  carrier frequency is $f_{c}$, $\Delta f$ is the subcarrier spacing, $B=N \Delta f$ is the signal bandwidth, and $\tau_{{\bm{b}_{b}} {\bm{p}_d}} = d_{{\bm{b}_{b}} {\bm{p}_d}}/ c$ is the delay from the $b^{\text{th}}$ BS antenna (located at ${\bm{b}_{b}}$) to the predefined point ${\bm{p}_d}$. Here, $c$ is the speed of light. All predefined points $\bm{D} = [\bm{p}_1, \bm{p}_2, \cdots, \bm{p}_{N_{D}} ]$ can be obtained during the initial deployments of BS and RISs. Note to ensure a power constraint,  we set $\operatorname{Tr}\left(\bm{F}^{{H}} \bm{F}\right)=$
1, $\mathbb{E}\left\{\bm{x}[n] \bm{x}^{{H}}[n]\right\}=\bm{I}_{N_{{D}}}$. 
 \subsection{Received Signal}
After the removal of the cyclic prefix and the application of an $N$-point fast Fourier transform (FFT) to the $t^{\text{th}}$ received OFDM symbol, the signal at the $n^{\text{th}}$ subcarrier can be described as
\begin{equation}
\label{equ:receive_processing}
\begin{aligned}
\bm{y}_{t}[n] &=  \bm{H}_{t, {{B}{U}}}^{[0]}[n] \bm{F} \bm{x}[n] \\ &+ \sum_{{m} = 1}^{{M}_1} \bm{H}_{t, {{R}{U}}}^{[{m}]}[n] \bm{\Omega}_{t}^{[{m}]} \bm{H}_{t, {{B}{R}}}^{[{m}]}[n]\bm{F} \bm{x}[n] 
+ \bm{n}_{t}[n] ,\\
&=   \bm{\mu}^{}_{t}[n] +  \bm{n}_{t}[n],
\end{aligned}
\end{equation}
where  $\bm{\mu}^{}_{t}[n]$ and $\bm{n}_{t}[n] \sim \mathcal{C}\mathcal{N}(0,N_0)$ are the noise-free part (useful part) of the signal and the Fourier transformed thermal noise local to the UE's antenna array, respectively. The channel matrix for the LOS BS-UE path during the $t^{\text{th}}$ OFDM symbol is $\bm{H}_{t, {{B}{U}}}^{[0]}[n] \in \mathbb{C}^{N_U \times N_B}$. The channel matrices for the RIS-UE and BS-RIS links corresponding to the $m^{\text{th}}$ RIS are $\bm{H}_{t, {{R}{U}}}^{[{m}]}[n] \in \mathbb{C}^{N_U \times N_R^{[m]}}$ and $\bm{H}_{t, {{B}{R}}}^{[{m}]}[n] \in  \mathbb{C}^{ N_R^{[m]} \times N_B}$, respectively. The reflection coefficients of the $m^{\text{th}}$ RIS during the $t^{\text{th}}$ OFDM symbol can be decomposed into $\bm{\Omega}_{t}^{[{m}]} = \gamma_{t}^{[{m}]} \bm{\Gamma}^{[{m}]}
$ where $\gamma_{t}^{[{m}]}$ is a complex scalar value, $\bm{\Gamma}^{[{m}]} = \text{diag}(e^{j{\vartheta}_1^{[m]}}, e^{j{\vartheta}_2^{[m]}},\cdots,e^{j{\vartheta_{N_R^{[m]}}^{[m]}}} )$ is a diagonal matrix, and ${\vartheta_{r}^{[m]}}$ is the phase of the $r^{\text{th}}$ element of the $m^{\text{th}}$ RIS. 

The complex scalar, $ \gamma_{t}^{[{m}]}$, can be viewed as the fast-varying part of the RIS reflection coefficients because it changes from one OFDM symbol to another. On the other hand, the diagonal matrix, $\bm{\Gamma}^{[{m}]}$, can be viewed as the slow-varying part of the RIS reflection coefficients because it remains constant across all OFDM symbols. These reflection coefficients can be discrete or continuous. However, practically, $\bm{\Omega}_{t}^{[{m}]}$ can be obtained from a codebook such as a discrete Fourier transform (DFT) matrix. Hence, $\bm{\Omega}_{t}^{[{m}]}$ can be selected as a row from a DFT matrix. It is important to note that such a row in this DFT can be factored into a scalar which will correspond to $\gamma_{t}^{[{m}]}$ and a vector part which will correspond to $\Gamma_{}^{[{m}]}$.
\color{black}
\begin{assumption}
\label{assumption:assumption_constant}
Misorientation and misalignment changes at a rate far slower than the $T$ OFDM symbols. Further, the channel is assumed constant during the transmission and reception of $T$ OFDM symbols.
\end{assumption}

To see that the above assumption is not highly restrictive, note that at practical frequency bands, the OFDM symbols have  a duration at most on the order of microseconds, and the location information can practically remain constant on the order of microseconds.
\color{black}
Using Assumption \ref{assumption:assumption_constant}, the time $t$ is dropped from the channel matrices and the useful part of the received signal at the $u^{\text{th}}$ UE antenna is presented in (\ref{equ:receive_processing_2})
\cite{hampton2013introduction,9782100}.
    \begin{figure*}
\begin{align}
    \label{equ:receive_processing_2}
    \begin{split}
&{\mu}_{t, {{u}}}[n] = \\ &\sum_{{{b}} = 1}^{N_{{B}}} \left(\sum_{{d} = 1}^{{N}_{D}} e^{j 2 \pi f_{n}\tau_{ { {\bm{b}_{b}} \bm{p}_{{d}}}} } x_{{d}}[n]  \right)  \times   \left( \beta^{[{0}]} {\rho_{}^{[{0}]}} e^{-j2\pi f_{n}({\epsilon}_{}^{[{0}]} + \tau_{{\bm{b}_{b}} {\bm{u}_{u}}}^{[0]})}   +  \sum_{{m} = 1}^{{M}_1} \gamma_{t}^{[{m}]}\beta^{[{m}]}  e^{-j2\pi f_{n}{\epsilon}_{}^{[{m}]}}  \sum_{{{r}} = 1}^{N_{{R}}^{[{m}]}} {\rho_{\bm{r}_r}^{[{m}]}} e^{j  \vartheta_{{{r}}}^{[{m}]} } e^{-j 2 \pi f_{n}(\tau_{{\bm{b}_{b}} {\bm{r}_{r}}}^{[{m}]} + \tau_{{\bm{r}_{r}} {\bm{u}_{u}}}^{[{m}]}) }
   \right).
\end{split}
\end{align}
 \end{figure*}
In (\ref{equ:receive_processing_2}), $\{\bm{b},\bm{r}, \bm{u} \} \in  \{\bm{p}, \Tilde{\bm{p}} \}$ where $\textbf{p}$ is used for entities that are misoriented and misaligned and $\tilde{\textbf{p}}$ is used for the entities that are at their original locations.  
The parameters $\beta^{[{0}]}$, $\rho_{}^{[{0}]}$, and ${\epsilon}_{}^{[{0}]}$ are the complex channel coefficients, pathloss, and synchronization error of the LOS paths, respectively. The LOS pathloss is $\rho_{}^{[{0}]} = (\lambda/4\pi)(1/{d_{{\bm{b}}_{{B}^{}}{\bm{u}}_{{U}^{}}}^{[0]}})$. The delay  related to the LOS path  between the $b^{\text{th}}$ BS antenna located at ${\bm{b}_{b}}$ and the $u^{\text{th}}$ UE antenna located at  ${\bm{u}_{u}}$ is defined by $\tau_{{\bm{b}_{b}} {\bm{u}_{u}}}^{[0]} = d_{{\bm{b}_{b}} {\bm{u}_{u}}}^{[0]} / c$. For the $m^{\text{th}}$ RIS, the complex channel coefficients and the synchronization error are specified by $\beta^{[m]}$ and ${\epsilon}_{}^{[m]}$, respectively, while the pathloss through the $r^{\text{th}}$ element related to the $m^{\text{th}}$ RIS is specified by  ${\rho_{\bm{r}_r}^{[{m}]}}$ (see Lemma \ref{lemma:pathloss}). The delay between the $b^{\text{th}}$ BS antenna and the $r^{\text{th}}$ element on the $m^{\text{th}}$ RIS is specified as $\tau_{ \bm{b}_{b} {\bm{r}_{r}}}^{[m]} = d_{ \bm{b}_{b} {\bm{r}_{r}}}^{[m]} / c$. Likewise in the $m^{\text{th}}$ RIS-UE link, the delay from the $r^{\text{th}}$ element on the $m^{\text{th}}$ RIS to  the $u^{\text{th}}$ UE antenna is specified as $\tau_{\bm{r}_{r} {\bm{u}_{u}}}^{[m]}  = d_{\bm{r}_{r} \bm{u}_{u}}^{[m]} / c$.

To analyze the information available in the far-field and the near-field, we present the far-field approximation of the useful part of the signal received in the near-field propagation regime.
\color{black}

\begin{proposition}
\label{proposition:far_field_1}
The far-field model (with a different parameterization than the model in \cite{8240645}) of the useful part of the received signal can be obtained from (\ref{equ:receive_processing_2}) and expressed as
\begin{equation}
\label{proposition_equ:far_field_1}
\begin{aligned}
&\bm{\mu}_{t}[n] =  \beta^{[{0}]} {\rho_{}^{[{0}]}} {\epsilon}_{n}^{[{0}]} \bm{a}_{UB}(\Delta_{\bm{b}_{B}\bm{u}_U}) \bm{a}_{BU}^{\mathrm{H}}(\Delta_{\bm{b}_{B}\bm{u}_U})  e^{-j 2 \pi f_{n}
\tau_{\bm{b}_{B}\bm{u}_U}^{[0]}} \\ &+ 
  \sum_{{m} = 1}^{{M}_1} \gamma_{t}^{[{m}]}\beta^{[{m}]}{\rho_{\bm{r}_R}^{[{m}]}} {\epsilon}_{n}^{[{m}]}  \bm{a}_{UR}(\Delta_{\bm{r}_{R}\bm{u}_U}^{[m]}) \bm{a}_{RU}^{\mathrm{H}}(\Delta_{\bm{r}_{R}\bm{u}_U}^{[m]})  \bm{\Gamma}^{[m]} \\& \bm{a}_{RB}(\Delta_{\bm{b}_{B}\bm{r}_{R}}^{[m]}) \bm{a}_{BR}^{\mathrm{H}}(\Delta_{\bm{b}_{B}\bm{r}_R}^{[m]}) \bm{F}[n] \bm{x}[n]  e^{-j 2 \pi f_{n} (\tau_{{\bm{b}_{B}} \bm{r}_{R}}^{[m]}   + \tau_{\bm{r}_{R} {\bm{u}_{U}}}^{[m]})},
\end{aligned}
\end{equation}
where ${\epsilon}_{n}^{[{m}]} = e^{-j2\pi f_{n}{\epsilon}_{}^{[{m}]}}$. 
Proposition \ref{proposition:far_field_1} is used in Appendix \ref{Appendix:farfield_channel_parameter_estimatable} to analyze the relationship between certain RIS-related channel parameters under the far-field propagation regime. 
\end{proposition}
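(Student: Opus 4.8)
The plan is to derive the far-field expression (\ref{proposition_equ:far_field_1}) from the exact near-field model (\ref{equ:receive_processing_2}) by applying the standard plane-wave (Fraunhofer) approximation to every inter-element distance and then recognizing the resulting element-dependent phase factors as entries of array response vectors. First I would fix a generic link between a source centroid and a destination centroid and expand each inter-element distance to first order. Writing the position of a source element relative to its centroid as $\bm{s}$ and a destination element relative to its centroid as $\bm{s}'$, and using Lemma \ref{lemma:distance_v_w} together with the unit direction vector $\bm{\Delta}$ defined before the lemma, the binomial expansion of the Euclidean norm gives
$$d_{\text{elem-to-elem}} \approx d_{\text{centroid-to-centroid}} + \bm{\Delta}^{\mathrm{T}}(\bm{s}' - \bm{s}),$$
where the discarded contributions are of order $\norm{\bm{s}' - \bm{s}}^2 / d_{\text{centroid-to-centroid}}$; this is precisely the quantity that must be negligible for the far-field regime to apply. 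Dividing by $c$ converts this into the corresponding delay.

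Second, I would substitute this delay into each exponential in (\ref{equ:receive_processing_2}) and factor it into a centroid-to-centroid delay $e^{-j2\pi f_n \tau_{\text{centroid}}}$ multiplied by purely element-dependent phase factors. The factor attached to each destination element aggregates into a receive array response vector, while the factor attached to each source element aggregates into a transmit array response vector; concretely, $[\bm{a}_{\cdot}]_v = e^{-j 2\pi f_n \bm{\Delta}^{\mathrm{T}}\bm{s}_v/c}$. Because the source-side displacement enters the distance with opposite sign to the destination-side displacement, the transmit factors appear as the complex conjugate of this form, which is what produces the Hermitian steering vectors $\bm{a}_{BU}^{\mathrm{H}}$ and $\bm{a}_{BR}^{\mathrm{H}}$. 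In the same step I would replace the per-element RIS pathloss $\rho_{\bm{r}_r}^{[m]}$ by the centroid value $\rho_{\bm{r}_R}^{[m]}$, which is consistent to the same order of approximation since every RIS element is nearly equidistant from the far source and destination.

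Third, I would do the bookkeeping for each path separately. For the LOS path there is a single source array (BS) and a single destination array (UE), so the double sum over $b$ and $u$ factors into the rank-one outer product $\bm{a}_{UB}(\bm{\Delta}_{\bm{b}_B\bm{u}_U})\,\bm{a}_{BU}^{\mathrm{H}}(\bm{\Delta}_{\bm{b}_B\bm{u}_U})$ acting on the precoded symbol, with the centroid delay $\tau_{\bm{b}_B\bm{u}_U}^{[0]}$ and the gains $\beta^{[0]}\rho^{[0]}\epsilon_n^{[0]}$ pulled out front. For each RIS path the signal traverses two links, so four steering vectors appear; the RIS phase profile $\bm{\Gamma}^{[m]}$ sits between the two RIS-side vectors, yielding the cascade $\bm{a}_{UR}\,\bm{a}_{RU}^{\mathrm{H}}\,\bm{\Gamma}^{[m]}\,\bm{a}_{RB}\,\bm{a}_{BR}^{\mathrm{H}}$. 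Summing over $m$ and reattaching the centroid delays, the synchronization terms $\epsilon_n^{[m]}$, and the gains $\gamma_t^{[m]}\beta^{[m]}\rho_{\bm{r}_R}^{[m]}$ then recovers (\ref{proposition_equ:far_field_1}) exactly.

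The main obstacle will be the bookkeeping in the third step rather than any single hard estimate: one must track the signs of the direction vectors and the placement of the conjugations so that each Hermitian transpose lands on the correct steering vector and the diagonal RIS phase matrix is correctly sandwiched between the two RIS-side responses. A secondary point worth stating explicitly is the validity condition for dropping the quadratic term in the distance expansion, since that threshold is exactly the boundary separating the far-field model derived here from the near-field model of (\ref{equ:receive_processing_2}).
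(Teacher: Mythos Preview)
Your proposal is correct and follows essentially the same approach as the paper: apply the first-order (Fraunhofer) expansion $d_{\bm{p}_g\bm{p}_v}\approx d_{\bm{p}_G\bm{p}_V}+\bm{\Delta}_{\bm{p}_G\bm{p}_V}^{\mathrm{T}}(\bm{s}_v-\bm{s}_g)$ from Appendix~\ref{appendix_distance_approximation:far_field_1}, factor the resulting exponentials, and identify the element-dependent phase factors with the array response vectors $\bm{a}_{GV}(\bm{\Delta}_{\bm{g}_G\bm{v}_V})=e^{-j2\pi\bm{\Delta}_{\bm{g}_G\bm{v}_V}^{\mathrm{T}}\bm{S}_g/\lambda_n}$ and $\bm{a}_{VG}(\bm{\Delta}_{\bm{g}_G\bm{v}_V})=e^{-j2\pi\bm{\Delta}_{\bm{g}_G\bm{v}_V}^{\mathrm{T}}\bm{S}_v/\lambda_n}$. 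Your discussion of the sign/conjugation bookkeeping and of replacing the per-element pathloss $\rho_{\bm{r}_r}^{[m]}$ by the centroid value $\rho_{\bm{r}_R}^{[m]}$ is more explicit than the paper's terse proof but entirely consistent with it.
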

\begin{proof}
After applying the far-field distance approximation of Appendix \ref{appendix_distance_approximation:far_field_1}, the proof follows by  defining the array responses between the $G^{\text{th}}$ entity and the $V^{\text{th}}$ entity as $\bm{a}_{GV}(\Delta_{\bm{g}_{G}\bm{v}_V}) = e^{-j 2 \frac{\pi}{\lambda_n}  \Delta_{\bm{g}_{G}\bm{v}_V}^{\mathrm{T}} {\bm{S}}_{g}} $, $\bm{a}_{VG}(\Delta_{\bm{g}_{G}\bm{v}_V}) = e^{-j 2 \frac{\pi}{\lambda_n}  \Delta_{\bm{g}_{G}\bm{v}_V}^{\mathrm{T}} {\bm{S}}_{v}} $. 
\end{proof}

Proposition 1 and (4) can be combined appropriately to present the case where some RISs provide the effects of near-field propagation to the UE, while others provide the effects of far-field propagation to the UE.
\color{black}

\section{Available Information in the Received Signal}
This section serves as an intermediary to analyze UE localizability. Here, we present bounds on the information available  from the received signals. \subsection{ Error Bounds on Parameters}
The analysis in this work is based both on the received signal specified by (\ref{equ:receive_processing}) and some of the parameters present in this signal. The state of these parameters is an indicator of the performance of both communication and localization systems; hence, in this section, we more clearly highlight these parameters.  The vector of LOS parameters is
\begin{equation}
\begin{aligned}
\begin{array}{llll}
\bm{\eta}^{[0]} &\triangleq\left[{\theta}_{\bm{b}_{B}\bm{u}_{U}}^{{[0]}}, {\phi}_{\bm{b}_{B}\bm{u}_{U}}^{{[0]}},
{\tau}_{\bm{b}_{B}\bm{u}_{U}}^{{[0]}}, \bm{\Phi}^{\mathrm{T}}_{U},{\epsilon}^{{[0]}},
  {\beta}_{\mathrm{R}}^{{[0]}}, {\beta}_{\mathrm{I}}^{{[0]}}  \right]^{\mathrm{T}} \\ &= \left[ \bm{\eta}_{1}^{[0]\mathrm{T}},   \bm{\eta}_{2}^{[0]\mathrm{T}}\right]^{\mathrm{T}},
\end{array}
\end{aligned}
\end{equation}
such that $\bm{\eta}_1^{[0]} = \left[{\theta}_{\bm{b}_{B}\bm{u}_{U}}^{{[0]}}, {\phi}_{\bm{b}_{B}\bm{u}_{U}}^{{[0]}},
{\tau}_{\bm{b}_{B}\bm{u}_{U}}^{{[0]}},\bm{\Phi}^{\mathrm{T}}_{U}  \right]^{\mathrm{T}}$ represents the LOS related geometric parameters and $\bm{\eta}_2^{[0]} = [  {\epsilon}^{{[0]}},
  {\beta}_{\mathrm{R}}^{{[0]}}, {\beta}_{\mathrm{I}}^{{[0]}}]^{\mathrm{T}}$ represents the LOS related nuisance parameters. The RIS related channel parameters are presented in (\ref{equ:ris_parameters}).
  
    \begin{figure*}
\begin{align}
    \label{equ:ris_parameters}
        \begin{split}
&\bm{\theta}_{{\bm{r}_{R}\bm{u}_{U}}} \triangleq\left[{\theta}_{{\bm{r}_{R}\bm{u}_{U}}}^{[{1}]},  \cdots, {\theta}_{{\bm{r}_{R}\bm{u}_{U}}}^{[{M}_1]}\right]^{\mathrm{T}}, 
\bm{\phi}_{{\bm{r}_{R}\bm{u}_{U}}} \triangleq\left[{\phi}_{{\bm{r}_{R}\bm{u}_{U}}}^{[{1}]},  \cdots, {\phi}_{{\bm{r}_{R}\bm{u}_{U}}}^{[{M}_1]}\right]^{\mathrm{T}}, 
\bm{\theta}_{{\bm{b}_{B}\bm{r}_{R}}} \triangleq\left[{\theta}_{{\bm{b}_{B}\bm{r}_{R}}}^{[{1}]},  \cdots, {\theta}_{{\bm{b}_{B}\bm{r}_{R}}}^{[{M}_1]}\right]^{\mathrm{T}}, \\
&\bm{\phi}_{{\bm{b}_{B}\bm{r}_{R}}} \triangleq\left[{\phi}_{{\bm{b}_{B}\bm{r}_{R}}}^{[{1}]},  \cdots, {\phi}_{{\bm{b}_{B}\bm{r}_{R}}}^{[{M}_1]}\right]^{\mathrm{T}}, 
\bm{\tau}_{{\bm{r}_{R}\bm{u}_{U}}} \triangleq\left[{\tau}_{{\bm{r}_{R}\bm{u}_{U}}}^{[{1}]},  \cdots, {\tau}_{{\bm{r}_{R}\bm{u}_{U}}}^{[{M}_1]}\right]^{\mathrm{T}}, 
\bm{\tau}_{{\bm{b}_{B}\bm{r}_{R}}} \triangleq\left[{\tau}_{{\bm{b}_{B}\bm{r}_{R}}}^{[{1}]},  \cdots, {\tau}_{{\bm{b}_{B}\bm{r}_{R}}}^{[{M}_1]}\right]^{\mathrm{T}}, \\
&\bm{\beta}  \triangleq\left[\beta^{[1]},  \ldots, \beta^{[{M}_1]}\right]^{\mathrm{T}}, 
\bm{\epsilon}  \triangleq\left[\epsilon^{[1]},  \ldots, \epsilon^{[{M}_1]}\right]^{\mathrm{T}}, 
\bm{\Phi}_{R}  \triangleq\left[\bm{\Phi}_{R}^{[1]\mathrm{T}},  \ldots, \bm{\Phi}_{R}^{[{M}_1]\mathrm{T}}\right]^{\mathrm{T}}.
    \end{split}
\end{align}
    \end{figure*}
These unknown channel parameters related to the RIS presented in (\ref{equ:ris_parameters}) can be represented by the vector in (\ref{equ:channel_params}).
\begin{figure*}
\begin{align}
\label{equ:channel_params}
\begin{array}{llll}
\bm{\eta}_{} \triangleq\left[\bm{\eta}^{[0]\mathrm{T}},  \bm{\theta}_{{\bm{r}_{R}\bm{u}_{U}}}^{\mathrm{T}}, \bm{\phi}_{{\bm{r}_{R}\bm{u}_{U}}}^{\mathrm{T}},
\bm{\theta}_{{\bm{b}_{B}\bm{r}_{R}}}^{\mathrm{T}}, \bm{\phi}_{{\bm{b}_{B}\bm{r}_{R}}}^{\mathrm{T}},
\bm{\tau}_{{\bm{r}_{R}\bm{u}_{U}}}^{\mathrm{T}},\bm{\tau}_{{\bm{b}_{B}\bm{r}_{R}}}^{\mathrm{T}}, \bm{\Phi}^{\mathrm{T}}_{R}, \bm{\epsilon}^{\mathrm{T}},
  \bm{\beta}_{\mathrm{R}}^{\mathrm{T}}, \bm{\beta}_{\mathrm{I}}^{\mathrm{T}}  \right]^{\mathrm{T}}.
\end{array}
\end{align}
\end{figure*}
In (\ref{equ:channel_params}), $\bm{\beta}_{\mathrm{R}} \triangleq \Re\{\bm{\beta}\}$, and $\bm{\beta}_{\mathrm{I}} \triangleq \Im\{\bm{\beta}\}$ are the real and imaginary parts of $\bm{\beta}$, respectively.  Now, we define the geometric channel parameters $ \bm{\eta}_{1} \triangleq\left[\bm{\eta}^{[0]\mathrm{T}}_{1},\bm{\theta}_{{\bm{r}_{R}\bm{u}_{U}}}^{\mathrm{T}}, \bm{\phi}_{{\bm{r}_{R}\bm{u}_{U}}}^{\mathrm{T}},
\bm{\theta}_{{\bm{b}_{B}\bm{r}_{R}}}^{\mathrm{T}}, \bm{\phi}_{{\bm{b}_{B}\bm{r}_{R}}}^{\mathrm{T}},
\bm{\tau}_{{\bm{r}_{R}\bm{u}_{U}}}^{\mathrm{T}},\bm{\tau}_{{\bm{b}_{B}\bm{r}_{R}}}^{\mathrm{T}},\bm{\Phi}_{R}^{\mathrm{T}}\right]^{\mathrm{T}}$
and  the nuisance parameters as $ \bm{\eta}_{2} \triangleq\left[\bm{\eta}^{[0]\mathrm{T}}_{2},\bm{\epsilon}^{\mathrm{T}},
  \bm{\beta}_{\mathrm{R}}^{\mathrm{T}}, \bm{\beta}_{\mathrm{I}}^{\mathrm{T}}
\right]^{\mathrm{T}}
$, hence, $ \bm{\eta}_{} \triangleq\left[\bm{\eta}_{1}^{\mathrm{T}}, \bm{\eta}_{2}^{\mathrm{T}}
\right]^{\mathrm{T}}$. An alternative representation according to the parameters associated with the $m^{\text{th}}$ RIS path is (\ref{equ:channel_params_1}).
\begin{figure*}
\begin{align}
\label{equ:channel_params_1}
\begin{array}{llll}
\bm{\eta}^{[m]} \triangleq\left[{\theta}_{{\bm{r}_{R}\bm{u}_{U}}}^{[m]}, {\phi}_{{\bm{r}_{R}\bm{u}_{U}}}^{[m]},
{\theta}_{{\bm{b}_{B}\bm{r}_{R}}}^{[m]}, {\phi}_{{\bm{b}_{B}\bm{r}_{R}}}^{[m]},
{\tau}_{{\bm{r}_{R}\bm{u}_{U}}}^{[m]},{\tau}_{{\bm{b}_{B}\bm{r}_{R}}}^{[m]},\bm{\Phi}_{R}^{[m]\mathrm{T}}, {\epsilon}^{[m]},
  {\beta}_{\mathrm{R}}^{[m]}, {\beta}_{\mathrm{I}}^{[m]}  \right]^{\mathrm{T}}.
\end{array}
\end{align}
\end{figure*}
Next, $\bm{\eta}$ is defined as $\bm{\eta} \triangleq\left[\bm{\eta}^{[0]\mathrm{T}},\bm{\eta}^{[1]\mathrm{T}}, \bm{\eta}^{[2]\mathrm{T}}, \cdots, \bm{\eta}^{[M_1]\mathrm{T}}\right]^{\mathrm{T}},$
where the parameters associated with the $m^{\text{th}}$ RIS path can also be divided into the   geometric channel parameters,  $ \bm{\eta}_{1}^{[m]}$, and  the nuisance parameters, $ \bm{\eta}_{2}^{[m]}$, such that $ \bm{\eta}^{[m]} \triangleq\left[\bm{\eta}_{1}^{{[m]}\mathrm{T}}, \bm{\eta}_{2}^{{[m]}\mathrm{T}}
\right]^{\mathrm{T}}$. 
\begin{remark}
The parameterization of channel parameters discussed in this section has implicitly assumed that the BS is perfectly located i.e., $\bm{p}_b = \Tilde{\bm{p}}_b, \forall b$. A perfect knowledge of the location of all RISs enables the exclusion of  $\{\bm{\theta}_{{\bm{b}_{B}\bm{r}_{R}}}^{}, \bm{\phi}_{{\bm{b}_{B}\bm{r}_{R}}}^{},
\bm{\tau}_{{\bm{b}_{B}\bm{r}_{R}}}^{},\bm{\Phi}^{}_{R} \}$ from the parameter vector. 
Henceforth, the BS is assumed perfectly located, and it is considered to serve as the global origin.  
\end{remark}

\subsection{Path Loss}
Before proceeding, we show that the orientation offset of a misoriented RIS is present in the pathloss; this affects the derivations of the error bounds for the channel parameters. The power received at the $u^{\text{th}}$ UE receive antenna  through the $r^{\text{th}}$ element of the $m^{\text{th}}$ RIS from the $b^{\text{th}}$ transmit antenna is expressed as \cite{ellingson2021path}
\begin{equation}
\label{equ:receive_power}
\begin{aligned}
P_{o,bru}^{[m]} &= P_{o,b} G_{br}^{}\left(\bm{a}_{{\bm{p}_b}{\bm{p}_r}}^{{[m]}}\right) G_{ru}^{}\left(\bm{a}_{{\bm{p}_r}{\bm{p}_u}}^{{[m]}}\right) \\ &\times  \frac{G_{rb}^{}\left(\bm{a}_{{\bm{p}_r}{\bm{p}_b}}^{{[m]}}\right) G_{ur}^{}\left(\bm{a}_{{\bm{p}_u}{\bm{p}_r}}^{{[m]}}\right)}{(d^{{[m]}}_{{\bm{p}_b}{\bm{p}_r}})^2 (d^{{[m]}}_{{\bm{p}_r}{\bm{p}_u}})^2  } \bigg(\frac{\lambda}{4\pi}\bigg)^4 \epsilon_{p}^2,
\end{aligned}
\end{equation}
where $P_{o,b}$ is the transmit power at the $b^{\text{th}}$ antenna, $\bm{a}_{{\bm{g}_g}{\bm{v}_v}}^{}$ is the unit direction vector from the  $g^{\text{th}}$ element  on the $G^{\text{{th}}}$ entity  to the  $v^{\text{th}}$ element on the $V^{\text{{th}}}$ entity, $G_{gv}\left(\bm{a}_{{\bm{g}_g}{\bm{v}_v}}^{}\right)$ is the gain in the direction specified by the unit vector, and $\epsilon_{p}$ is the efficiency for practical antennas.

\begin{proposition}
\label{proposition:direction}
The direction vector from the $r^\text{th}$ element on the $m^\text{th}$ RIS which is  misoriented  and misaligned  to the $b^\text{th}$ antenna on a misoriented  and misaligned BS is defined by  $\bm{a}_{{\bm{p}}_{{r}^{}}{\bm{p}}_{{b}^{}}^{}}^{[m]} =  ({\bm{p}}_{{b}^{}}^{}  - {\bm{p}}_{{r}^{}}^{[m]} ) / d_{{\bm{p}}_{{r}^{}}{\bm{p}}_{{b}^{}}^{}}^{[m]}$. Also, the direction vector from the $r^\text{th}$ element on the $m^\text{th}$ RIS which is  misoriented  and misaligned  to the $u^\text{th}$ antenna on a  misoriented  and misaligned UE is defined by  $\bm{a}_{{\bm{p}}_{{r}^{}}{\bm{p}}_{{u}^{}}^{}}^{[m]} =  ({\bm{p}}_{{u}^{}}^{}  - {\bm{p}}_{{r}^{}}^{[m]} ) / d_{{\bm{p}}_{{r}^{}}{\bm{p}}_{{u}^{}}^{}}^{[m]}$. The corresponding gains in these directions can be expressed as $ G_{rb}^{}\left(\bm{a}_{{\bm{p}}_{{r}^{}}{\bm{p}}_{{b}^{}}}^{[m]}\right) = \pi\left(\bm{a}_{{\bm{p}}_{{r}^{}}{\bm{p}}_{{b}^{}}^{}}^{[m]{\mathrm{T}}} \bm{a}_{{\bm{p}}_{{R}^{}}}^{[m]}\right)^{2 q_{0}} $ and $ G_{ru}^{}\left(\bm{a}_{{\bm{p}}_{{r}^{}}{\bm{p}}_{{u}^{}}^{}}^{[m]}\right) = \pi\left(\bm{a}_{{\bm{p}}_{{r}^{}}{\bm{p}}_{{u}^{}}^{}}^{[m]{\mathrm{T}}} \bm{a}_{{\bm{p}}_{{R}^{}}}^{[m]}\right)^{2 q_{0}} $, respectively, where $q_{0}$ is a constant used to vary the gain of the RIS and $\bm{a}_{{\bm{p}}_{{R}^{}}}^{[m]}$ is the unit vector pointing out of the misoriented  and misaligned  RIS. 
\end{proposition}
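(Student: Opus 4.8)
The plan is to treat the two direction-vector expressions as definitions and to concentrate the argument on the gain formulas, which I would obtain by instantiating the electrically-small, conducting-screen element model of \cite{ellingson2021path} for each RIS element. The direction vector from the $r^{\text{th}}$ element to the $b^{\text{th}}$ BS antenna is, by construction, the displacement ${\bm{p}}_{{b}^{}}^{} - {\bm{p}}_{{r}^{}}^{[m]}$ normalized by its length; since $d_{{\bm{p}}_{{r}^{}}{\bm{p}}_{{b}^{}}^{}}^{[m]} = \norm{{\bm{p}}_{{b}^{}}^{} - {\bm{p}}_{{r}^{}}^{[m]}}$, the quotient $\bm{a}_{{\bm{p}}_{{r}^{}}{\bm{p}}_{{b}^{}}^{}}^{[m]} = ({\bm{p}}_{{b}^{}}^{} - {\bm{p}}_{{r}^{}}^{[m]})/d_{{\bm{p}}_{{r}^{}}{\bm{p}}_{{b}^{}}^{}}^{[m]}$ is a unit vector, and the same reasoning delivers $\bm{a}_{{\bm{p}}_{{r}^{}}{\bm{p}}_{{u}^{}}^{}}^{[m]}$. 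No computation is required here beyond recognizing these as unit displacement vectors; the real content lies in the two gain expressions.

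For the gain, I would first recall that an electrically-small element backed by a conducting screen radiates with a cosine-power pattern about its boresight, so that its gain in a direction making angle $\theta$ with the element's outward normal is $G(\theta) = G_0\cos^{2q_0}\theta$ for $\theta \in [0,\pi/2]$ and negligible otherwise, where $q_0$ is the free shaping constant used to vary the directivity. The peak gain $G_0$ is then pinned down by the half-wavelength spacing: each element presents an effective aperture $A_e = (\lambda/2)^2$ at boresight, so the aperture--gain relation $G_0 = 4\pi A_e/\lambda^2$ evaluates to $G_0 = \pi$, which supplies the multiplicative constant appearing in the claim and is consistent with the statement that the RIS area equals the sum of the element effective apertures.

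It remains to convert the angular argument into the inner product in the statement. The element's boresight coincides with the RIS outward normal $\bm{a}_{{\bm{p}}_{{R}^{}}}^{[m]}$, and the direction from the element toward the source is the unit vector $\bm{a}_{{\bm{p}}_{{r}^{}}{\bm{p}}_{{b}^{}}^{}}^{[m]}$; because both are unit vectors, $\cos\theta$ equals their Euclidean inner product $\bm{a}_{{\bm{p}}_{{r}^{}}{\bm{p}}_{{b}^{}}^{}}^{[m]{\mathrm{T}}}\bm{a}_{{\bm{p}}_{{R}^{}}}^{[m]}$, so that $G_{rb} = \pi(\bm{a}_{{\bm{p}}_{{r}^{}}{\bm{p}}_{{b}^{}}^{}}^{[m]{\mathrm{T}}}\bm{a}_{{\bm{p}}_{{R}^{}}}^{[m]})^{2q_0}$, and the identical step with ${\bm{p}}_u$ in place of ${\bm{p}}_b$ yields $G_{ru}$. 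The main obstacle I anticipate is orientation bookkeeping rather than any hard estimate: one must verify that $\bm{a}_{{\bm{p}}_{{R}^{}}}^{[m]}$ is the image of the RIS's nominal normal under the rotation $\bm{Q}_{R}^{[m]}$, so that the orientation offset $\bm{\Phi}_R^{[m]}$ enters $\cos\theta$ precisely through this inner product and hence, via Lemma~\ref{lemma:pathloss}, through the pathloss ${\rho_{\bm{r}_r}^{[{m}]}}$. Confirming that the offset appears in the gain \emph{only} through this inner product is the crux of the argument; the remaining manipulations are routine.
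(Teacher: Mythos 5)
Your proposal is correct and follows essentially the same route as the paper: the paper's proof likewise invokes the conducting-screen, electrically-small element model so that the gain is governed by the direction cosine toward the BS or UE, together with the half-wavelength spacing making each element's effective aperture $(\lambda/2)^2$, which is exactly your $G_0 = 4\pi A_e/\lambda^2 = \pi$ step made explicit. The orientation bookkeeping you flag at the end ($\bm{a}_{{\bm{p}}_{{R}^{}}}^{[m]} = \bm{Q}_{R}^{[m]}\bm{a}_{\Tilde{\bm{p}}_{{R}^{}}}^{[m]}$) is handled by the paper in a separate proposition immediately following this one, so it is not needed here.
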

\begin{proof}
Since, the RIS elements are devised as a conducting ground screen with electrically-small low-gain elements, the gain is almost entirely specified by the direction cosine of the vector pointing out of the $r^\text{th}$ element on the $m^\text{th}$ RIS and towards the BS or UE\cite{stutzman2012antenna}. Hence, the proof follows under the condition that the RISs have
elements spaced by half a wavelength resulting in the RIS's area being equal to the sum of the effective apertures of all RIS elements, i.e., square of half the wavelength.
\end{proof}

The term $q_0$ defines a gain intrinsically related to the physical design of the antennas or RIS elements \cite{stutzman2012antenna,ellingson2021path}. The investigation of this parameter depends on the electromagnetic properties connected to the antenna design, which is out of the scope of this work.
\color{black}

\begin{proposition}
\label{proposition:unit_direction}
 The unit vector pointing out of the misoriented  and misaligned  RIS is $\bm{a}_{{\bm{p}}_{{R}^{}}}^{[m]} =  \bm{Q}_{R}^{[m]} \bm{a}_{\Tilde{\bm{p}}_{{R}^{}}}^{[m]}$, where $\bm{a}_{\Tilde{\bm{p}}_{{R}^{}}}^{[m]}$ is the unit vector of a perfectly located RIS.
\end{proposition}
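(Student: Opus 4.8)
The plan is to treat the outward unit normal $\bm{a}_{\Tilde{\bm{p}}_{{R}^{}}}^{[m]}$ as a rigid geometric feature of the RIS's planar element layout and simply to track how it transforms under the misorientation/misalignment map of~(\ref{equ:entity_element}). First I would observe that, by~(\ref{equ:entity_element}), each element offset from the centroid transforms as $\Tilde{\bm{s}}_{v}\mapsto\bm{s}_{v}=\bm{Q}_{R}^{[m]}\Tilde{\bm{s}}_{v}$, whereas the translational terms $\Tilde{\bm{p}}_{R}^{}$ and $\bm{\xi}_{R}^{}$ merely shift the whole array rigidly and hence leave every \emph{direction} internal to the array unchanged. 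Consequently the plane spanned by the RIS elements is rotated by $\bm{Q}_{R}^{[m]}$ but not otherwise deformed, so it suffices to show that the unit normal of that plane rotates covariantly with $\bm{Q}_{R}^{[m]}$.

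Next I would make the normal concrete. Choosing two non-collinear element offsets $\Tilde{\bm{s}}_{i},\Tilde{\bm{s}}_{j}$ lying in the RIS plane, the normal of the perfectly located RIS is proportional to $\Tilde{\bm{s}}_{i}\times\Tilde{\bm{s}}_{j}$, and after misorientation the corresponding in-plane vectors become $\bm{Q}_{R}^{[m]}\Tilde{\bm{s}}_{i}$ and $\bm{Q}_{R}^{[m]}\Tilde{\bm{s}}_{j}$. The key algebraic step is the cross-product identity for an orthogonal matrix $\bm{Q}$,
\begin{equation*}
(\bm{Q}\bm{a})\times(\bm{Q}\bm{b}) = \det(\bm{Q})\,\bm{Q}\,(\bm{a}\times\bm{b}),
\end{equation*}
applied with $\bm{Q}=\bm{Q}_{R}^{[m]}$. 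Since $\bm{Q}_{R}^{[m]}=\bm{Q}(\alpha_{R},\psi_{R},\varphi_{R})$ is a proper $3$D rotation, we have $\det(\bm{Q}_{R}^{[m]})=1$, so the misoriented normal equals exactly $\bm{Q}_{R}^{[m]}$ applied to the original normal. Finally, because $\bm{Q}_{R}^{[m]}$ is orthogonal it preserves the Euclidean norm, so normalizing both sides yields $\bm{a}_{{\bm{p}}_{{R}^{}}}^{[m]}=\bm{Q}_{R}^{[m]}\bm{a}_{\Tilde{\bm{p}}_{{R}^{}}}^{[m]}$, as claimed.

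The argument is elementary, so there is no deep obstacle; the only points demanding care are (i) confirming that the misalignment vector $\bm{\xi}_{R}^{}$, being a pure translation, genuinely drops out of any directional quantity, and (ii) using that $\bm{Q}_{R}^{[m]}$ is a \emph{proper} rotation (determinant $+1$) and not a general orthogonal transformation—otherwise the normal could acquire a sign flip under an improper reflection and the clean relation $\bm{a}_{{\bm{p}}_{{R}^{}}}^{[m]}=\bm{Q}_{R}^{[m]}\bm{a}_{\Tilde{\bm{p}}_{{R}^{}}}^{[m]}$ would fail. Both follow immediately from the definitions in~(\ref{equ:entity_element}).
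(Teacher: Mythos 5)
Your proof is correct and follows the same (essentially unique) line of reasoning as the paper, which simply asserts that the result ``follows from the definition of misorientation and misalignment'' without elaboration. You have merely supplied the details the authors omit --- the invariance of directions under the translations $\Tilde{\bm{p}}_{R}$ and $\bm{\xi}_{R}$, the cross-product identity for proper rotations, and norm preservation --- all of which are sound.
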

\begin{proof}
The proof follows from the definition of misorientation and misalignment.
\end{proof}

\begin{assumption}
\label{assumption:assumption_RIS_normal_pathloss_modeling}
Each of the transmit and receive antennas have isotropic gains such that $G_{br}^{}\left(\bm{a}_{{\bm{p}_b}{\bm{p}_r}}^{{[m]}}\right) = G_{B}^{}$ and $ G_{ru}^{}\left(\bm{a}_{{\bm{p}_r}{\bm{p}_u}}^{{[m]}}\right)  = G_{U}^{}$.
\end{assumption}

\begin{proposition}
\label{proposition:gain}
With Assumption \ref{assumption:assumption_RIS_normal_pathloss_modeling}, if the $m^{\text{th}}$ RIS is misoriented and misaligned,  and it reflects signals to a UE which is misoriented and misaligned, the corresponding gain $G_{ru}^{}\left(\bm{a}_{{\bm{p}}_{{r}^{}}{\bm{p}}_{{u}^{}}^{}}^{[m]}\right)$ is $$\begin{aligned}
&G_{ru}^{}\left(\bm{a}_{{\bm{p}}_{{r}^{}}{\bm{p}}_{{u}^{}}^{}}^{[m]}\right) = \pi / \left(d_{{\bm{p}}_{{r}^{}}{\bm{p}}_{{u}^{}}^{}}^{[m]}\right)^{2q_0}  \\ & \times \bigg[ \left( d_{{\bm{p}}_{{R}^{}}{  \bm{p}}_{{U}^{}}}^{[m]}{\Delta_{{\bm{p}}_{{R}^{}}{\bm{p}}_{{U}^{}}}^{[m]}} + \bm{Q}_{U} \Tilde{\bm{s}}_{u} - \bm{Q}_{R}^{[m]} \Tilde{\bm{s}}_{r}^{[m]} \right)^{\mathrm{T}} \bm{Q}_{R}^{[m]} \bm{a}_{\Tilde{\bm{p}}_{{R}^{}}}^{[m]} \bigg]^{2q_0}.\end{aligned} $$ Similarly,  the gain from the $r^\text{th}$ element on the  $m^{\text{th}}$ RIS which is misoriented and misaligned in the direction  of the $b^\text{th}$ antenna on a misoriented  and misaligned BS is  $$ \begin{aligned} & G_{rb}^{}\left(\bm{a}_{{\bm{p}}_{{r}^{}}{\bm{p}}_{{b}^{}}^{}}^{[m]}\right) = \pi / \left(d_{{\bm{p}}_{{r}^{}}{\bm{p}}_{{b}^{}}^{}}^{[m]}\right)^{2q_0} \\ & \times  \bigg[ \left( d_{{\bm{p}}_{{R}^{}}{  \bm{p}}_{{B}^{}}}^{[m]}{\Delta_{{\bm{p}}_{{R}^{}}{\bm{p}}_{{B}^{}}}^{[m]}} + \bm{Q}_{B} \Tilde{\bm{s}}_{b} - \bm{Q}_{R}^{[m]} \Tilde{\bm{s}}_{r}^{[m]} \right)^{\mathrm{T}} \bm{Q}_{R}^{[m]} \bm{a}_{\Tilde{\bm{p}}_{{R}^{}}}^{[m]} \bigg]^{2q_0}.\end{aligned}$$
\begin{proof}
The proof follows from the definitions of misorientation and misalignment and both Propositions \ref{proposition:direction} and \ref{proposition:unit_direction}.
\end{proof}
\end{proposition}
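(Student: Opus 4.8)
The plan is to derive both gain identities by direct substitution, starting from the compact forms supplied by Proposition~\ref{proposition:direction} and resolving the element positions into centroid-plus-rotation components through~(\ref{equ:entity_element}). I would carry out the RIS--UE case in full, since the RIS--BS identity then follows verbatim under the relabeling $u \mapsto b$, $U \mapsto B$.

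First I would write the direction vector from the $r^{\text{th}}$ RIS element to the $u^{\text{th}}$ UE antenna in the form given in Proposition~\ref{proposition:direction}, namely $\bm{a}_{{\bm{p}}_{{r}}{\bm{p}}_{{u}}}^{[m]} = ({\bm{p}}_{{u}} - {\bm{p}}_{{r}}^{[m]})/d_{{\bm{p}}_{{r}}{\bm{p}}_{{u}}}^{[m]}$. The central step is to expand the numerator using the element decomposition ${\bm{p}}_{{u}} = {\bm{p}}_{{U}} + \bm{Q}_{U}\Tilde{\bm{s}}_{{u}}$ and ${\bm{p}}_{{r}}^{[m]} = {\bm{p}}_{{R}}^{[m]} + \bm{Q}_{R}^{[m]}\Tilde{\bm{s}}_{{r}}^{[m]}$ from~(\ref{equ:entity_element}). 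This cleanly separates the centroid displacement ${\bm{p}}_{{U}} - {\bm{p}}_{{R}}^{[m]}$ from the rotated intra-array offsets $\bm{Q}_{U}\Tilde{\bm{s}}_{{u}} - \bm{Q}_{R}^{[m]}\Tilde{\bm{s}}_{{r}}^{[m]}$. I would then apply the centroid-to-centroid relation (with $V = U$, $G = R$) to write ${\bm{p}}_{{U}} - {\bm{p}}_{{R}}^{[m]} = d_{\bm{p}_{R} \bm{p}_{U}}^{[m]}\bm{\Delta}_{\bm{p}_{R} \bm{p}_{U}}^{[m]}$, so that the numerator becomes $d_{\bm{p}_{R} \bm{p}_{U}}^{[m]}\bm{\Delta}_{\bm{p}_{R} \bm{p}_{U}}^{[m]} + \bm{Q}_{U}\Tilde{\bm{s}}_{{u}} - \bm{Q}_{R}^{[m]}\Tilde{\bm{s}}_{{r}}^{[m]}$.

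Next I would substitute the outward RIS normal from Proposition~\ref{proposition:unit_direction}, $\bm{a}_{{\bm{p}}_{{R}}}^{[m]} = \bm{Q}_{R}^{[m]}\bm{a}_{\Tilde{\bm{p}}_{{R}}}^{[m]}$, into the gain formula $G_{ru}^{} = \pi(\bm{a}_{{\bm{p}}_{{r}}{\bm{p}}_{{u}}}^{[m]{\mathrm{T}}}\bm{a}_{{\bm{p}}_{{R}}}^{[m]})^{2q_0}$ from Proposition~\ref{proposition:direction}. Factoring the scalar $1/d_{{\bm{p}}_{{r}}{\bm{p}}_{{u}}}^{[m]}$ out of the inner product and raising it to the power $2q_0$ reproduces exactly the claimed expression, with the bracketed quantity $(d_{\bm{p}_{R} \bm{p}_{U}}^{[m]}\bm{\Delta}_{\bm{p}_{R} \bm{p}_{U}}^{[m]} + \bm{Q}_{U}\Tilde{\bm{s}}_{{u}} - \bm{Q}_{R}^{[m]}\Tilde{\bm{s}}_{{r}}^{[m]})^{\mathrm{T}}\bm{Q}_{R}^{[m]}\bm{a}_{\Tilde{\bm{p}}_{{R}}}^{[m]}$ carrying the orientation dependence through $\bm{Q}_{U}$ and $\bm{Q}_{R}^{[m]}$. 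There is no genuine analytic difficulty here, so the only real obstacle is bookkeeping: keeping the per-entity rotation matrices and the per-RIS superscripts $[m]$ consistent, and being careful to apply the centroid relation to the RIS--UE centroid pair rather than to the element-level points, so that the rotated local offsets $\bm{Q}_{U}\Tilde{\bm{s}}_{{u}}$ and $\bm{Q}_{R}^{[m]}\Tilde{\bm{s}}_{{r}}^{[m]}$ remain explicit instead of being silently absorbed. I would then close by observing that the identical chain of substitutions with $U \to B$, $u \to b$ yields the second identity for $G_{rb}^{}(\bm{a}_{{\bm{p}}_{{r}}{\bm{p}}_{{b}}}^{[m]})$.
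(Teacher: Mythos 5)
Your proposal is correct and follows essentially the same route the paper intends: it is precisely the chain of substitutions that the paper compresses into "follows from the definitions of misorientation and misalignment and Propositions \ref{proposition:direction} and \ref{proposition:unit_direction}," namely expanding ${\bm{p}}_{u}-{\bm{p}}_{r}^{[m]}$ via (\ref{equ:entity_element}) and the centroid-to-centroid relation, inserting $\bm{a}_{{\bm{p}}_{R}}^{[m]}=\bm{Q}_{R}^{[m]}\bm{a}_{\Tilde{\bm{p}}_{R}}^{[m]}$, and pulling the factor $1/d_{{\bm{p}}_{r}{\bm{p}}_{u}}^{[m]}$ out of the inner product before raising to the power $2q_0$. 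The relabeling $u\mapsto b$, $U\mapsto B$ for the second identity is likewise exactly what the paper does implicitly.
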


The above propositions only present the antenna gains as a function of RIS misorientation. These propositions are an intermediary step to present the pathloss in Lemma 2. Because the derived pathloss is a function of the antenna gain, it is also a function of the orientation of the RIS. Now,  because the FIM depends on the pathloss, the presence of an orientation offset affects the lower bound on any unbiased estimator.
\color{black}

\begin{lemma}
\label{lemma:pathloss}
With Assumption \ref{assumption:assumption_RIS_normal_pathloss_modeling}, Proposition \ref{proposition:direction}, and Proposition \ref{proposition:gain}, the pathloss experienced by a signal transmitted  from the $b^{\text{th}}$ transmit antenna on a misoriented and misaligned BS through the $r^{\text{th}}$ element of the $m^{\text{th}}$ RIS which is also misoriented and misaligned to the $u^{\text{th}}$ receive antenna on a misoriented and misaligned  UE is presented in (\ref{equ_lemma:pathloss}).
\begin{figure*}
    \begin{align}
    \begin{split}
        \label{equ_lemma:pathloss}
        {\rho_{\bm{p}_{r}}^{[{m}]}} &= \frac{\lambda^2}{16 \pi \left(d_{{\bm{p}}_{{b}^{}}{\bm{p}}_{{r}}}^{[m]}\right)^{q_0 +1}
        \left(d_{ {\bm{p}}_{{r}}{\bm{p}}_{{u}^{}}^{}}^{[m]}\right)^{q_0 +1}}  \bigg[ \bigg( d_{{\bm{p}}_{{R}}{\bm{p}}_{{B}^{}}}^{[m]}{\Delta_{{\bm{p}}_{{R}}{\bm{p}}_{{B}^{}}}^{[m]}} +  \bm{Q}_{B} \Tilde{\bm{s}}_{b} - \bm{Q}_{R}^{[m]} \Tilde{\bm{s}}_{r}^{[m]} \bigg)^{\mathrm{T}} \bm{Q}_{R}^{[m]} \bm{a}_{\Tilde{\bm{p}}_{{R}^{}}}^{[m]}  \bigg]^{q_0} \\ &\times\bigg[ \bigg( d_{{\bm{p}}_{{R}}{\bm{p}}_{{U}^{}}}^{[m]}{\Delta_{{\bm{p}}_{{R}}{\bm{p}}_{{U}^{}}}^{[m]}}   + \bm{Q}_{U} \Tilde{\bm{s}}_{u} - \bm{Q}_{R}^{[m]} \Tilde{\bm{s}}_{r}^{[m]} \bigg)^{\mathrm{T}} \bm{Q}_{R}^{[m]} \bm{a}_{\Tilde{\bm{p}}_{{R}^{}}}^{[m]}  \bigg]^{q_0} \epsilon_{p} =        \frac{\lambda^2 \rho_{r, {\bm{p}}_{{r}}{\bm{p}}_{{b}^{}}}^{[m]{q_0}}  \rho_{r,{\bm{p}}_{{r}}{\bm{p}}_{{u}^{}}}^{[m]{q_0}}}{16 \pi \left(d_{{\bm{p}}_{{b}^{}}{\bm{p}}_{{r}}}^{[m]}\right)^{q_0 +1}
        \left(d_{ {\bm{p}}_{{r}}{\bm{p}}_{{u}^{}}^{}}^{[m]}\right)^{q_0 +1}}    \epsilon_{p}.
            \end{split}
    \end{align}
\end{figure*}
In that equation, $ \rho_{r, {\bm{p}}_{{r}}{\bm{p}}_{{b}^{}}}^{[m]} =  \bigg( d_{{\bm{p}}_{{R}}{\bm{p}}_{{B}^{}}}^{[m]}{\Delta_{{\bm{p}}_{{R}}{\bm{p}}_{{B}^{}}}^{[m]}} +  \bm{Q}_{B} \Tilde{\bm{s}}_{b} - \bm{Q}_{R}^{[m]} \Tilde{\bm{s}}_{r}^{[m]} \bigg)^{\mathrm{T}} \bm{Q}_{R}^{[m]} \bm{a}_{\Tilde{\bm{p}}_{{R}^{}}}^{[m]}$ and $\rho_{r,{\bm{p}}_{{r}}{\bm{p}}_{{u}^{}}}^{[m]} =  \bigg( d_{{\bm{p}}_{{R}}{\bm{p}}_{{U}^{}}}^{[m]}{\Delta_{{\bm{p}}_{{R}}{\bm{p}}_{{U}^{}}}^{[m]}}   + \bm{Q}_{U} \Tilde{\bm{s}}_{u} - \bm{Q}_{R}^{[m]} \Tilde{\bm{s}}_{r}^{[m]} \bigg)^{\mathrm{T}} \bm{Q}_{R}^{[m]} \bm{a}_{\Tilde{\bm{p}}_{{R}^{}}}^{[m]}$. 
 \end{lemma}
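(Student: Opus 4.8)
The plan is to begin from the receive-power expression in (\ref{equ:receive_power}) and to convert the resulting \emph{power} pathloss into the \emph{amplitude} pathloss $\rho_{\bm{p}_r}^{[m]}$ that enters the signal model (\ref{equ:receive_processing_2}). Since the useful signal there scales as $\beta^{[m]}\rho_{\bm{p}_r}^{[m]}$, the received power scales as $|\beta^{[m]}|^2\,(\rho_{\bm{p}_r}^{[m]})^2$; this is consistent with the LOS convention $\rho^{[0]}=(\lambda/4\pi)(1/d_{\bm{b}_B\bm{u}_U}^{[0]})$, whose square is the usual Friis factor. Accordingly, I would read off the power pathloss as the factor multiplying $P_{o,b}$ in (\ref{equ:receive_power}) and take its square root at the end.

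First I would dispose of the two isotropic gains. By Assumption \ref{assumption:assumption_RIS_normal_pathloss_modeling}, the BS transmit and UE receive antenna gains are the orientation-independent constants $G_B$ and $G_U$; these, together with $P_{o,b}$, are absorbed into the complex channel coefficient $\beta^{[m]}$ and hence do not appear in $\rho_{\bm{p}_r}^{[m]}$. What then remains in (\ref{equ:receive_power}) are the two RIS-element gains $G_{rb}$ and $G_{ru}$, the two ranges $(d_{\bm{p}_b\bm{p}_r}^{[m]})^2$ and $(d_{\bm{p}_r\bm{p}_u}^{[m]})^2$ in the denominator, the factor $(\lambda/4\pi)^4$, and $\epsilon_p^2$.

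Next I would substitute the two RIS-element gains from Proposition \ref{proposition:gain} (which itself follows from Propositions \ref{proposition:direction} and \ref{proposition:unit_direction}). Each such gain contributes a factor $\pi/(d^{[m]})^{2q_0}$ times the $2q_0$-th power of an inner product of the form $(d_{\bm{p}_R\bm{p}_U}^{[m]}\Delta_{\bm{p}_R\bm{p}_U}^{[m]}+\bm{Q}_U\tilde{\bm{s}}_u-\bm{Q}_R^{[m]}\tilde{\bm{s}}_r^{[m]})^{\mathrm{T}}\bm{Q}_R^{[m]}\bm{a}_{\tilde{\bm{p}}_R}^{[m]}$ and its BS counterpart, which is precisely where the orientation $\bm{Q}_R^{[m]}$ enters. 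Gathering the two $\pi$ factors into $\pi^2$ and combining the Friis ranges with the $(d^{[m]})^{2q_0}$ coming from the gains raises each range exponent to $2q_0+2$, so that the power pathloss equals $\pi^2(\lambda/4\pi)^4\epsilon_p^2\,(\rho_{r,\bm{p}_r\bm{p}_b}^{[m]})^{2q_0}(\rho_{r,\bm{p}_r\bm{p}_u}^{[m]})^{2q_0}/[(d_{\bm{p}_b\bm{p}_r}^{[m]})^{2q_0+2}(d_{\bm{p}_r\bm{p}_u}^{[m]})^{2q_0+2}]$, where the two brackets are named $\rho_{r,\bm{p}_r\bm{p}_b}^{[m]}$ and $\rho_{r,\bm{p}_r\bm{p}_u}^{[m]}$ exactly as in the statement.

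Finally I would take the square root to pass to the amplitude pathloss: every exponent halves, so the orientation brackets drop to the power $q_0$, the range exponents drop to $q_0+1$, $\epsilon_p^2\to\epsilon_p$, and the constant collapses as $\pi(\lambda/4\pi)^2=\lambda^2/(16\pi)$, reproducing (\ref{equ_lemma:pathloss}). I expect no substantive difficulty here; the only care-points are purely in the bookkeeping—correctly identifying which two of the four gains in (\ref{equ:receive_power}) are isotropic and which two carry the RIS orientation, and performing the power-to-amplitude square root consistently, since it is this step that turns $\pi^2$, $(\lambda/4\pi)^4$ and $\epsilon_p^2$ into the clean prefactor $\lambda^2/(16\pi)$ and $\epsilon_p$.
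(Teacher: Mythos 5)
Your proposal is correct and follows essentially the same route as the paper, whose proof simply cites (\ref{equ:receive_power}) and Proposition \ref{proposition:gain}. You additionally spell out the power-to-amplitude square root that the paper leaves implicit, and your bookkeeping (the isotropic $G_B, G_U$ absorbed away, $\pi(\lambda/4\pi)^2 = \lambda^2/(16\pi)$, and the exponents halving to $q_0$ and $q_0+1$) is exactly what is needed to reproduce (\ref{equ_lemma:pathloss}).
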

\begin{proof}
The proof follows from (\ref{equ:receive_power}) and Proposition \ref{proposition:gain}.
\end{proof}
\begin{remark}
Lemma  \ref{lemma:pathloss} indicates that the orientation of the RIS is contained in the pathloss. Because the orientation is also present in any array response, any entries in the FIM related to RIS orientation would require first derivatives obtained through the product rule (see Appendix \ref{Appendix:first_derivative}).
\end{remark} 
  \subsection{Mathematical Preliminaries}
Here, we present mathematical preliminaries needed to derive bounds for both deterministic and random channel parameters. We note that the error covariance matrix of an unbiased estimator, $\hat{\bm{\eta}}$, satisfies the following information inequality
$
\mathbb{E}_{\bm{y}; \boldsymbol{\eta}}\left\{(\hat{\boldsymbol{\eta}}-\boldsymbol{\eta})(\hat{\boldsymbol{\eta}}-\boldsymbol{\eta})^{\mathrm{T}}\right\} \succeq \mathbf{J}_{ \bm{\bm{y}}; \bm{\eta}}^{-1},
$
where $\mathbf{J}_{ \bm{\bm{y}}; \bm{\eta}}$ is the general FIM for the parameter vector $\boldsymbol{\eta}.$
\begin{definition}
\label{definition_FIM_1}
The general FIM for a parameter vector, $\bm{\eta}$, defined as $\mathbf{J}_{ \bm{\bm{y}}; \bm{\eta}} =  \bm{F}_{\bm{\bm{y}}; \bm{\eta} }(\bm{y}; \bm{\eta} ; \bm{\eta}, \bm{\eta})$ is the summation of the FIM obtained from the likelihood due to the observations defined as  $\mathbf{J}_{\bm{y}|\bm{\eta}} =  \bm{F}_{{\bm{y} }}(\bm{y}| \bm{\eta} ;\bm{\eta},\bm{\eta})$ and the FIM from {\em a priori} information about the parameter vector defined as $\mathbf{J}_{ \bm{\eta}} =  \bm{F}_{{\bm{\eta} }}( \bm{\eta} ;\bm{\eta},\bm{\eta})$. In mathematical terms, we have
\begin{equation}
\label{definition_equ:definition_FIM_1}
\begin{aligned}
\mathbf{J}_{ \bm{\bm{y}}; \bm{\eta}} &\triangleq 
\mathbb{E}_{\bm{y};\bm{\eta}_{}}\left[-\frac{\partial^{2} \ln \chi(\bm{y}_{};  \bm{\eta}_{} )}{\partial \bm{\eta}_{} \partial \bm{\eta}_{}^{\mathrm{T}}}\right] \\
&= -\mathbb{E}_{\bm{y}  }\left[\frac{\partial^{2} \ln \chi(\bm{y}_{}|  \bm{\eta}_{} )}{\partial \bm{\eta}_{} \partial \bm{\eta}_{}^{\mathrm{T}}}\right] -\mathbb{E}_{ \bm{\eta}_{}}\left[\frac{\partial^{2} \ln \chi(  \bm{\eta}_{} )}{\partial \bm{\eta}_{} \partial \bm{\eta}_{}^{\mathrm{T}}}\right] \\ &= \mathbf{J}_{\bm{y}|\bm{\eta}} + \mathbf{J}_{ \bm{\eta}},
\end{aligned}
\end{equation}
where  $\chi(\bm{y}_{};  \bm{\eta}_{} )$ denotes the probability density function (PDF) of $\bm{y}$ and $\bm{\eta}$.
\end{definition}
The number of elements in the Fisher information matrix increases quadratically with an increase in the number of parameters; hence, due to the potentially large dimensions of the Fisher information matrix, it is beneficial to focus on the partition corresponding to the parameters of interest. The Schur's complement\cite{horn2012matrix} provides a method of achieving this partition and the resulting partition is the EFIM.
\begin{definition}
\label{definition_EFIM}
Given a parameter vector, $ \bm{\eta}_{} \triangleq\left[\bm{\eta}_{1}^{\mathrm{T}}, \bm{\eta}_{2}^{\mathrm{T}}
\right]^{\mathrm{T}}$, where $\bm{\eta}_{1}$ is the parameter of interest, the resultant FIM has the structure 
$$
\mathbf{J}_{ \bm{\bm{y}}; \bm{\eta}}=\left[\begin{array}{cc}
\mathbf{J}_{ \bm{\bm{y}}; \bm{\eta}_1}^{}  & \mathbf{J}_{ \bm{\bm{y}}; \bm{\eta}_1, \bm{\eta}_2}^{} \\
 \mathbf{J}_{ \bm{\bm{y}}; \bm{\eta}_1, \bm{\eta}_2}^{\mathrm{T}} &\mathbf{J}_{ \bm{\bm{y}}; \bm{\eta}_2}^{}
\end{array}\right],
$$
where $\bm{\eta} \in \mathbb{R}^{N}, \bm{\eta}_{1} \in \mathbb{R}^{n}, \mathbf{J}_{ \bm{\bm{y}}; \bm{\eta}_1}^{} \in \mathbb{R}^{n \times n},  \mathbf{J}_{ \bm{\bm{y}}; \bm{\eta}_1, \bm{\eta}_2}\in \mathbb{R}^{n \times(N-n)}$, and $\mathbf{J}_{ \bm{\bm{y}}; \bm{\eta}_2}^{}\in$ $\mathbb{R}^{(N-n) \times(N-n)}$ with $n<N$, 
and the EFIM \cite{5571900} of  parameter ${\bm{\eta}_{1}}$ is given by 
$\mathbf{J}_{ \bm{\bm{y}}; \bm{\eta}_1}^{\mathrm{e}} =\mathbf{J}_{ \bm{\bm{y}}; \bm{\eta}_1}^{} - \mathbf{J}_{ \bm{\bm{y}}; \bm{\eta}_1}^{nu} =\mathbf{J}_{ \bm{\bm{y}}; \bm{\eta}_1}^{}-
\mathbf{J}_{ \bm{\bm{y}}; \bm{\eta}_1, \bm{\eta}_2}^{} \mathbf{J}_{ \bm{\bm{y}}; \bm{\eta}_2}^{-1} \mathbf{J}_{ \bm{\bm{y}}; \bm{\eta}_1, \bm{\eta}_2}^{\mathrm{T}}.$
This EFIM captures all the required information about the parameters of interest present in the FIM; as observed from the relation $(\mathbf{J}_{ \bm{\bm{y}}; \bm{\eta}_1}^{\mathrm{e}})^{-1} = [\mathbf{J}_{ \bm{\bm{y}}; \bm{\eta}}^{-1}]_{[1:n,1:n]}$.
\end{definition}
 \begin{definition}
 \label{definition:parameter_estimatable}
The parameter vector, $\bm{\eta}$, is estimatable iff the general FIM, $\mathbf{J}_{ \bm{\bm{y}}; \bm{\eta}} \succ 0$.
\end{definition}
\begin{proposition}
\label{proposition:EFIM_channel}
Defining a parameter vector $\bm{\eta}= \left[\bm{\eta}_{1}^{\mathrm{T}}, \eta_{N_{\eta} }\right]^{\mathrm{T}}  = [\eta_1, \eta_2, \cdots , \eta_{N_{\eta - 1}}, \eta_{N_{\eta} }]^{\mathrm{T}}$, where $\bm{\eta}_{1}$ is the parameter of interest and $\eta_{N_{\eta} }$ is a nuisance parameter. For a subset of the parameter vector, $\Tilde{\bm{\eta}} = [\eta_v , \eta_{N_{\eta}}]^{\mathrm{T}}$  where $v \ \in \{ 1,2,\cdots, {N_{\eta - 1}}\}$; if the resultant EFIM, $ \mathrm{J}_{ \bm{\bm{y}}; \Tilde{\bm{\eta}}}^{\mathrm{e}} = 0$, then the EFIM, $\mathbf{J}_{ \bm{\bm{y}}; \bm{\eta}_1}^{\mathrm{e}} \nsucc 0$.
\end{proposition}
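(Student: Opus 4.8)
The plan is to reduce the claim about the full EFIM $\mathbf{J}_{\bm{y}; \bm{\eta}_1}^{\mathrm{e}}$ to a statement about one of its diagonal entries, and then invoke the elementary fact that a positive-definite matrix cannot have a vanishing diagonal entry. The key observation is that because $\eta_{N_\eta}$ is the \emph{only} nuisance parameter (a scalar), the Schur complement in (\ref{equ:definition_EFIM_1}) produces a rank-one correction whose action on the $(v,v)$ diagonal entry coincides exactly with the scalar nuisance-elimination performed in the $2\times 2$ subproblem for $\Tilde{\bm{\eta}} = [\eta_v, \eta_{N_\eta}]^{\mathrm{T}}$.

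Concretely, first I would write the full FIM in the block form of Definition \ref{definition_EFIM}, partitioning $\bm{\eta}$ into $\bm{\eta}_1 \in \mathbb{R}^{N_\eta - 1}$ and the scalar nuisance $\eta_{N_\eta}$, so that $\mathbf{J}_{\bm{y}; \eta_{N_\eta}}$ is a scalar (invertible, as implicitly required for the EFIM to be defined) and $\mathbf{J}_{\bm{y}; \bm{\eta}_1, \eta_{N_\eta}}$ is a column vector. Applying (\ref{equ:definition_EFIM_1}) gives
$$
\mathbf{J}_{\bm{y}; \bm{\eta}_1}^{\mathrm{e}} = \mathbf{J}_{\bm{y}; \bm{\eta}_1} - \frac{1}{\mathbf{J}_{\bm{y}; \eta_{N_\eta}}} \, \mathbf{J}_{\bm{y}; \bm{\eta}_1, \eta_{N_\eta}} \mathbf{J}_{\bm{y}; \bm{\eta}_1, \eta_{N_\eta}}^{\mathrm{T}},
$$
and reading off its $(v,v)$ entry yields $\left[\mathbf{J}_{\bm{y}; \bm{\eta}_1}^{\mathrm{e}}\right]_{vv} = \left[\mathbf{J}_{\bm{y}; \bm{\eta}_1}\right]_{vv} - \left(\left[\mathbf{J}_{\bm{y}; \bm{\eta}_1, \eta_{N_\eta}}\right]_v\right)^2 / \mathbf{J}_{\bm{y}; \eta_{N_\eta}}$.

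Second, I would argue that the $2\times 2$ FIM of the subset $\Tilde{\bm{\eta}}$ is exactly the principal submatrix of $\mathbf{J}_{\bm{y}; \bm{\eta}}$ indexed by rows and columns $v$ and $N_\eta$; this is immediate from the definition of the FIM function $\bm{F}_{\bm{v}}(\cdot)$, since each entry is an expectation of a product of two score components and therefore depends only on the corresponding pair of parameters. Its EFIM for the parameter of interest $\eta_v$ is then the scalar Schur complement $\mathbf{J}_{\bm{y}; \Tilde{\bm{\eta}}}^{\mathrm{e}} = \left[\mathbf{J}_{\bm{y}; \bm{\eta}_1}\right]_{vv} - \left(\left[\mathbf{J}_{\bm{y}; \bm{\eta}_1, \eta_{N_\eta}}\right]_v\right)^2 / \mathbf{J}_{\bm{y}; \eta_{N_\eta}}$, which is identical to the expression for $\left[\mathbf{J}_{\bm{y}; \bm{\eta}_1}^{\mathrm{e}}\right]_{vv}$ above. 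Hence $\left[\mathbf{J}_{\bm{y}; \bm{\eta}_1}^{\mathrm{e}}\right]_{vv} = \mathbf{J}_{\bm{y}; \Tilde{\bm{\eta}}}^{\mathrm{e}}$.

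Finally, under the hypothesis $\mathbf{J}_{\bm{y}; \Tilde{\bm{\eta}}}^{\mathrm{e}} = 0$ I conclude $\left[\mathbf{J}_{\bm{y}; \bm{\eta}_1}^{\mathrm{e}}\right]_{vv} = 0$. Since any $\mathbf{M} \succ 0$ must satisfy $\mathbf{e}_v^{\mathrm{T}} \mathbf{M} \mathbf{e}_v = M_{vv} > 0$ for every standard basis vector $\mathbf{e}_v$, a zero diagonal entry rules out positive definiteness, giving $\mathbf{J}_{\bm{y}; \bm{\eta}_1}^{\mathrm{e}} \nsucc 0$. I do not expect a genuine obstacle here; the only point requiring care is the bookkeeping in the second step, namely verifying that the nuisance elimination in the $2\times 2$ subproblem reproduces precisely the $(v,v)$ entry of the full rank-one correction. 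This hinges entirely on $\eta_{N_\eta}$ being a single scalar nuisance parameter, so that the correction is rank one and acts diagonally in the clean way exploited above.
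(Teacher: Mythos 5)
Your proof is correct and follows essentially the same route as the paper's: the paper's proof in Appendix~\ref{appendix:proof_PD} consists only of your final step (a symmetric matrix with a zero diagonal entry cannot be positive definite), leaving the identification $\left[\mathbf{J}_{\bm{y};\bm{\eta}_1}^{\mathrm{e}}\right]_{[v,v]} = \mathrm{J}_{\bm{y};\Tilde{\bm{\eta}}}^{\mathrm{e}}$ implicit. Your explicit Schur-complement computation showing that the scalar rank-one correction acts identically on the $(v,v)$ entry and on the $2\times 2$ subproblem is the only substantive addition, and it correctly fills in the step the paper omits.
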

\begin{proof}
See Appendix \ref{appendix:proof_PD}.
\end{proof}

In subsequent sections, Definition 3 and Proposition 5 will be used to show under what propagation regimes the orientation offset can be estimated.
\color{black}
\subsection{Fisher Information Matrix for Channel Parameters}
To derive the FIM for the channel parameters from the received signals, we consider the $t^{\text{th}}$ OFDM transmission with $N$ subcarriers, and the likelihood expression conditioned on the parameter vector, $\bm{\eta}$,  is\footnote{The likelihood is easily expressible in scalar terms by considering the signal at each receive antenna as independent observations.}  
\begin{equation}
\label{equ:likelihood}
\begin{aligned}
    &\chi(\bm{y}_{t}[n]|  \bm{\eta}_{}) \\ & \propto \exp \left\{\frac{2}{N_{0}} \sum_{n=1}^{N} \Re\left\{\bm{\mu}_{t}^{\mathrm{H}}[n] \bm{y}_{t}[n]\right\}-\frac{1}{N_{0}} \sum_{n=1}^{N}\|\bm{\mu}_{t}[n]\|_{2}^{2}\right\}.
    \end{aligned}
\end{equation}
The FIM from observations, $\mathbf{J}_{\bm{y}|\bm{\eta}}$, of the random vector, $\bm{y}$, is obtained by substituting (\ref{equ:likelihood}) into  (\ref{definition_equ:definition_FIM_1}) in Definition \ref{definition_FIM_1}. The {\em a priori} information about channel parameters and RIS orientation is incorporated through the likelihood which is expressed as
    $\chi(\bm{\eta}_{})  = \prod_{m = 1}^{M_1}\chi( \bm{\eta}^{[0]}|\bm{p}_U )\chi( \bm{\eta}^{[m]}|\bm{p}^{[m]}_R,  \bm{p}_U ).$
The equation above assumes that there is independence between channel parameters from distinct RISs, and the LOS and RIS channel parameters are independent. Furthermore, we assume that the UE orientation is independent of the orientation of the RISs and the parameters in each distinct path\footnote{This relates to both LOS and RIS paths.} are independent of each other. The  Fisher information matrix from {\em a priori} information is $\mathbf{J}_{\bm{\eta}}=\text{diag}\left[
 \bm{\Xi}_{\bm{\eta}^{[0]},\bm{\eta}^{[0]}}^{\bm{\bm{\eta}}^{[0]}},\bm{\Xi}_{\bm{\eta}^{[1]},\bm{\eta}^{[1]}}^{\bm{\eta}^{[1]}}, \cdots,\bm{\Xi}_{\bm{\eta}^{[M_1]},\bm{\eta}^{[M_1]}}^{\bm{\eta}^{[M_1]}} \right],$
where  $\bm{\Xi}_{\bm{\eta}^{[0]},\bm{\eta}^{[0]}}^{\bm{\eta}^{[0]}} = \bm{F}_{{\bm{\eta} }}( \bm{\eta}^{[0]}|\bm{p}_U ;\bm{\eta^{[0]}},\bm{\eta}^{[0]})$. The FIM from the {\em a priori} information related to the likelihood $\chi(\bm{\eta}^{[m]}| \bm{p}^{[m]}_{R},\bm{p}_U)$ is $\bm{\Xi}_{\bm{\eta}^{[m]},\bm{\eta}^{[m]}}^{\bm{\eta}^{[m]}} = \bm{F}_{{\bm{\eta} }}( \bm{\eta}^{[m]}|\bm{p}^{[m]}_R,\bm{p}_U ;\bm{\eta}^{[m]},\bm{\eta}^{[m]})$.  Note that $\bm{\Xi}_{{\bm{\eta}^{[m]}},{\bm{\eta}^{[m]}}}^{\bm{\eta}^{[m]}}, m \in \{0,1,\cdots, M_1 \}$  are  diagonal matrices as the parameters in each path are assumed to be independent from each other\cite{kay1993fundamentals}. \color{black} The general FIM is obtained from $\mathbf{J}_{\bm{y}|\bm{\eta}}$ and $\mathbf{J}_{\bm{\eta}}$ according to Definition \ref{definition_FIM_1}. The FIM of the location parameters, $\bm{\kappa}$\footnote{The location parameter is properly defined in Section \ref{section_fim_for_kappa}.}, is obtained by the bijective transformation $\mathbf{J}_{\bm{y}|\bm{\kappa}} \triangleq \mathbf{\Upsilon}_{\bm{\kappa}} \mathbf{J}_{\bm{y}|\bm{\eta}} \mathbf{\Upsilon}_{\bm{\kappa}}^{\mathrm{T}}$, where $\mathbf{\Upsilon}_{\bm{\kappa}}$ represents derivatives of the non-linear relationship between the channel parameters and the location parameters. After this transformation, the general FIM for the location parameters is obtained through Definition \ref{definition_FIM_1}\footnote{A more rigorous explanation and investigation of the FIMs for the location parameters is presented in Section \ref{section_fim_for_kappa}.}. The FIM $\mathbf{J}_{ \bm{\bm{y}}| \bm{\eta}}$ of the channel parameters specified by (\ref{equ:channel_params}) has $(M_1 + 1)^2$ submatrices, and these submatrices are obtained using
\begin{equation}
\label{equ:observation_FIM_1}
[\mathbf{J}_{ \bm{\bm{y}}| \bm{\eta}}]_{[v,g]} = 
\frac{2}{N_{0}} \sum_{u=1}^{N_U}\sum_{n=1}^{N}\sum_{t=1}^{T} \Re\left\{\nabla^{\mathrm{H}}_{{[\bm{\eta}]_{[v]}} } {\mu}_{t, {{u}}}[n] \nabla_{{[\bm{\eta}]_{[g]}} } {\mu}_{t, {{u}}}[n]\right\},
\end{equation}
where the first derivatives are presented in Appendix \ref{Appendix:first_derivative}. Note that the FIM obtained through the first derivatives in Appendix \ref{Appendix:first_derivative} is challenging to analyze. Hence, we make some assumptions for tractability in the presentation of the necessary conditions for estimating both deterministic and random channel parameters useful for localization.
\begin{assumption}
\label{assumption:RIS_restrict}
The degrees of freedom provided by the $T$ OFDM symbols are used to impose the following constraints\cite{9528041,9625826,keykhosravi2021multi,emenonye2022fundamentals}
$$
\begin{aligned}
\sum_{t = 1}^{T} \gamma_t^{[{m}]} = 0, \; \; \sum_{t = 1}^{T} \gamma_t^{[{m}] \mathrm{H}}\gamma_t^{[{m}]} = 1, \; \; \forall  \; m, \text{ and } \\ 
\sum_{t = 1}^{T} \gamma_t^{[{m}_1] \mathrm{H}}\gamma_t^{[{m}_2]} = 0, \; \; \forall  \; m_1 \neq m_2.
\end{aligned}
$$
These constraints make the LOS and the distinct RIS paths separable, resulting in the block diagonalization of the FIM from the observations, $\mathbf{J}_{\bm{y}|\bm{\eta}}$. This block diagonalization helps analyze the information provided by each path and the constraints are achieved by assigning distinct columns of discrete Fourier transforms to distinct RISs.
\end{assumption}
\begin{assumption}
\label{assumption:pathloss_subsumed}
For tractability, we assume perfect synchronization\footnote{To acquire a fundamental understanding of the localization problem, this assumption is common in the literature \cite{8240645,8515231,8356190,fascista2021downlink,8755880,9082200}. Although it appears restrictive, it can be attained using a preliminary two-way synchronization approach or a joint localization       and synchronization approach \cite{9064586,1097833}. These approaches are beyond the scope of this paper but could certainly be incorporated without significant difficulty. \color{black}} \color{black} and narrowband transmissions  in the rest of this work such that $\lambda_{n} = \lambda \; \forall n$. To analyze the impact of misorientation, we assume that the pathloss of the $m^{\text{th}}$ RIS path is constant, ${\rho_{\bm{r}_r}^{[{m}]}} = {\rho_{\bm{r}_R}^{[{m}]}} \; \forall r$. We combine the pathloss and the noise power into a composite noise power, $(\sigma^{[m]})^2$. Finally, we define the SNR as $P_{}/(\sigma^{[m]})^2$, where $P =  P_{o,b} G_{R}^{} G_{U}^{}.$ 
\end{assumption}
Assumption \ref{assumption:RIS_restrict} is used to ensure the diagonalization of the FIM. 
With this diagonalization, the general FIM obtained from $\mathbf{J}_{\bm{y}|\bm{\eta}}$ and $\mathbf{J}_{\bm{\eta}}$ according to Definition \ref{definition_FIM_1} is also a diagonal matrix $\mathbf{J}_{ \bm{\bm{y}}; \bm{\eta}} = \text{diag}\left[\mathbf{J}_{ \bm{\bm{y}}; \bm{\eta}^{[0]}}, \mathbf{J}_{ \bm{\bm{y}}; \bm{\eta}^{[1]}}, \cdots, \mathbf{J}_{ \bm{\bm{y}}; \bm{\eta}^{[M_1]}}\right].$
With Definition \ref{definition:parameter_estimatable} and the block diagonal nature of the general FIM, $\mathbf{J}_{ \bm{\bm{y}}; \bm{\eta}}$, it is sufficient to determine estimatability of a parameter vector, $\bm{\eta}$, by establishing the positive definiteness of all the diagonal entries in $\mathbf{J}_{ \bm{\bm{y}}; \bm{\eta}}$. However, not all parameters are useful for localization, hence a more efficient metric is the general EFIM $\mathbf{J}_{ \bm{\bm{y}}; \bm{\eta}_1}^{\mathrm{e}} = \text{diag}\left[\mathbf{J}_{ \bm{\bm{y}}; \bm{\eta}_1^{[0]}}^{\mathrm{e}}, \mathbf{J}_{ \bm{\bm{y}}; \bm{\eta}_1^{[1]}}^{\mathrm{e}}, \cdots,\mathbf{J}_{ \bm{\bm{y}}; \bm{\eta}_1^{[M_1]}}^{\mathrm{e}}\right].$
The term $\mathbf{J}_{ \bm{\bm{y}}; \bm{\eta}_1^{[m]}}^{\mathrm{e}}$, is the general EFIM of the $m^{\text{th}}$ path  derived based on the vector of parameters of interest $\bm{\eta}_1^{[m]}$\footnote{Definition \ref{definition:parameter_estimatable} is easily extended to define necessary and sufficient conditions for the estimatability of the parameter of interest, $\bm{\eta}_1^{[m]}$ based on the EFIM.}.

  \subsection{Fisher Information for the LOS Parameters}
The FIM for the LOS parameters based on the observations is obtained by using
(\ref{equ:observation_FIM_1}) and the general FIM for the LOS parameters is obtained using Definition \ref{definition_FIM_1}. The EFIM of the LOS  geometric channel parameters, $\mathbf{J}_{ \bm{\bm{y}}; \bm{\eta}_1^{[0]}}^{\mathrm{e}}$, is obtained through Definition \ref{definition_EFIM}. This EFIM always satisfies the  estimatability condition in Definition \ref{definition:parameter_estimatable}.

 \subsection{Fisher Information for the RIS Parameters}
The RIS paths have identical channel parameters. Hence it suffices to analyze a single RIS path. Therefore, we apply Assumption \ref{assumption:pathloss_subsumed} and investigate the estimatability of the geometric channel parameters related to the $m^{\text{th}}$ RIS.
\begin{lemma}
\label{lemma:farfield_channel_parameter_estimatable}
In the far-field, the vector of geometric channel parameters is not estimatatible at the UE without {\em a priori} information about the orientation of the RIS or {\em a priori} information about the channel complex path gains.
\end{lemma}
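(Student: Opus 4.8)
The plan is to work inside the far-field model of Proposition~\ref{proposition:far_field_1}, restrict attention to a single RIS path (legitimate because Assumption~\ref{assumption:RIS_restrict} block-diagonalizes $\mathbf{J}_{\bm{y}|\bm{\eta}}$ across paths), and show that the equivalent FIM $\mathbf{J}_{\bm{y};\bm{\eta}_1^{[m]}}^{\mathrm{e}}$ of the geometric parameters fails to be positive definite; by Definition~\ref{definition:parameter_estimatable} this is exactly the claimed non-estimatability. The structural fact I would exploit is that in the far field the $m^{\text{th}}$ path is a single rank-one contribution scaled by the complex factor $\beta^{[m]}\rho_{\bm{r}_R}^{[m]}$, and that the RIS orientation $\bm{\Phi}_R^{[m]}$ enters the observation only (i)~through the real, positive pathloss $\rho_{\bm{r}_R}^{[m]}(\bm{\Phi}_R^{[m]})$ of Lemma~\ref{lemma:pathloss}, and (ii)~through the phases of the RIS array responses, where it enters via $\bm{S}_r=\bm{Q}_R^{[m]}\tilde{\bm{S}}_r$.

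The core computation is to differentiate the noise-free signal with respect to each component of $\bm{\Phi}_R^{[m]}$ through the product-rule coupling noted in the remark after Lemma~\ref{lemma:pathloss}. Writing the path contribution as $\bm{\mu}_t^{[m]}[n]=\beta^{[m]}\rho_{\bm{r}_R}^{[m]}\,\bm{c}_t[n]$ with $\bm{c}_t[n]$ the array-response and delay factor (free of $\beta^{[m]}$), the pathloss part of the derivative equals $(\partial\rho_{\bm{r}_R}^{[m]}/\partial\Phi)/\rho_{\bm{r}_R}^{[m]}$ times $\bm{\mu}_t^{[m]}[n]$; since the per-element pathloss is held constant by Assumption~\ref{assumption:pathloss_subsumed}, $\beta^{[m]}\rho_{\bm{r}_R}^{[m]}$ is a single complex multiplier and $\bm{\mu}_t^{[m]}[n]=\beta_{\mathrm{R}}^{[m]}\nabla_{\beta_{\mathrm{R}}^{[m]}}\bm{\mu}_t^{[m]}[n]+\beta_{\mathrm{I}}^{[m]}\nabla_{\beta_{\mathrm{I}}^{[m]}}\bm{\mu}_t^{[m]}[n]$, so this part lies in the span of the two gain derivatives. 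The two remaining (array-response) parts I would show lie in the span of the derivatives with respect to the angles of the corresponding BS--RIS and RIS--UE links: for a planar RIS the far-field response depends only on the two effective local spatial frequencies, so all of its angle and orientation derivatives occupy the same two-dimensional subspace, making each orientation derivative a linear combination of $\nabla_{\theta}\bm{\mu}_t^{[m]}[n]$ and $\nabla_{\phi}\bm{\mu}_t^{[m]}[n]$. Combining the three parts gives the key inclusion: every orientation derivative lies in the span of the gain derivatives and the link-angle derivatives.

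Since each entry of $\mathbf{J}_{\bm{y}|\bm{\eta}}$ in~(\ref{equ:observation_FIM_1}) is a real inner product of such derivative vectors, this inclusion makes the underlying Jacobian rank deficient, hence $\mathbf{J}_{\bm{y}|\bm{\eta}^{[m]}}$ and its Schur complement $\mathbf{J}_{\bm{y};\bm{\eta}_1^{[m]}}^{\mathrm{e}}$ of Definition~\ref{definition_EFIM} singular. I would formalize the conclusion through Proposition~\ref{proposition:EFIM_channel}: reparameterizing $\beta^{[m]}$ by magnitude and phase isolates the pathloss confounding, so that an orientation component paired with the gain magnitude produces a reduced $2\times2$ EFIM equal to zero, which forces $\mathbf{J}_{\bm{y};\bm{\eta}_1^{[m]}}^{\mathrm{e}}\nsucc0$. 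The two escape clauses then follow: with no prior on $\bm{\Phi}_R^{[m]}$ and none on $\beta^{[m]}$ the matrix $\mathbf{J}_{\bm{\eta}}$ adds nothing along the degenerate direction, whereas {\em a priori} information on $\bm{\Phi}_R^{[m]}$ contributes a strictly positive diagonal term on the orientation entries and {\em a priori} information on $\beta^{[m]}$ penalizes the compensating gain perturbation---either one removing the degeneracy, which is why at least one of them is necessary.

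I expect the main obstacle to be step~(ii): rigorously showing that the orientation's imprint on the RIS array responses is entirely absorbed by the link-angle parameters, so that orientation's only residual and independent footprint is the scalar pathloss confounded with $|\beta^{[m]}|$. This reduction is precisely what keeps the degeneracy alive irrespective of the number of receive antennas, and it rests on the planar far-field structure of the RIS aperture; turning the span inclusions above into the exact vanishing of the reduced EFIM demanded by Proposition~\ref{proposition:EFIM_channel} is the technical heart of the argument.
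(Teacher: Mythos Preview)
Your overall strategy---show that the column of $\nabla_{\bm{\Phi}_R^{[m]}}\bm{\mu}_t[n]$ is a real linear combination of the other parameter columns, so that $\mathbf{J}_{\bm{y}|\bm{\eta}^{[m]}}$ is singular and Proposition~\ref{proposition:EFIM_channel} applies---is the same as the paper's. The difference is \emph{which} columns you appeal to, and this is where your proposal has a gap.

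Under Assumption~\ref{assumption:pathloss_subsumed} the pathloss is held constant, so there is no pathloss contribution to $\nabla_{\bm{\Phi}_R^{[m]}}\bm{\mu}_t[n]$ at all; your step~(i) is vacuous here. The entire orientation derivative comes through the RIS-side array responses $\bm{a}_{RU},\bm{a}_{RB}$. The decisive structural fact---which you do not exploit---is that in the far-field model of Proposition~\ref{proposition:far_field_1} these two responses appear only inside the \emph{scalar} $s=\bm{a}_{RU}^{\mathrm{H}}\bm{\Gamma}^{[m]}\bm{a}_{RB}$. Hence $\nabla_{\bm{\Phi}_R^{[m]}}\bm{\mu}_t[n]=(\nabla_{\bm{\Phi}_R^{[m]}}s/s)\,\bm{\mu}_t[n]$ is a complex-scalar multiple of the signal itself, and therefore a \emph{real} linear combination of $\nabla_{\beta_{\mathrm{R}}^{[m]}}\bm{\mu}_t[n]$ and $\nabla_{\beta_{\mathrm{I}}^{[m]}}\bm{\mu}_t[n]$ alone, with coefficients independent of $t,n,u$ under the narrowband assumption. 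The paper records this as the identity $\mathbf{J}_{\bm{y}|\beta_{\mathrm{R}}}\mathbf{J}_{\bm{y}|\bm{\Phi}_R}=\mathbf{J}_{\bm{y}|\bm{\Phi}_R,\beta_{\mathrm{R}}}\mathbf{J}_{\bm{y}|\bm{\Phi}_R,\beta_{\mathrm{R}}}^{\mathrm{T}}+\mathbf{J}_{\bm{y}|\bm{\Phi}_R,\beta_{\mathrm{I}}}\mathbf{J}_{\bm{y}|\bm{\Phi}_R,\beta_{\mathrm{I}}}^{\mathrm{T}}$, which makes the reduced EFIM of $\tilde{\bm{\eta}}^{[m]}=[\bm{\Phi}_R^{[m]},\beta^{[m]}]$ vanish exactly when both priors are zero. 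No link-angle derivatives are needed.

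Your step~(ii), by contrast, does not go through as stated: the link angles $\theta_{\bm{r}_R\bm{u}_U}^{[m]},\phi_{\bm{r}_R\bm{u}_U}^{[m]}$ also perturb the \emph{UE-side} response $\bm{a}_{UR}$ (and $\theta_{\bm{b}_B\bm{r}_R}^{[m]},\phi_{\bm{b}_B\bm{r}_R}^{[m]}$ perturb $\bm{a}_{BR}$), whereas $\bm{\Phi}_R^{[m]}$ does not. Thus $\nabla_{\theta_{\bm{r}_R\bm{u}_U}^{[m]}}\bm{\mu}_t[n]$ contains a component along $\nabla_{\theta}\bm{a}_{UR}$ that $\nabla_{\bm{\Phi}_R^{[m]}}\bm{\mu}_t[n]$ never has, so the orientation derivative is not in the span of the angle derivatives; the ``absorption'' you anticipate as the main obstacle in fact fails. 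Consequently your formalization via a $2\times2$ EFIM in $(\bm{\Phi}_R^{[m]}$-component, $|\beta^{[m]}|)$ would not yield zero either, since the orientation derivative is not colinear with the gain-magnitude derivative once you carry the array-response part. The fix is simply to drop the angle detour and use the scalar-$s$ observation above, which gives the span inclusion with the gain derivatives directly and makes the Proposition~\ref{proposition:EFIM_channel} step immediate.
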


\begin{proof}
See Appendix \ref{Appendix:farfield_channel_parameter_estimatable}
\end{proof}

\begin{lemma}
\label{lemma:near-field_channel_parameter_estimatable}
In the near-field, the vector of geometric channel parameters is not estimatatible at the UE without {\em a  priori} information about the channel complex path gains if there is also no {\em a priori} information about the orientation of the RIS and if $N_U < 2$.  If $N_U > 1$, the possibility of estimating the parameter vector exists.
\end{lemma}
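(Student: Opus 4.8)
The plan is to prove the statement operationally: assuming no a priori information and $N_U=1$, I will exhibit a nontrivial null direction of the per-path observation FIM $\mathbf{J}_{\bm{y}|\bm{\eta}^{[m]}}$ that has nonzero support on both the RIS orientation $\bm{\Phi}_R^{[m]}$ and the complex gain $\bm{\beta}^{[m]}$. Since the block-diagonal structure (Assumption \ref{assumption:RIS_restrict}) lets me treat a single RIS path, and since the nuisance block generated by $\beta_{\mathrm{R}}^{[m]},\beta_{\mathrm{I}}^{[m]}$ is nonsingular, such a null direction forces the geometric EFIM to be singular, i.e.\ $\mathbf{J}_{\bm{y};\bm{\eta}_1^{[m]}}^{\mathrm{e}}\nsucc 0$; this is exactly the packaging of Proposition \ref{proposition:EFIM_channel} and Definition \ref{definition:parameter_estimatable}. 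I would start by specializing the useful signal (\ref{equ:receive_processing_2}) to $N_U=1$, so that the UE-side array response reduces to a scalar and the $m^{\text{th}}$-RIS term of $\mu_{t,1}[n]$ is a single complex scalar $s_{t}^{[m]}[n]$ per subcarrier and symbol.

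The central computation uses the first derivatives of Appendix \ref{Appendix:first_derivative} together with Lemma \ref{lemma:pathloss}: because $\bm{\Phi}_R^{[m]}$ enters both the pathloss and the array responses, $\nabla_{\bm{\Phi}_R^{[m]}}s_t^{[m]}[n]$ splits (by the product rule) into a pathloss part and an element-rotation part from $\bm{s}_r=\bm{Q}_R^{[m]}\tilde{\bm{s}}_r$. By Lemma \ref{lemma:pathloss} the pathloss part is a real scalar multiple of $s_t^{[m]}[n]$ and hence already lies in the two-real-dimensional span of the gain derivatives $\nabla_{\beta_{\mathrm{R}}}s_t^{[m]}[n]$ and $\nabla_{\beta_{\mathrm{I}}}s_t^{[m]}[n]$ (which equal $s_t^{[m]}[n]/\beta^{[m]}$ scaled by $1$ and $j$). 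It then remains to absorb the element-rotation part: I would show that for $N_U=1$ this part is a linear combination of the derivatives with respect to the RIS-UE and BS-RIS angles and delays together with the two gain directions, equivalently that a given orientation perturbation can be matched, to first order in $s_t^{[m]}[n]$ for every $n$ and $t$, by compensating perturbations of the centroid angles, delays, and of $\beta^{[m]}$. This is precisely the far-field degeneracy already proved in Lemma \ref{lemma:farfield_channel_parameter_estimatable}: with a single receive antenna there is no aperture over which the spherical wavefront of the RIS-UE link can be resolved, so that link contributes only a centroid angle and delay and the residual orientation sensitivity collapses to a common amplitude-and-phase scaling indistinguishable from the unknown complex gain.

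The main obstacle, and the step that pins the hypothesis $N_U<2$, is the bookkeeping showing the compensation closes for $N_U=1$ but not for $N_U\ge 2$. I would carry the receive-antenna index $u$ explicitly and verify that the cross-antenna differences of $\nabla_{\bm{\Phi}_R^{[m]}}\mu_{t,u}[n]$ vanish identically only when there is a single antenna; for $N_U\ge 2$ the curvature of the near-field wavefront makes these derivatives vary across antennas in a way that no common scaling or centroid angle/delay perturbation can reproduce, so the constructed direction is no longer annihilated and positive definiteness is restored (this is the mechanism behind the companion near-field result, and it is exactly what must fail here). Finally, to match the precise wording, I would observe that supplying a priori information on either $\bm{\Phi}_R^{[m]}$ or $\bm{\beta}^{[m]}$ contributes a positive-definite block to $\mathbf{J}_{\bm{\eta}}$ whose quadratic form is strictly positive on the constructed null direction---which has nonzero components in both blocks---thereby removing the singularity and re-enabling estimation.
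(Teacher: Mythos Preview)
Your overall strategy---find a null direction of $\mathbf{J}_{\bm{y}|\bm{\eta}^{[m]}}$ with support on $\bm{\Phi}_R^{[m]}$ and $\beta^{[m]}$, then invoke Proposition~\ref{proposition:EFIM_channel}---is sound, but the execution takes an unnecessary detour and leans on a step that does not go through as written. The paper's proof never touches the angle or delay derivatives: under Assumptions~\ref{assumption:RIS_restrict} and~\ref{assumption:pathloss_subsumed} (narrowband, pathloss subsumed), it simply writes the $m^{\text{th}}$-path signal for $N_U=1$ as $\mu_{t,1}[n]=\gamma_t^{[m]}\beta^{[m]}\,\bm{a}^{\mathrm{T}}(\bm{u}_1)\bm{\Gamma}^{[m]}\bm{a}(\bm{b}_b)\,e^{j2\pi f_n\tau_{\bm{b}_b\bm{p}_d}}x_d[n]$, computes the sub-FIM for $\tilde{\bm{\eta}}^{[m]}=[\bm{\Phi}_R^{[m]},\beta_{\mathrm{R}}^{[m]},\beta_{\mathrm{I}}^{[m]}]$, and verifies the Schur complement vanishes. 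The mechanism is elementary: with a single receive antenna and $\lambda_n=\lambda$, the ratio $\nabla_{\bm{\Phi}_R}\mu_{t,1}[n]\big/\mu_{t,1}[n]$ is a complex constant independent of $(t,n)$, so the orientation derivative already lies in the two-real-dimensional span of $\nabla_{\beta_{\mathrm{R}}}\mu$ and $\nabla_{\beta_{\mathrm{I}}}\mu$ \emph{alone}. No compensating perturbations of $\theta,\phi,\tau$ are needed.

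The genuine gap in your argument is the appeal to Lemma~\ref{lemma:farfield_channel_parameter_estimatable}. That lemma is proved for the far-field model of Proposition~\ref{proposition:far_field_1}, whose factored array-response structure is what makes the identity $\mathbf{J}_{\bm{y}|\beta_R}\mathbf{J}_{\bm{y}|\bm{\Phi}_R}=\mathbf{J}_{\bm{y}|\bm{\Phi}_R,\beta_R}\mathbf{J}_{\bm{y}|\bm{\Phi}_R,\beta_R}^{\mathrm{T}}+\mathbf{J}_{\bm{y}|\bm{\Phi}_R,\beta_I}\mathbf{J}_{\bm{y}|\bm{\Phi}_R,\beta_I}^{\mathrm{T}}$ hold for every $N_U$. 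In the near-field model~(\ref{equ:receive_processing_2}) the RIS-side phases depend on the exact element-to-element delays $\tau_{\bm{r}_r\bm{u}_u}$, so the far-field factorization is unavailable and you cannot invoke Lemma~\ref{lemma:farfield_channel_parameter_estimatable} by saying ``with one antenna the RIS--UE link contributes only a centroid angle and delay.'' What actually collapses is not the spherical wavefront but the \emph{index set} over which the derivatives vary: once $u$ is fixed, every derivative of the $m^{\text{th}}$-path signal is the same $(t,n)$-dependent scalar $\gamma_t^{[m]}e^{j2\pi f_n\tau_{\bm{b}_b\bm{p}_d}}x_d[n]$ times a constant, and two such constants ($\beta_{\mathrm{R}},\beta_{\mathrm{I}}$ directions) span the whole line. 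Redo the computation directly on the near-field signal rather than reducing to the far-field lemma, and the result drops out without involving the angle or delay parameters at all.
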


\begin{proof}
See Appendix \ref{Appendix:near-field_channel_parameter_estimatable}.
\end{proof}

\begin{theorem}
\label{theorem:channel_parameter_estimatable}
In the far-field, the estimation and correction of RIS orientation offset based on the received signals at the UE is not possible due to the absence of {\em a priori} information about the channel's complex path gains. However, in the near-field, this correction of RIS orientation offset is not hindered by the absence of {\em a priori} information about the channel complex path gains when $N_U > 1$.
\end{theorem}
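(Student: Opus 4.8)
The plan is to obtain Theorem \ref{theorem:channel_parameter_estimatable} as an immediate corollary of Lemmas \ref{lemma:farfield_channel_parameter_estimatable} and \ref{lemma:near-field_channel_parameter_estimatable}, once the verbal quantities in the statement are matched to the channel parameters in (\ref{equ:channel_params}). Concretely, the \emph{RIS orientation offset} is the orientation vector $\bm{\Phi}_R^{[m]}$, a block of the geometric channel parameters $\bm{\eta}_1^{[m]}$, while the phrase \emph{absence of a priori information about the complex path gains} is precisely the statement that $\beta_{\mathrm{R}}^{[m]}$ and $\beta_{\mathrm{I}}^{[m]}$ (equivalently the magnitude and the unknown phase of $\beta^{[m]}$) carry no prior and enter as nuisance parameters in $\bm{\eta}_2^{[m]}$. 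By Definition \ref{definition:parameter_estimatable}, the orientation offset ``can be estimated and corrected'' means exactly that the EFIM $\mathbf{J}_{\bm{\bm{y}};\bm{\eta}_1^{[m]}}^{\mathrm{e}}$ retains positive definiteness along the coordinates of $\bm{\Phi}_R^{[m]}$.

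For the far-field half I would invoke Lemma \ref{lemma:farfield_channel_parameter_estimatable} directly. Since we are trying to estimate the orientation we hold no prior on it, and by hypothesis we hold no prior on the complex gains, so the lemma asserts the geometric parameters — and hence $\bm{\Phi}_R^{[m]}$ in particular — are not estimatable. To localize the failure on the orientation coordinate itself, I would read the argument through Proposition \ref{proposition:EFIM_channel}, taking one entry of $\bm{\Phi}_R^{[m]}$ as the parameter of interest $\eta_v$ and the unknown-phase coordinate of $\beta^{[m]}$ as the nuisance $\eta_{N_{\eta}}$: the appendix shows the associated $2 \times 2$ sub-EFIM $\mathrm{J}_{\bm{\bm{y}};\Tilde{\bm{\eta}}}^{\mathrm{e}}$ vanishes, whence Proposition \ref{proposition:EFIM_channel} gives $\mathbf{J}_{\bm{\bm{y}};\bm{\eta}_1^{[m]}}^{\mathrm{e}} \nsucc 0$. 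Because adjoining further nuisances can only lower the equivalent information, the scalar information for the orientation entry is then zero, so no unbiased estimator of the orientation offset exists and its correction is impossible irrespective of $N_U$.

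For the near-field half I would read Lemma \ref{lemma:near-field_channel_parameter_estimatable} in contrapositive form. That lemma singles out $N_U < 2$ as the regime in which, absent priors on both gains and orientation, the geometric parameters fail to be estimatable; equivalently, the EFIM analysis in Appendix \ref{Appendix:near-field_channel_parameter_estimatable} yields $\mathbf{J}_{\bm{\bm{y}};\bm{\eta}_1^{[m]}}^{\mathrm{e}} \succ 0$ as soon as $N_U \geq 2$, with no prior on $\beta^{[m]}$ required. As $\bm{\Phi}_R^{[m]}$ is a block of $\bm{\eta}_1^{[m]}$, positive definiteness of this EFIM certifies estimatability of the orientation offset, so for $N_U > 1$ its correction is not hindered by the missing gain prior. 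Assembling the two halves gives the theorem.

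The real content sits in the two appendix computations rather than in this assembly. In the far field one must verify that the gradient of the received signal with respect to the orientation offset is linearly dependent on its gradient with respect to the complex gain, forcing the Schur complement of Proposition \ref{proposition:EFIM_channel} to collapse to zero for every $N_U$; in the near field one must exhibit how a second receive antenna contributes a linearly independent observation that restores a strictly positive EFIM. I expect the far-field collinearity to be the main obstacle, since the orientation enters the signal through both the pathloss of Lemma \ref{lemma:pathloss} and the array responses and must be differentiated via the product rule before the dependence becomes visible — but this is exactly what Lemmas \ref{lemma:farfield_channel_parameter_estimatable} and \ref{lemma:near-field_channel_parameter_estimatable} already certify, leaving the present proof a matter of identification and assembly.
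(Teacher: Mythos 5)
Your proposal is correct and follows essentially the same route as the paper, whose proof of Theorem \ref{theorem:channel_parameter_estimatable} is likewise a direct assembly of Lemmas \ref{lemma:farfield_channel_parameter_estimatable} and \ref{lemma:near-field_channel_parameter_estimatable}. Your additional bookkeeping --- localizing the far-field failure on the orientation coordinate via Proposition \ref{proposition:EFIM_channel} and noting that the positive near-field claim for $N_U>1$ rests on the appendix computation rather than on the literal (negative) statement of Lemma \ref{lemma:near-field_channel_parameter_estimatable} --- is a faithful, slightly more explicit rendering of what the paper leaves implicit.
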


\begin{proof}
The proof follows as a consequence of Lemmas \ref{lemma:farfield_channel_parameter_estimatable} and \ref{lemma:near-field_channel_parameter_estimatable}.
\end{proof}
\begin{remark}
The complex path gain can be normalized through the SNR to reduce from a complex path gain to an unknown phase offset term. Hence, we can restate the above Lemmas and Theorem, focusing only on the {\em a priori} information about the unknown phase offset. 
Because quantifying {\em a priori} information about this unknown phase offset is virtually impossible, Theorem \ref{theorem:channel_parameter_estimatable} practically implies the possibility of estimating and correcting an RIS orientation offset only exists in the near-field.
\end{remark}
\section{FIM for Location Parameters}
\label{section_fim_for_kappa}
We investigate the FIM for location parameters in this section. The UE location parameter is defined as $\bm{\kappa}_{0} = \left[ \bm{p}^{\mathrm{T}}_{U}, \bm{\Phi}^{\mathrm{T}}_{U}  \right]^{\mathrm{T}}$.  This section focuses exclusively on cases in which the vector of geometric channel parameters is estimatatible. The goal is to provide bounds for the UE position while considering the impact of the location of the misoriented and/or misaligned RISs.  Two ways of incorporating the RIS uncertainties are: i) to consider the position and orientation vectors of the RIS as estimation parameters or ii) to consider the misalignment and orientation vectors of the RIS  as estimation parameters.

\subsection{Position and Orientation Vectors of the RISs as Estimation Parameters}
In this scheme, the location parameter related  to the $m^{\text{th}}$ RIS is  $\bm{\kappa}^{[m]} = \left[ \bm{p}^{[m]\mathrm{T}}_{R}, \bm{\Phi}^{[m]\mathrm{T}}_{R}  \right]^{\mathrm{T}}$.  The location parameters of the UE and the location parameters of the uncertain RISs are collectively represented as $\bm{\kappa}_{} \triangleq\left[\bm{\kappa}^{\mathrm{T}}_{0}, \bm{\kappa}^{[1]\mathrm{T}},\cdots,\bm{\kappa}^{[M_1]\mathrm{T}} \right]^{\mathrm{T}}.$
Focusing on the UE location, the parameter $\bm{\kappa}_{}$ can be grouped into the UE location parameter of interest and  RIS location parameters which are not of interest, i.e., $\bm{\kappa}_{} = [\bm{\kappa}_{1}^{\mathrm{T}}, \bm{\kappa}_{2}^{\mathrm{T}}]^{\mathrm{T}}$ where $\bm{\kappa}_{2} = \left[ \bm{\kappa}^{[1]\mathrm{T}},\cdots,\bm{\kappa}^{[M_1]\mathrm{T}} \right]^{\mathrm{T}}$. The FIM from the observations is obtained by the bijective transformation\footnote{Note that this transformation is applied to the EFIM obtained by applying Definition \ref{definition_EFIM} to the FIM of the channel parameters derived from the observations. It is never applied in this paper to the general EFIM.}, $\mathbf{J}_{\bm{y}|\bm{\kappa}} \triangleq \mathbf{\Upsilon}_{\bm{\kappa}} \mathbf{J}_{ \bm{\bm{y}}| \bm{\eta}_1}^{\mathrm{e}} \mathbf{\Upsilon}_{\bm{\kappa}}^{\mathrm{T}}$, where $\mathbf{\Upsilon}_{\bm{k}}$ represents derivatives of the non-linear relationship between the channel parameters and the location parameters\cite{kay1993fundamentals}. The {\em a priori} information about the location parameters is incorporated through the likelihood 
\begin{equation}
\label{equ:prior_location_likelihhood_1}
    \chi(\bm{\kappa}_{})  = \chi( \bm{\kappa}^{}_{0})\prod_{m = 1}^{M_1}\chi( \bm{p}^{[m]}_{R}|\bm{\kappa}^{}_{0}) \chi( \bm{\Phi}^{[m]}_{R}|\bm{p}^{[m]}_{R},\bm{p}^{}_{U}),
\end{equation}
which is obtained by assuming that  the location parameters are independent across different RIS paths and the resulting FIM has the structure
\begin{equation}
\label{equ:prior_location_FIM_1}
\begin{aligned}
&\mathbf{J}_{\bm{\kappa}}=\left[\begin{array}{cccccccc}
\Tilde{\bm{\Xi}}_{\bm{\kappa}^{}_{0},\bm{\kappa}^{}_{0}} &\Tilde{\bm{\Xi}}_{\bm{\kappa}^{}_{0},\bm{\kappa}^{[1]}} & \cdots & \Tilde{\bm{\Xi}}_{\bm{\kappa}^{}_{0},\bm{\kappa}^{[M_1]}}  \\
\Tilde{\bm{\Xi}}_{\bm{\kappa}^{}_{0},\bm{\kappa}^{[1]}}^{\mathrm{T}} &  \Tilde{\bm{\Xi}}_{\bm{\kappa}^{[1]},\bm{\kappa}^{[1]}} & & 0  \\
\vdots & & \ddots & \\
\Tilde{\bm{\Xi}}_{\bm{\kappa}^{}_{0},\bm{\kappa}^{[M_1]}}^{\mathrm{T}} & 0 & &  \Tilde{\bm{\Xi}}_{\bm{\kappa}^{[M_1]},\bm{\kappa}^{[M_1]}}
\end{array}\right],
\end{aligned}
\end{equation}
where $\Tilde{\bm{\Xi}}_{\bm{\kappa}^{}_{0},\bm{\kappa}^{}_{0}} = \bm{\Xi}_{\bm{\kappa}^{}_{0},\bm{\kappa}^{}_{0}} + \sum_{m=1}^{M_1} \bigg[\bm{\Xi}_{\bm{\kappa}^{}_{0},\bm{\kappa}^{}_{0}}^{\bm{p}^{[m]}_R}+\bm{\Xi}_{\bm{\kappa}^{}_{0},\bm{\kappa}^{}_{0}}^{\bm{\Phi}^{[m]}_{R}} \bigg]$, $\Tilde{\bm{\Xi}}_{\bm{\kappa}^{}_{0},\bm{\kappa}^{[m]}} =\bm{\Xi}_{\bm{\kappa}^{}_{0},\bm{\kappa}^{[m]}}^{\bm{p}^{[m]}_R}+\bm{\Xi}_{\bm{\kappa}^{}_{0},\bm{\kappa}^{[m]}}^{\bm{\Phi}^{[m]}_{R}}$, $\bm{\Xi}_{\bm{\kappa}^{}_{0},\bm{\kappa}^{}_{0}} = \bm{F}_{{\bm{\kappa} }}( \bm{\kappa}^{}_{0} ;\bm{\kappa}^{}_{0},\bm{\kappa}^{}_{0})$, $\bm{\Xi}_{\bm{\kappa}^{}_{0},\bm{\kappa}^{}_{0}}^{\bm{p}^{[m]}} = \bm{F}_{{\bm{\kappa} }}(\bm{p}^{[m]}_{R}|\bm{\kappa}^{}_{0} ;\bm{\kappa}^{}_{0},\bm{\kappa}^{}_{0})$, $\bm{\Xi}_{\bm{\kappa}^{}_{0},\bm{\kappa}^{[m]}}^{p^{[m]}} = \bm{F}_{{\bm{\kappa} }}(\bm{p}^{[m]}_{R}|\bm{\kappa}^{}_{0} ;\bm{\kappa}^{}_{0},\bm{\kappa}^{[m]})$. Also, $\bm{\Xi}_{\bm{\kappa}^{}_{0},\bm{\kappa}^{}_{0}}^{\bm{\Phi}^{[m]}_{R}} = \bm{F}_{{\bm{\kappa} }}(\bm{\Phi}^{[m]}_{R}|\bm{p}^{[m]}_{R},\bm{p}^{}_{U} ;\bm{\kappa}^{}_{0},\bm{\kappa}^{}_{0})$ and $\bm{\Xi}_{\bm{\kappa}^{[m]},\bm{\kappa}^{[m]}}^{\bm{\Phi}^{[m]}_{R}} = \bm{F}_{{\bm{\kappa} }}(\bm{\Phi}^{[m]}_{R}|\bm{p}^{[m]}_{R},\bm{p}^{}_{U} ;\bm{\kappa}^{[m]},\bm{\kappa}^{[m]})$. Finally, we write  $\Tilde{\bm{\Xi}}_{\bm{\kappa}^{[m]},\bm{\kappa}^{[m]}} =\bm{\Xi}_{\bm{\kappa}^{[m]},\bm{\kappa}^{[m]}}^{p^{[m]}}+\bm{\Xi}_{\bm{\kappa}^{[m]},\bm{\kappa}^{[m]}}^{\bm{\Phi}^{[m]}_{R}}$. Subsequently, the general FIM, $\mathbf{J}_{\bm{y};\bm{\kappa}}$, is obtained from combining $\mathbf{J}_{\bm{y}|\bm{\kappa}}$ and $\mathbf{J}_{\bm{\kappa}}$ according to definition \ref{definition_FIM_1}.
\begin{lemma}
\label{lemma:prior_near_location_EFIM_1}
The general EFIM,  $\mathbf{J}_{\bm{y};\bm{\kappa}_1}^{\mathrm{e}}$, of the UE location parameters is presented in (\ref{equ:prior_location_EFIM_1}).
\begin{figure*}
\begin{align}
\begin{split}
\label{equ:prior_location_EFIM_1}
&\mathbf{J}_{\bm{y};\bm{\kappa}_1}^{\mathrm{e}}= \sum_{m=0}^{M_1}  \overline{\mathbf{\Upsilon}}_{\bm{\kappa}}^{[m]} \mathbf{J}_{\bm{y}|\bm{\eta}^{[m]}_1}^{\mathrm{e}} \overline{\mathbf{\Upsilon}}_{\bm{\kappa}}^{[m]\mathrm{T}} + \Tilde{\bm{\Xi}}_{\bm{\kappa}^{}_{0},\bm{\kappa}^{}_{0}} - \\
&\sum_{m = 1}^{M_1} \Bigg[ \Bigg[ \overline{\mathbf{\Upsilon}}_{\bm{\kappa}}^{[m]} \mathbf{J}_{\bm{y}|\bm{\eta}^{[m]}_1}^{\mathrm{e}} \overline{\overline{\mathbf{\Upsilon}}}_{\bm{\kappa}}^{[m]\mathrm{T}} + \Tilde{\bm{\Xi}}_{\bm{\kappa}^{}_{0},\bm{\kappa}^{[m]}} \Bigg] \left(\overline{\overline{\mathbf{\Upsilon}}}_{\bm{\kappa}}^{[m]} \mathbf{J}_{\bm{y}|\bm{\eta}^{[m]}_1}^{\mathrm{e}} \overline{\overline{\mathbf{\Upsilon}}}_{\bm{\kappa}}^{[m]\mathrm{T}} + \Tilde{\bm{\Xi}}_{\bm{\kappa}^{[m]},\bm{\kappa}^{[m]}}\right)^{-1}\Bigg[ \overline{\mathbf{\Upsilon}}_{\bm{\kappa}}^{[m]} \mathbf{J}_{\bm{y}|\bm{\eta}^{[m]}_1}^{\mathrm{e}} \overline{\overline{\mathbf{\Upsilon}}}_{\bm{\kappa}}^{[m]\mathrm{T}} + \Tilde{\bm{\Xi}}_{\bm{\kappa}^{}_{0},\bm{\kappa}^{[m]}} \Bigg]^{\mathrm{T}}\Bigg],
\end{split}
\end{align}
\end{figure*}
In that equation, $\overline{\mathbf{\Upsilon}}_{\bm{\kappa}}^{[0]}$ relates the LOS channel parameters to the UE location parameters, $\overline{\mathbf{\Upsilon}}_{\bm{\kappa}}^{[m]}$ relates the channel parameters of the $m^{\text{th}}$ RIS  to the UE location parameters, and $\overline{\overline{\mathbf{\Upsilon}}}_{\bm{\kappa}}^{[m]}$ relates the channel parameters of the $m^{\text{th}}$ RIS  to the location parameters of the $m^{\text{th}}$ RIS. The matrices $\overline{\mathbf{\Upsilon}}_{\bm{\kappa}}^{[0]}$, $\overline{\mathbf{\Upsilon}}_{\bm{\kappa}}^{[m]}$, and $\overline{\overline{\mathbf{\Upsilon}}}_{\bm{\kappa}}^{[m]}$ are submatrices in ${{\mathbf{\Upsilon}}}_{\bm{\kappa}}$.
\end{lemma}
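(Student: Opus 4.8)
The plan is to build the full general FIM $\mathbf{J}_{\bm{y};\bm{\kappa}}$ for $\bm{\kappa} = [\bm{\kappa}_1^{\mathrm{T}}, \bm{\kappa}_2^{\mathrm{T}}]^{\mathrm{T}}$ with $\bm{\kappa}_1 = \bm{\kappa}_0$ and $\bm{\kappa}_2 = [\bm{\kappa}^{[1]\mathrm{T}}, \cdots, \bm{\kappa}^{[M_1]\mathrm{T}}]^{\mathrm{T}}$, and then extract the UE block via the Schur-complement formula of Definition \ref{definition_EFIM}. First I would record the dependency structure of the channel-to-location map: the LOS geometric parameters $\bm{\eta}_1^{[0]}$ depend only on $\bm{\kappa}_0$, whereas the $m^{\text{th}}$ RIS parameters $\bm{\eta}_1^{[m]}$ depend on both $\bm{\kappa}_0$ and $\bm{\kappa}^{[m]}$ but on no other RIS location. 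Hence the Jacobian $\mathbf{\Upsilon}_{\bm{\kappa}}$ is sparse: its $\bm{\kappa}_0$-row carries $\overline{\mathbf{\Upsilon}}_{\bm{\kappa}}^{[0]}, \overline{\mathbf{\Upsilon}}_{\bm{\kappa}}^{[1]}, \cdots, \overline{\mathbf{\Upsilon}}_{\bm{\kappa}}^{[M_1]}$ across all channel-parameter columns, while its $\bm{\kappa}^{[m]}$-row contains only $\overline{\overline{\mathbf{\Upsilon}}}_{\bm{\kappa}}^{[m]}$ in the $\bm{\eta}_1^{[m]}$-column.

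Second, I would propagate this through the congruence $\mathbf{J}_{\bm{y}|\bm{\kappa}} = \mathbf{\Upsilon}_{\bm{\kappa}} \mathbf{J}_{\bm{y}|\bm{\eta}_1}^{\mathrm{e}} \mathbf{\Upsilon}_{\bm{\kappa}}^{\mathrm{T}}$. Because Assumption \ref{assumption:RIS_restrict} renders $\mathbf{J}_{\bm{y}|\bm{\eta}_1}^{\mathrm{e}} = \text{diag}[\mathbf{J}_{\bm{y}|\bm{\eta}_1^{[0]}}^{\mathrm{e}}, \cdots, \mathbf{J}_{\bm{y}|\bm{\eta}_1^{[M_1]}}^{\mathrm{e}}]$ block-diagonal by path, the blocks of $\mathbf{J}_{\bm{y}|\bm{\kappa}}$ fall out immediately: the $(\bm{\kappa}_0,\bm{\kappa}_0)$ block is $\sum_{m=0}^{M_1} \overline{\mathbf{\Upsilon}}_{\bm{\kappa}}^{[m]} \mathbf{J}_{\bm{y}|\bm{\eta}_1^{[m]}}^{\mathrm{e}} \overline{\mathbf{\Upsilon}}_{\bm{\kappa}}^{[m]\mathrm{T}}$, the $(\bm{\kappa}_0,\bm{\kappa}^{[m]})$ block is $\overline{\mathbf{\Upsilon}}_{\bm{\kappa}}^{[m]} \mathbf{J}_{\bm{y}|\bm{\eta}_1^{[m]}}^{\mathrm{e}} \overline{\overline{\mathbf{\Upsilon}}}_{\bm{\kappa}}^{[m]\mathrm{T}}$, the $(\bm{\kappa}^{[m]},\bm{\kappa}^{[m]})$ block is $\overline{\overline{\mathbf{\Upsilon}}}_{\bm{\kappa}}^{[m]} \mathbf{J}_{\bm{y}|\bm{\eta}_1^{[m]}}^{\mathrm{e}} \overline{\overline{\mathbf{\Upsilon}}}_{\bm{\kappa}}^{[m]\mathrm{T}}$, and every $(\bm{\kappa}^{[m_1]},\bm{\kappa}^{[m_2]})$ cross block with $m_1 \neq m_2$ vanishes. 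Adding the prior $\mathbf{J}_{\bm{\kappa}}$ of (\ref{equ:prior_location_FIM_1}), whose RIS--RIS cross blocks are likewise zero by the factorization in (\ref{equ:prior_location_likelihhood_1}), yields a general FIM whose $\bm{\kappa}_2$-block $\mathbf{J}_{\bm{y};\bm{\kappa}_2}$ is block-diagonal with blocks $\mathbf{C}^{[m]} = \overline{\overline{\mathbf{\Upsilon}}}_{\bm{\kappa}}^{[m]} \mathbf{J}_{\bm{y}|\bm{\eta}_1^{[m]}}^{\mathrm{e}} \overline{\overline{\mathbf{\Upsilon}}}_{\bm{\kappa}}^{[m]\mathrm{T}} + \Tilde{\bm{\Xi}}_{\bm{\kappa}^{[m]},\bm{\kappa}^{[m]}}$, and whose $(\bm{\kappa}_1,\bm{\kappa}_2)$ block is the row $[\mathbf{B}^{[1]}, \cdots, \mathbf{B}^{[M_1]}]$ with $\mathbf{B}^{[m]} = \overline{\mathbf{\Upsilon}}_{\bm{\kappa}}^{[m]} \mathbf{J}_{\bm{y}|\bm{\eta}_1^{[m]}}^{\mathrm{e}} \overline{\overline{\mathbf{\Upsilon}}}_{\bm{\kappa}}^{[m]\mathrm{T}} + \Tilde{\bm{\Xi}}_{\bm{\kappa}_0,\bm{\kappa}^{[m]}}$.

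Finally I would apply Definition \ref{definition_EFIM}, $\mathbf{J}_{\bm{y};\bm{\kappa}_1}^{\mathrm{e}} = \mathbf{J}_{\bm{y};\bm{\kappa}_1} - \mathbf{J}_{\bm{y};\bm{\kappa}_1,\bm{\kappa}_2} \mathbf{J}_{\bm{y};\bm{\kappa}_2}^{-1} \mathbf{J}_{\bm{y};\bm{\kappa}_1,\bm{\kappa}_2}^{\mathrm{T}}$. Since $\mathbf{J}_{\bm{y};\bm{\kappa}_2}$ is block-diagonal, its inverse is the block-diagonal collection of the $(\mathbf{C}^{[m]})^{-1}$, so the quadratic correction factorizes into the single sum $\sum_{m=1}^{M_1} \mathbf{B}^{[m]} (\mathbf{C}^{[m]})^{-1} \mathbf{B}^{[m]\mathrm{T}}$; substituting the block expressions for $\mathbf{B}^{[m]}$ and $\mathbf{C}^{[m]}$ and the $(\bm{\kappa}_0,\bm{\kappa}_0)$ block reproduces (\ref{equ:prior_location_EFIM_1}) verbatim, with the $\Tilde{\bm{\Xi}}_{\bm{\kappa}_0,\bm{\kappa}_0}$ term supplied by the prior. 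The step requiring the most care is confirming that \emph{both} sources of RIS--RIS coupling genuinely vanish — the observation coupling by the path-separability of Assumption \ref{assumption:RIS_restrict} and the prior coupling by the factorized likelihood (\ref{equ:prior_location_likelihhood_1}) — because it is exactly this simultaneous block-diagonality of $\mathbf{J}_{\bm{y};\bm{\kappa}_2}$ that collapses what would otherwise be a dense $M_1$-block Schur complement into the clean per-RIS sum asserted by the lemma.
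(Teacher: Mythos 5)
Your proposal is correct and follows exactly the route the paper takes: the paper's proof is the one-line statement that the result follows by applying Definition \ref{definition_EFIM} to $\mathbf{J}_{\bm{y};\bm{\kappa}}$, and your argument simply fills in the details of that Schur complement, correctly identifying the arrowhead structure (block-diagonal $\bm{\kappa}_2$-block from Assumption \ref{assumption:RIS_restrict} and the factorized prior (\ref{equ:prior_location_likelihhood_1})) that collapses the correction term into a per-RIS sum.
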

\begin{proof}
The proof directly follows by applying Definition \ref{definition_EFIM} to  the general FIM,  $\mathbf{J}_{\bm{y};\bm{\kappa}}$.
\end{proof}
\subsection{Misalignment and Orientation Vectors of the RISs  as Estimation Parameters}
This scheme defines the location parameter related  to the $m^{\text{th}}$ RIS as  $\bm{\zeta}^{[m]} = \left[ \bm{\xi}^{[m]\mathrm{T}}_{R}, \bm{\Phi}^{[m]\mathrm{T}}_{R}  \right]^{\mathrm{T}}$. The UE location parameter, $\bm{\kappa}_0$, and the  location parameters of the RISs are specified in vector form as $\bm{\zeta}_{} \triangleq\left[\bm{\kappa}^{\mathrm{T}}_{0}, \bm{\zeta}^{[1]\mathrm{T}},\cdots,\bm{\zeta}^{[M_1]\mathrm{T}} \right]^{\mathrm{T}}.$
Specifying the parameter of interest as the UE location parameter, the parameter vector can be structured as $\bm{\zeta}_{} = [\bm{\zeta}_{1}^{\mathrm{T}}, \bm{\zeta}_{2}^{\mathrm{T}}]^{\mathrm{T}}$ where $\bm{\zeta}_{1} = \bm{\kappa}_{0}$ and $\bm{\zeta}_{2} = \left[ \bm{\zeta}^{[1]\mathrm{T}},\cdots,\bm{\zeta}^{[M_1]\mathrm{T}} \right]^{\mathrm{T}}$. Similar to the first scheme, the FIM from the observations is obtained by the bijective transformation, $\mathbf{J}_{\bm{y}|\bm{\zeta}} \triangleq \mathbf{\Upsilon}_{\bm{\zeta}} \mathbf{J}_{ \bm{\bm{y}}| \bm{\eta}_1}^{\mathrm{e}} \mathbf{\Upsilon}_{\bm{\zeta}}^{\mathrm{T}}$, where $\mathbf{\Upsilon}_{\bm{\zeta}}$ represents derivatives of the non-linear relationship between the geometric channel parameters, $ \bm{\eta}_1$, and the location parameters\cite{kay1993fundamentals}. In this scheme, {\em a priori} information is incorporated through a likelihood function similar to (\ref{equ:prior_location_likelihhood_1}) and the resulting FIM has identical structure to (\ref{equ:prior_location_FIM_1}), where the submatrices are defined similarly.

\begin{lemma}
\label{lemma:prior_near_location_EFIM_2}
The general EFIM,  $\mathbf{J}_{\bm{y};\bm{\zeta}_1}^{\mathrm{e}}$, of the UE location parameters is presented in (\ref{equ:prior_location_EFIM_2}).
\begin{figure*}
\begin{align}
\begin{split}
\label{equ:prior_location_EFIM_2}
&\mathbf{J}_{\bm{y};\bm{\zeta}_1}^{\mathrm{e}}= \sum_{m=0}^{M_1}  \overline{\mathbf{\Upsilon}}_{\bm{\zeta}}^{[m]} \mathbf{J}_{\bm{y}|\bm{\eta}^{[m]}_1}^{\mathrm{e}} \overline{\mathbf{\Upsilon}}_{\bm{\zeta}}^{[m]\mathrm{T}} + \Tilde{\bm{\Xi}}_{\bm{\zeta}^{}_{0},\bm{\zeta}^{}_{0}} - \\
&\sum_{m = 1}^{M_1} \Bigg[ \Bigg[ \overline{\mathbf{\Upsilon}}_{\bm{\zeta}}^{[m]} \mathbf{J}_{\bm{y}|\bm{\eta}^{[m]}_1}^{\mathrm{e}} \overline{\overline{\mathbf{\Upsilon}}}_{\bm{\zeta}}^{[m]\mathrm{T}} + \Tilde{\bm{\Xi}}_{\bm{\zeta}^{}_{0},\bm{\zeta}^{[m]}} \Bigg] \left(\overline{\overline{\mathbf{\Upsilon}}}_{\bm{\zeta}}^{[m]} \mathbf{J}_{\bm{y}|\bm{\eta}^{[m]}_1}^{\mathrm{e}} \overline{\overline{\mathbf{\Upsilon}}}_{\bm{\zeta}}^{[m]\mathrm{T}} + \Tilde{\bm{\Xi}}_{\bm{\zeta}^{[m]},\bm{\zeta}^{[m]}}\right)^{-1}\Bigg[ \overline{\mathbf{\Upsilon}}_{\bm{\zeta}}^{[m]} \mathbf{J}_{\bm{y}|\bm{\eta}^{[m]}_1}^{\mathrm{e}} \overline{\overline{\mathbf{\Upsilon}}}_{\bm{\zeta}}^{[m]\mathrm{T}} + \Tilde{\bm{\Xi}}_{\bm{\zeta}^{}_{0},\bm{\zeta}^{[m]}} \Bigg]^{\mathrm{T}}\Bigg].
\end{split}
\end{align}
\end{figure*}
In that equation, $\overline{\mathbf{\Upsilon}}_{\bm{\zeta}}^{[0]}$ relates the LOS channel parameters to the UE location parameters, $\overline{\mathbf{\Upsilon}}_{\bm{\zeta}}^{[m]}$ relates the channel parameters of the $m^{\text{th}}$ RIS  to the UE location parameters, and $\overline{\overline{\mathbf{\Upsilon}}}_{\bm{\zeta}}^{[m]}$ relates the channel parameters of the $m^{\text{th}}$ RIS  to the location parameters of the $m^{\text{th}}$ RIS. The matrices $\overline{\mathbf{\Upsilon}}_{\bm{\zeta}}^{[0]}$, $\overline{\mathbf{\Upsilon}}_{\bm{\zeta}}^{[m]}$, and $\overline{\overline{\mathbf{\Upsilon}}}_{\bm{\zeta}}^{[m]}$ are submatrices in ${{\mathbf{\Upsilon}}}_{\bm{\zeta}}$.
\end{lemma}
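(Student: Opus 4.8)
The plan is to follow the same route as in the proof of Lemma~\ref{lemma:prior_near_location_EFIM_1}, observing that the present scheme differs only in the choice of RIS location variables---the misalignment $\bm{\xi}^{[m]}_R$ together with the orientation $\bm{\Phi}^{[m]}_R$, rather than the position $\bm{p}^{[m]}_R$ together with the orientation---so that the algebraic structure is unchanged and only the bijective map $\mathbf{\Upsilon}_{\bm{\zeta}}$ replaces $\mathbf{\Upsilon}_{\bm{\kappa}}$. First I would assemble the general FIM $\mathbf{J}_{\bm{y};\bm{\zeta}} = \mathbf{J}_{\bm{y}|\bm{\zeta}} + \mathbf{J}_{\bm{\zeta}}$ according to Definition~\ref{definition_FIM_1}, using the bijective transformation $\mathbf{J}_{\bm{y}|\bm{\zeta}} = \mathbf{\Upsilon}_{\bm{\zeta}} \mathbf{J}_{\bm{y}|\bm{\eta}_1}^{\mathrm{e}} \mathbf{\Upsilon}_{\bm{\zeta}}^{\mathrm{T}}$ for the observation part and the arrowhead-structured prior $\mathbf{J}_{\bm{\zeta}}$ whose form mirrors~(\ref{equ:prior_location_FIM_1}).

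Next I would exploit two structural facts. By Assumption~\ref{assumption:RIS_restrict} the channel-parameter EFIM is block-diagonal across paths, $\mathbf{J}_{\bm{y}|\bm{\eta}_1}^{\mathrm{e}} = \text{diag}[\mathbf{J}_{\bm{y}|\bm{\eta}_1^{[0]}}^{\mathrm{e}}, \ldots, \mathbf{J}_{\bm{y}|\bm{\eta}_1^{[M_1]}}^{\mathrm{e}}]$, and the map $\mathbf{\Upsilon}_{\bm{\zeta}}$ splits, through its submatrices $\overline{\mathbf{\Upsilon}}_{\bm{\zeta}}^{[m]}$ and $\overline{\overline{\mathbf{\Upsilon}}}_{\bm{\zeta}}^{[m]}$, into the part carrying each path's information into the UE location $\bm{\zeta}_1 = \bm{\kappa}_0$ and the part carrying it into the corresponding RIS location $\bm{\zeta}^{[m]}$. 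Propagating the block-diagonal EFIM through $\mathbf{\Upsilon}_{\bm{\zeta}}$ and adding the prior then yields a $\bm{\zeta}_1$-block equal to $\sum_{m=0}^{M_1} \overline{\mathbf{\Upsilon}}_{\bm{\zeta}}^{[m]} \mathbf{J}_{\bm{y}|\bm{\eta}_1^{[m]}}^{\mathrm{e}} \overline{\mathbf{\Upsilon}}_{\bm{\zeta}}^{[m]\mathrm{T}} + \Tilde{\bm{\Xi}}_{\bm{\zeta}_0,\bm{\zeta}_0}$, a UE--RIS coupling block for each $m$ equal to $\overline{\mathbf{\Upsilon}}_{\bm{\zeta}}^{[m]} \mathbf{J}_{\bm{y}|\bm{\eta}_1^{[m]}}^{\mathrm{e}} \overline{\overline{\mathbf{\Upsilon}}}_{\bm{\zeta}}^{[m]\mathrm{T}} + \Tilde{\bm{\Xi}}_{\bm{\zeta}_0,\bm{\zeta}^{[m]}}$, and a diagonal RIS block equal to $\overline{\overline{\mathbf{\Upsilon}}}_{\bm{\zeta}}^{[m]} \mathbf{J}_{\bm{y}|\bm{\eta}_1^{[m]}}^{\mathrm{e}} \overline{\overline{\mathbf{\Upsilon}}}_{\bm{\zeta}}^{[m]\mathrm{T}} + \Tilde{\bm{\Xi}}_{\bm{\zeta}^{[m]},\bm{\zeta}^{[m]}}$. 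Crucially, there is no coupling between distinct RIS blocks: the observation FIM contributes none by block-diagonality, and the prior contributes none by independence across RIS paths, so the $\bm{\zeta}_2$-submatrix $\mathbf{J}_{\bm{y};\bm{\zeta}_2}$ is itself block-diagonal.

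Finally I would apply the Schur complement of Definition~\ref{definition_EFIM}, namely $\mathbf{J}_{\bm{y};\bm{\zeta}_1}^{\mathrm{e}} = \mathbf{J}_{\bm{y};\bm{\zeta}_1} - \mathbf{J}_{\bm{y};\bm{\zeta}_1,\bm{\zeta}_2} \mathbf{J}_{\bm{y};\bm{\zeta}_2}^{-1} \mathbf{J}_{\bm{y};\bm{\zeta}_1,\bm{\zeta}_2}^{\mathrm{T}}$. Because $\mathbf{J}_{\bm{y};\bm{\zeta}_2}$ is block-diagonal, its inverse is block-diagonal and the quadratic correction decouples into the per-RIS sum over $m = 1,\ldots,M_1$, reproducing~(\ref{equ:prior_location_EFIM_2}). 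The main obstacle is bookkeeping rather than analysis: one must verify that $\mathbf{\Upsilon}_{\bm{\zeta}}$ distributes each path's channel information into exactly the claimed UE-location and single-RIS-location sub-blocks, and confirm the vanishing of every cross-RIS entry so that the Schur complement collapses cleanly. Given the strict parallel with Lemma~\ref{lemma:prior_near_location_EFIM_1}, no new computation beyond this structural matching is required.
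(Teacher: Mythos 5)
Your proposal is correct and follows essentially the same route as the paper, which simply applies Definition \ref{definition_EFIM} (the Schur complement) to the general FIM $\mathbf{J}_{\bm{y};\bm{\zeta}}$ assembled from the transformed observation EFIM and the prior; you merely spell out the block-diagonality of the $\bm{\zeta}_2$-submatrix that makes the correction term decouple into the per-RIS sum. No gap.
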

\begin{proof}
The proof directly follows by applying Definition \ref{definition_EFIM} to  the general FIM,  $\mathbf{J}_{\bm{y};\bm{\zeta}}$.
\end{proof}

Lemma \ref{lemma:prior_near_location_EFIM_1} and Lemma \ref{lemma:prior_near_location_EFIM_2} apply to both near and far field propagation regimes. The entries in the matrix in these lemmas related to the $m^{\text{th}}$ RIS depend on whether the UE is experiencing near or far field propagation with respect to that particular RIS. All the UEs could be in the near-field of all the RISs, and all the UEs could be in the far-field of all the RISs. Finally, there can be some UEs in the near-field of some RISs, and there can be some UEs in the far-field of some RISs.
\color{black}
\begin{theorem}
The differentiating factor of $\mathbf{J}_{\bm{y};\bm{\kappa}_1}^{\mathrm{e}}$ from $\mathbf{J}_{\bm{y};\bm{\zeta}_1}^{\mathrm{e}}$ is the {\em a priori} information available about both set of parameters. If the off diagonals in both {\em a priori} information matrices are zero matrices then $\mathbf{J}_{\bm{y};\bm{\kappa}_1}^{\mathrm{e}} \succeq \mathbf{J}_{\bm{y};\bm{\zeta}_1}^{\mathrm{e}}$ iff $\mathbf{J}_{\bm{\kappa}} \succeq \mathbf{J}_{\bm{\zeta}}$.
\end{theorem}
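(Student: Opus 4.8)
The plan is to reduce the comparison of the two equivalent FIMs to a comparison of their \emph{a priori} matrices alone, by first showing that the two parameterizations induce \emph{identical} observation FIMs. The key structural fact is that RIS position and RIS misalignment differ only by the fixed deployment offset: from \eqref{equ:entity_element}, $\bm{p}_{R}^{[m]} = \Tilde{\bm{p}}_{R}^{[m]} + \bm{\xi}_{R}^{[m]}$ with $\Tilde{\bm{p}}_{R}^{[m]}$ constant. Hence the derivative of any geometric channel parameter with respect to $\bm{p}_{R}^{[m]}$ equals its derivative with respect to $\bm{\xi}_{R}^{[m]}$, and since the remaining location variables $\bm{\Phi}_{R}$, $\bm{p}_{U}$, $\bm{\Phi}_{U}$ are shared, the Jacobians coincide, $\mathbf{\Upsilon}_{\bm{\kappa}} = \mathbf{\Upsilon}_{\bm{\zeta}}$. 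Consequently $\mathbf{J}_{\bm{y}|\bm{\kappa}} = \mathbf{\Upsilon}_{\bm{\kappa}} \mathbf{J}_{\bm{y}|\bm{\eta}_1}^{\mathrm{e}} \mathbf{\Upsilon}_{\bm{\kappa}}^{\mathrm{T}} = \mathbf{J}_{\bm{y}|\bm{\zeta}}$; I denote this common matrix $\mathbf{J}_{\bm{y}|\cdot}$.

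By Definition~\ref{definition_FIM_1} the two general FIMs are $\mathbf{J}_{\bm{y};\bm{\kappa}} = \mathbf{J}_{\bm{y}|\cdot} + \mathbf{J}_{\bm{\kappa}}$ and $\mathbf{J}_{\bm{y};\bm{\zeta}} = \mathbf{J}_{\bm{y}|\cdot} + \mathbf{J}_{\bm{\zeta}}$, so their difference is \emph{exactly} the difference of the \emph{a priori} matrices, $\mathbf{J}_{\bm{y};\bm{\kappa}} - \mathbf{J}_{\bm{y};\bm{\zeta}} = \mathbf{J}_{\bm{\kappa}} - \mathbf{J}_{\bm{\zeta}}$; in particular $\mathbf{J}_{\bm{y};\bm{\kappa}} \succeq \mathbf{J}_{\bm{y};\bm{\zeta}}$ iff $\mathbf{J}_{\bm{\kappa}} \succeq \mathbf{J}_{\bm{\zeta}}$. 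To pass to the EFIMs I would use the block-inverse characterization of Definition~\ref{definition_EFIM}, $(\mathbf{J}_{\bm{y};\bm{\kappa}_1}^{\mathrm{e}})^{-1} = [\mathbf{J}_{\bm{y};\bm{\kappa}}^{-1}]_{[1:n,1:n]}$, and likewise for $\bm{\zeta}$. The forward implication then chains three order-preserving steps: $\mathbf{J}_{\bm{\kappa}} \succeq \mathbf{J}_{\bm{\zeta}}$ gives $\mathbf{J}_{\bm{y};\bm{\kappa}} \succeq \mathbf{J}_{\bm{y};\bm{\zeta}} \succ 0$; antitonicity of inversion gives $\mathbf{J}_{\bm{y};\bm{\kappa}}^{-1} \preceq \mathbf{J}_{\bm{y};\bm{\zeta}}^{-1}$; restricting to the leading $n \times n$ principal block preserves this; and inverting once more yields $\mathbf{J}_{\bm{y};\bm{\kappa}_1}^{\mathrm{e}} \succeq \mathbf{J}_{\bm{y};\bm{\zeta}_1}^{\mathrm{e}}$. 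This is just monotonicity of the Schur complement, and it uses the hypothesis only through $\mathbf{J}_{\bm{y};\bm{\kappa}} \succeq \mathbf{J}_{\bm{y};\bm{\zeta}}$.

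The reverse implication is where the zero-off-diagonal hypothesis is essential, and it is the main obstacle. With both \emph{a priori} matrices block diagonal in the UE/RIS partition, the shared observation term forces the two general FIMs to have the \emph{same} UE--RIS coupling block $\bm{G} \triangleq [\mathbf{J}_{\bm{y}|\cdot}]_{[1:n,\,n+1:N]}$; writing $\bm{K}_{\bm{\kappa}}, \bm{K}_{\bm{\zeta}}$ for the respective RIS blocks $[\mathbf{J}_{\bm{y};\cdot}]_{[n+1:N,\,n+1:N]}$, substituting $\Tilde{\bm{\Xi}}_{\bm{\kappa}_{0},\bm{\kappa}^{[m]}} = \Tilde{\bm{\Xi}}_{\bm{\zeta}_{0},\bm{\zeta}^{[m]}} = \bm{0}$ into Lemmas~\ref{lemma:prior_near_location_EFIM_1} and~\ref{lemma:prior_near_location_EFIM_2} collapses the EFIM difference to
\[
\mathbf{J}_{\bm{y};\bm{\kappa}_1}^{\mathrm{e}} - \mathbf{J}_{\bm{y};\bm{\zeta}_1}^{\mathrm{e}} = \big(\Tilde{\bm{\Xi}}_{\bm{\kappa}^{}_{0},\bm{\kappa}^{}_{0}} - \Tilde{\bm{\Xi}}_{\bm{\zeta}^{}_{0},\bm{\zeta}^{}_{0}}\big) + \bm{G}\big(\bm{K}_{\bm{\zeta}}^{-1} - \bm{K}_{\bm{\kappa}}^{-1}\big)\bm{G}^{\mathrm{T}},
\]
with no surviving cross terms. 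From $\mathbf{J}_{\bm{y};\bm{\kappa}_1}^{\mathrm{e}} \succeq \mathbf{J}_{\bm{y};\bm{\zeta}_1}^{\mathrm{e}}$ I would argue that the UE-block prior difference and each per-RIS prior difference $\Tilde{\bm{\Xi}}_{\bm{\kappa}^{[m]},\bm{\kappa}^{[m]}} - \Tilde{\bm{\Xi}}_{\bm{\zeta}^{[m]},\bm{\zeta}^{[m]}}$ are positive semidefinite, which reassembles $\mathbf{J}_{\bm{\kappa}} \succeq \mathbf{J}_{\bm{\zeta}}$ by block diagonality. The hard point is exactly the last inference: $\bm{G}(\bm{K}_{\bm{\zeta}}^{-1} - \bm{K}_{\bm{\kappa}}^{-1})\bm{G}^{\mathrm{T}} \succeq 0$ recovers the ordering of $\bm{K}_{\bm{\kappa}}$ and $\bm{K}_{\bm{\zeta}}$ only on the range of $\bm{G}^{\mathrm{T}}$, so the reverse direction genuinely requires the observation coupling $\bm{G}$ to be non-degenerate (full column rank on the RIS block), and I would state it under that condition. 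Absent such non-degeneracy only the forward direction survives, which is precisely the one the localization conclusion — that the RISs help only when their \emph{a priori} location information dominates — actually relies on.
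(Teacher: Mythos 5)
Your opening step is precisely the paper's entire proof: the paper reduces the theorem to showing $\mathbf{\Upsilon}_{\bm{\kappa}} = \mathbf{\Upsilon}_{\bm{\zeta}}$ by inspecting the derivatives in Appendix~\ref{appendix:entries_of_trans} (where $\nabla_{\bm{p}_V} d_{\bm{g}_g\bm{v}_v} = \nabla_{\bm{\xi}_V} d_{\bm{g}_g\bm{v}_v}$ because $\bm{p}_V = \Tilde{\bm{p}}_V + \bm{\xi}_V$ with $\Tilde{\bm{p}}_V$ fixed), and then asserts that the rest follows ``due to the structure'' of the two EFIMs. You reproduce that step faithfully and then supply the substance the paper omits: the forward implication via monotonicity of the Schur complement under a PSD ordering of block-diagonal priors is correct and is exactly what is needed to make the paper's assertion rigorous. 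What each approach buys is clear --- the paper's one-liner establishes only that the observation contributions cancel, leaving the comparison to the priors; your version actually proves the direction of the equivalence that the localization conclusion relies on.

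Your treatment of the reverse implication identifies a genuine issue, and it is an issue with the theorem as stated rather than with your argument: the paper's proof never addresses it. Two things go wrong there. First, as you note, $\bm{G}\bigl(\bm{K}_{\bm{\zeta}}^{-1} - \bm{K}_{\bm{\kappa}}^{-1}\bigr)\bm{G}^{\mathrm{T}} \succeq 0$ only constrains $\bm{K}_{\bm{\zeta}}^{-1} - \bm{K}_{\bm{\kappa}}^{-1}$ on the range of $\bm{G}^{\mathrm{T}}$, so a rank condition on the UE--RIS coupling is unavoidable. Second --- and you assert rather than prove this --- positive semidefiniteness of the \emph{sum} $\bigl(\Tilde{\bm{\Xi}}_{\bm{\kappa}_0,\bm{\kappa}_0} - \Tilde{\bm{\Xi}}_{\bm{\zeta}_0,\bm{\zeta}_0}\bigr) + \bm{G}\bigl(\bm{K}_{\bm{\zeta}}^{-1} - \bm{K}_{\bm{\kappa}}^{-1}\bigr)\bm{G}^{\mathrm{T}}$ does not force each summand to be PSD separately; a negative UE-block prior difference can be compensated by the coupling term. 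So the reverse direction needs either an independence argument (e.g., letting the UE and RIS priors vary separately) or should simply be dropped. Since the paper's proof is silent on all of this, your proposal is strictly more careful than the paper's; I would only ask you to make explicit that the decomposition into two individually PSD terms is an additional hypothesis, not a consequence of the EFIM ordering.
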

\begin{proof}

Due to the structure of $\mathbf{J}_{\bm{y};\bm{\kappa}_1}^{\mathrm{e}}$ and $\mathbf{J}_{\bm{y};\bm{\zeta}_1}^{\mathrm{e}}$, when the off diagonals are zero, we only have to prove that $\mathbf{\Upsilon}_{\bm{\kappa}} =\mathbf{\Upsilon}_{\bm{\zeta}}$. This can be seen by noting that when $\mathbf{\Upsilon}_{\bm{\kappa}} =\mathbf{\Upsilon}_{\bm{\zeta}}$, the values of  $\mathbf{J}_{\bm{y};\bm{\kappa}_1}^{\mathrm{e}}$ and $\mathbf{J}_{\bm{y};\bm{\zeta}_1}^{\mathrm{e}}$ depends on the diagonal terms in,  $\mathbf{J}_{\bm{\kappa}}$ and $ \mathbf{J}_{\bm{\zeta}}$, respectively.  The diagonal entries in $\mathbf{J}_{\bm{\kappa}}$ and $ \mathbf{J}_{\bm{\zeta}}$ are $\Tilde{\bm{\Xi}}_{\bm{\kappa}^{[m]},\bm{\kappa}^{[m]}}$ and $\Tilde{\bm{\Xi}}_{\bm{\zeta}^{[m]},\bm{\zeta}^{[m]}}$, respectively. 
The proof for $\mathbf{\Upsilon}_{\bm{\kappa}} =\mathbf{\Upsilon}_{\bm{\zeta}}$ is found by observing the derivatives in Appendix F.\color{black}
\end{proof}
\begin{remark}
\label{remark_no_priori_RIS}
From Lemmas \ref{lemma:prior_near_location_EFIM_1} and \ref{lemma:prior_near_location_EFIM_2}, as shown in Appendix \ref{appendix_no_priori_RIS}, irrespective of the RIS size,  near-field or far-field propagation, an RIS is only useful for localization when there is {\em a priori} information about the RIS location. 
\end{remark}

The position error bound (PEB) and orientation error bound (OEB) are obtained by inverting either $\mathbf{J}_{\bm{y};\bm{\zeta}_1}^{\mathrm{e}}$  and $\mathbf{J}_{\bm{y};\bm{\zeta}_1}^{\mathrm{e}}$, then summing the appropriate diagonals. For the PEB of a UE, the first three diagonals in either  $(\mathbf{J}_{\bm{y};\bm{\zeta}_1}^{\mathrm{e}})^{-1}$  and $(\mathbf{J}_{\bm{y};\bm{\zeta}_1}^{\mathrm{e}})^{-1}$ are summed, while for the OEB of a UE, the fourth, fifth, and sixth diagonals in either  $(\mathbf{J}_{\bm{y};\bm{\zeta}_1}^{\mathrm{e}})^{-1}$  and $(\mathbf{J}_{\bm{y};\bm{\zeta}_1}^{\mathrm{e}})^{-1}$ are summed.
\color{black}


\section{Numerical Results}
This section uses Monte-Carlo simulations 
 to verify  analytical results and obtain useful system design insights. The system setup includes a single perfectly located BS whose centroid is also the global origin of the coordinate system i.e., ${\bm{p}}_{{B}}= [0, 0, 0]^{\mathrm{T}} = \Tilde{\bm{p}}_{{B}}$ and $\bm{Q}_B = \bm{I}$. All position vectors are in meters, and all orientation vectors are in radians. The system setup includes an LOS path and two RISs that reflect the transmitted signal to the UE. While the LOS pathloss is described by the free-space propagation model, the pathloss of the RIS paths is described by Lemma \ref{lemma:pathloss} with the gain controlling factor, $q_0 = 0.285$ and the element efficiency, $\epsilon_p = 0.5$ \cite{ellingson2021path}. The wavelength is $\lambda = 3 \text{ cm}$ with the elements in each RIS spaced by $1.5 \text{ cm}$. There are $N = 256$ subcarriers, a single antenna at the BS and no transmit beamforming, the transmit power is $ 23 \text{ dBm}$, the noise power spectral density (PSD) is $N_0 = -174 \text{ dBm/Hz}$, and the transmit and receive gains antenna are set to $G_B = G_U = 2 \text{ dB}$, respectively. One RIS is misoriented  but perfectly aligned with its centroid located at $\bm{p}^{[1]}_{R} = \Tilde{\bm{p}}_{{R}}^{[1]} = [10,8,4]^{\mathrm{T}}$ with the following rotation angles $\bm{\Phi}_{R}^{[1]} = [0.1,0.2, 0.1]^{\mathrm{T}}$, while the other RIS is perfectly located at $\bm{p}^{[2]}_{R} = \Tilde{\bm{p}}_{{R}}^{[2]} = [10,8.5,4]^{\mathrm{T}}$. The UE location is described with $\bm{p}_{U} = \Tilde{\bm{p}}_{{U}} = [12,10,3]^{\mathrm{T}}$ and $\bm{Q}_U = \bm{I}$. We focus on the case with uniform rectangular arrays (URAs) at the BS, RIS, and the UE with their respective normal vectors originally pointing in the $z$ direction. At the considered UE position and with the Fraunhofer distance defined in \cite{9335528}, the UE is in the near-field of the first RIS when $N_R^{[1]} \geq 100$ and in the near-field of the second RIS when $N_R^{[2]} \geq 90$.  Five cases of parameterization of unknown parameters are investigated: a) a case showing a simplified and unrealistic parameterization that serves to depict the relationship between the RIS orientation offset and the complex path gain, $\bm{\eta}^{[1]} \triangleq\left[\bm{\Phi}_{R}^{[1]\mathrm{T}}, {\beta}_{\mathrm{R}}^{[1]}, {\beta}_{\mathrm{I}}^{[1]}  \right]^{\mathrm{T}}$, b) a more realistic case that includes the unknown angles in the RIS-UE link and the complex path gain, $\bm{\eta}^{[1]} \triangleq\left[\bm{\Phi}_{R}^{[1]\mathrm{T}},{\theta}_{{\bm{r}_{R}\bm{u}_{U}}}^{[1]}, {\phi}_{{\bm{r}_{R}\bm{u}_{U}}}^{[1]}, {\beta}_{\mathrm{R}}^{[1]}, {\beta}_{\mathrm{I}}^{[1]} \right]^{\mathrm{T}}$, c) a third case with $\bm{\eta}^{[1]} \triangleq\left[{\tau}_{{\bm{r}_{R}\bm{u}_{U}}}^{[1]},{\theta}_{{\bm{r}_{R}\bm{u}_{U}}}^{[1]}, {\phi}_{{\bm{r}_{R}\bm{u}_{U}}}^{[1]},
\bm{\Phi}_{R}^{[1]\mathrm{T}}, {\beta}_{\mathrm{R}}^{[1]}, {\beta}_{\mathrm{I}}^{[1]} \right]^{\mathrm{T}}$, d) a fourth case with $\bm{\eta}^{[2]} \triangleq\left[{\tau}_{{\bm{r}_{R}\bm{u}_{U}}}^{[2]},{\theta}_{{\bm{r}_{R}\bm{u}_{U}}}^{[2]}, {\phi}_{{\bm{r}_{R}\bm{u}_{U}}}^{[2]}, {\beta}_{\mathrm{R}}^{[2]}, {\beta}_{\mathrm{I}}^{[2]} \right]^{\mathrm{T}}$, and e) a fifth case considering both $\bm{\eta}^{[1]} \triangleq\left[{\tau}_{{\bm{r}_{R}\bm{u}_{U}}}^{[1]},{\theta}_{{\bm{r}_{R}\bm{u}_{U}}}^{[1]}, {\phi}_{{\bm{r}_{R}\bm{u}_{U}}}^{[1]},
\bm{\Phi}_{R}^{[1]\mathrm{T}}, {\beta}_{\mathrm{R}}^{[1]}, {\beta}_{\mathrm{I}}^{[1]} \right]^{\mathrm{T}}$ and $\bm{\eta}^{[2]} \triangleq\left[{\tau}_{{\bm{r}_{R}\bm{u}_{U}}}^{[2]},{\theta}_{{\bm{r}_{R}\bm{u}_{U}}}^{[2]}, {\phi}_{{\bm{r}_{R}\bm{u}_{U}}}^{[2]}, {\beta}_{\mathrm{R}}^{[2]}, {\beta}_{\mathrm{I}}^{[2]} \right]^{\mathrm{T}}$. For the first two cases of parameterization of the unknown parameters, the location algorithm at the UE attempts to use either the {\em correct} near-field model (\ref{equ:receive_processing_2}) or the {\em incorrect} far-field model (Proposition \ref{proposition:far_field_1}) to correct the orientation offset of the first RIS. In case (c), the UE location algorithm attempts to use only the reflected signals from the misoriented RIS to position the UE; case (d) considers the situation where the UE location algorithm attempts to use the reflected signal from the perfectly located RIS to position the UE. Finally, case (e) considers the situation where signals reflected from both RISs are used to position the UE. Hence, for the first two cases, the OEB of the orientation offset of the first RIS is plotted, while the PEB of the UE is investigated for the other three cases. In all applicable plots, the prefix ``FF'' is used to distinguish the {\em incorrect} case where a far-field model is applied to this near-field simulation setup from the {\em correct} case where the near-field model is used for the near-field setup. There is no {\em a priori} information about the complex path gains, and the {\em a priori} information about the orientation offset  is quantified as a fraction of the SNR, $P/\sigma^2$. 
The PEB and the OEB could serve as benchmarks for future algorithms.


\subsection{Effect of Number of Receive Antennas on the Estimation of RIS Orientation Offset}
We present metrics to investigate the RIS orientation offset correction for the first two parameterization cases  under the near-field and far-field models with varying {\em a priori} information and a varying number of receive antennas. In Fig. \ref{fig:Results/phi_r_oeb_vs_NU}, when the near-field model is used, the OEB is shown to decrease with an increase in the number of receive antennas; however, when the far-field model is used, the OEB stays relatively constant for a varying number of receive antennas. 
\begin{figure}[htb!]
\centering
\subfloat[]{\includegraphics[width=\linewidth]{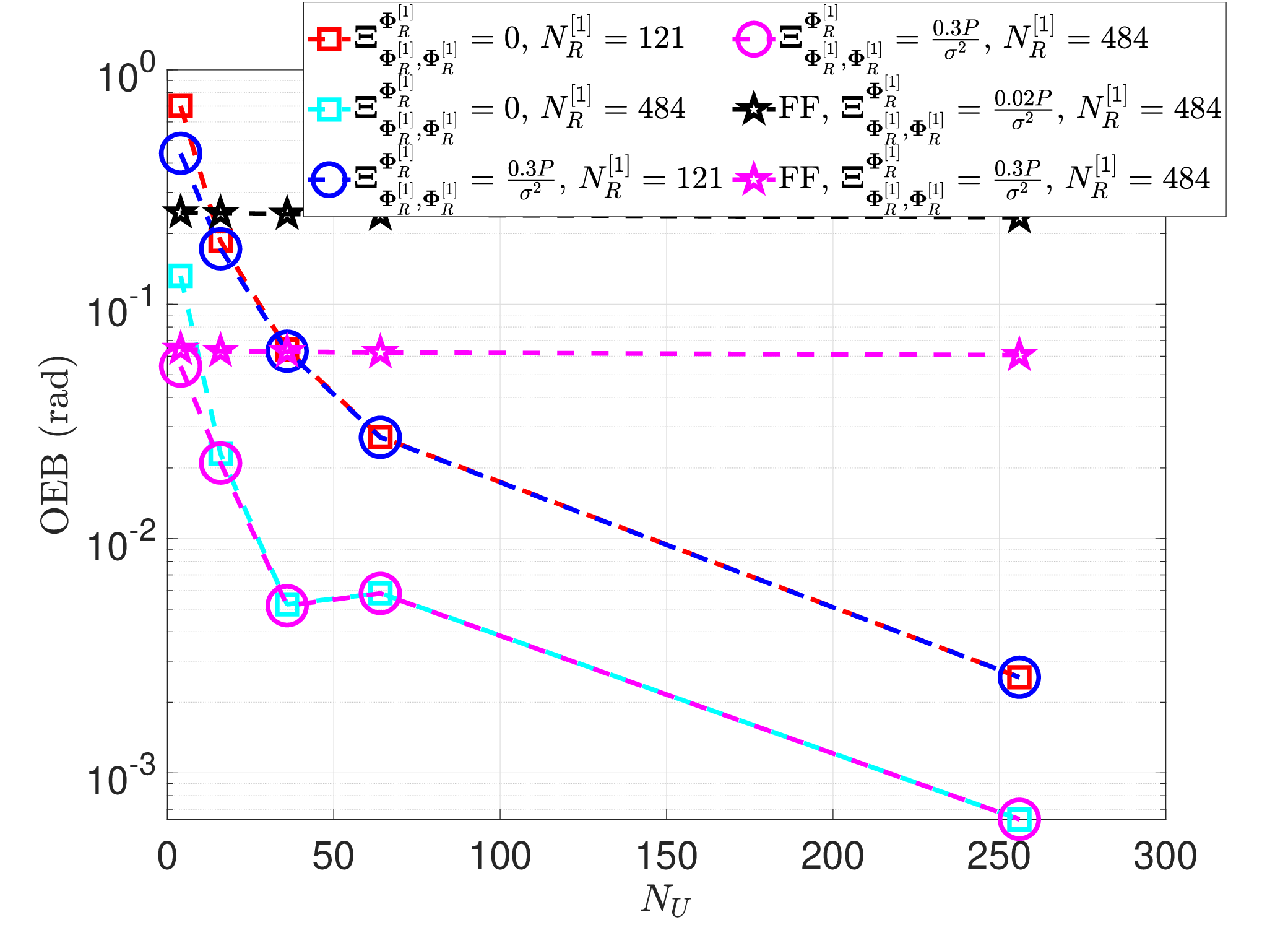}
\label{fig:Results/phi_r_oeb_vs_NU}}
\hfil
\subfloat[]{\includegraphics[width=\linewidth]{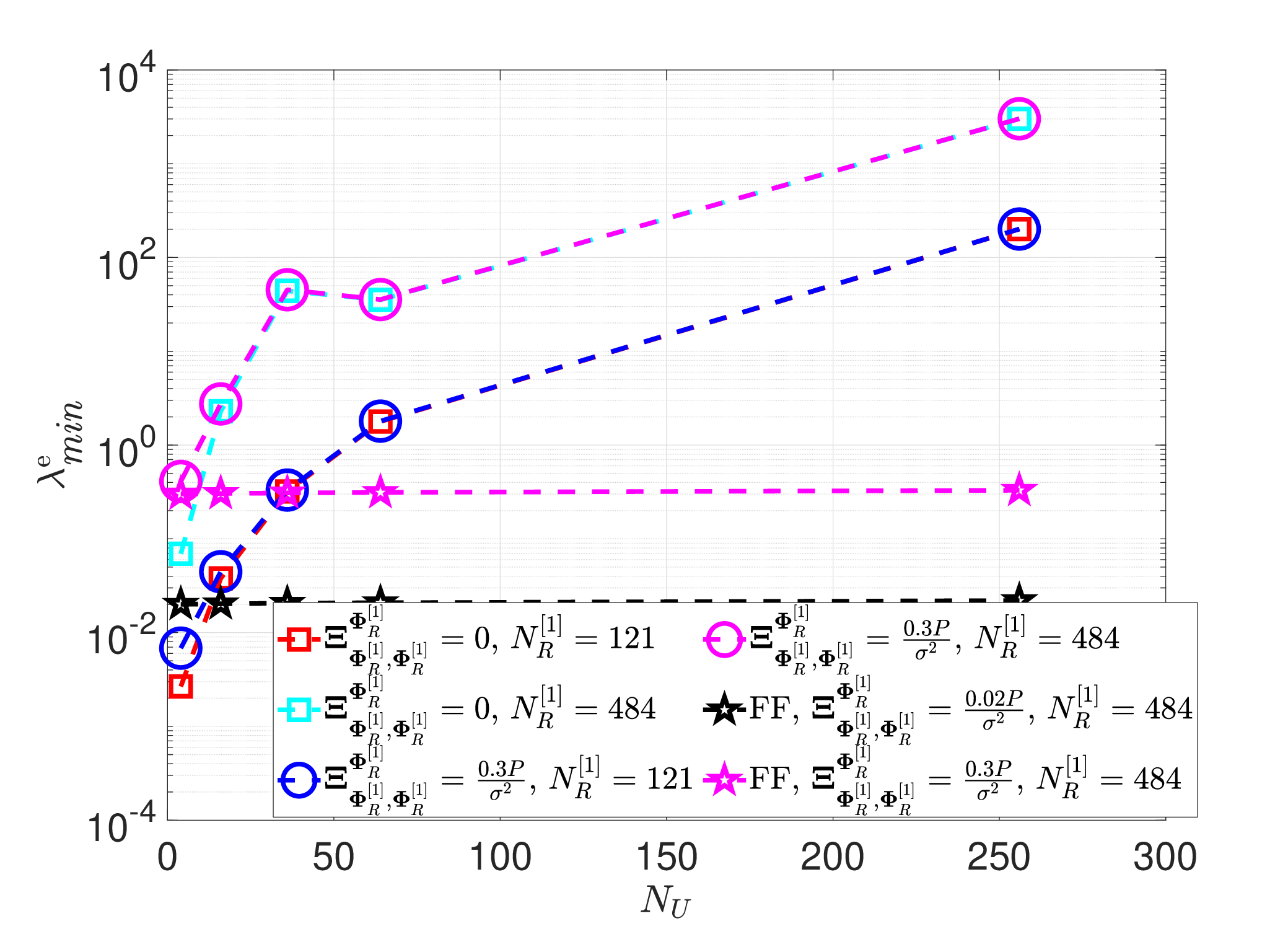}
\label{fig:Results/phi_r_lambda_vs_NU}}
\caption{(a) Case (a) - this investigates the direct relationship between the RIS orientation offset and the complex path gain by showing the OEB of the orientation offset of the first RIS vs. the number of receive antennas 
(b)  Case (a) - this investigates the direct relationship between the RIS orientation offset and the complex path gain by presenting 
$\lambda^{\mathrm{e}}_{\textit{min}}$ vs. $N_U$ 
}
\label{Results:partial_and_full_phi_r_oeb_vs_NU}
\end{figure}
\begin{figure}[htb!]
\centering
\subfloat[]{\includegraphics[width=\linewidth]{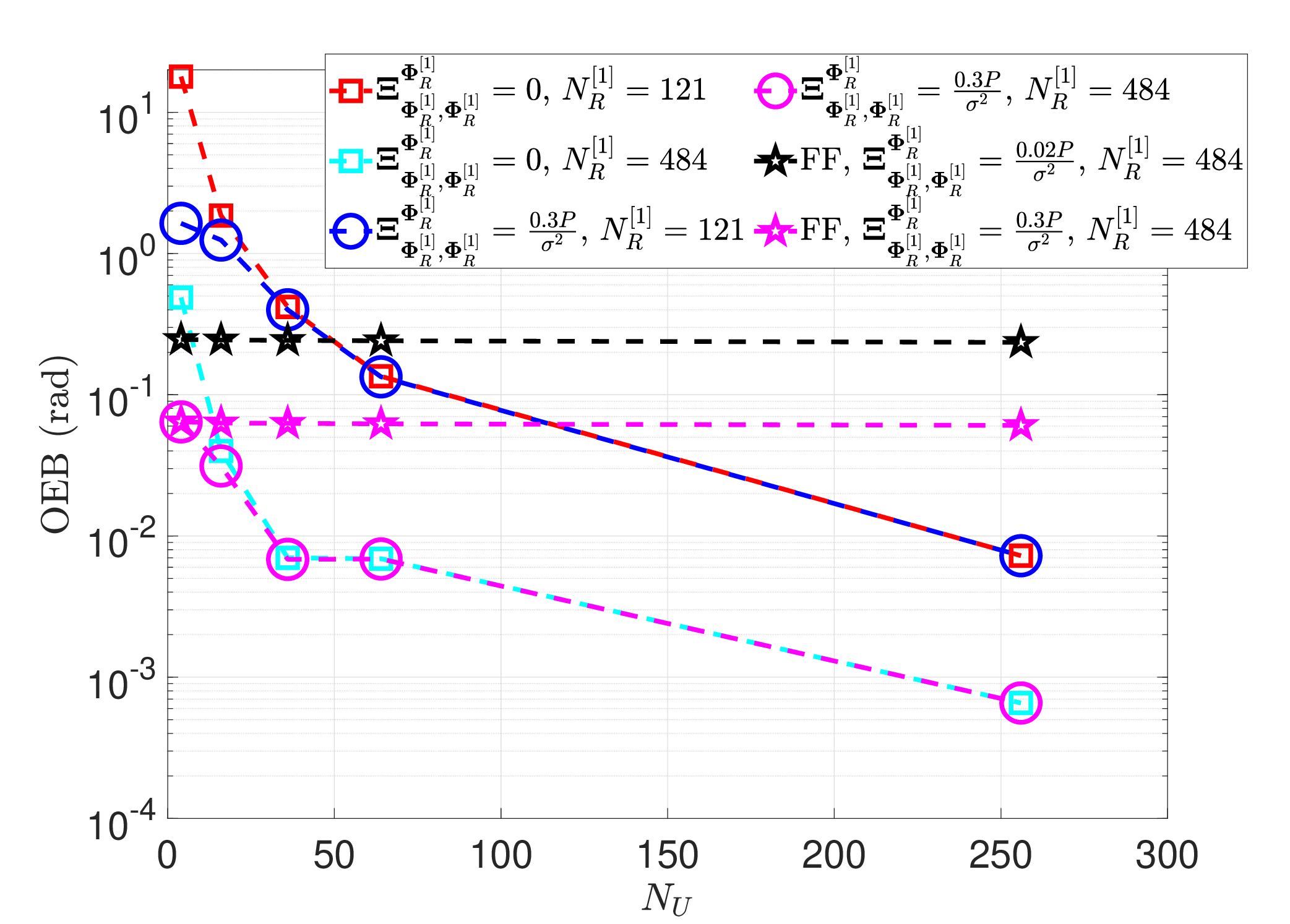}
\label{fig:Results/full_phi_r_oeb_vs_NU}}
\hfil
\subfloat[]{\includegraphics[width=\linewidth]{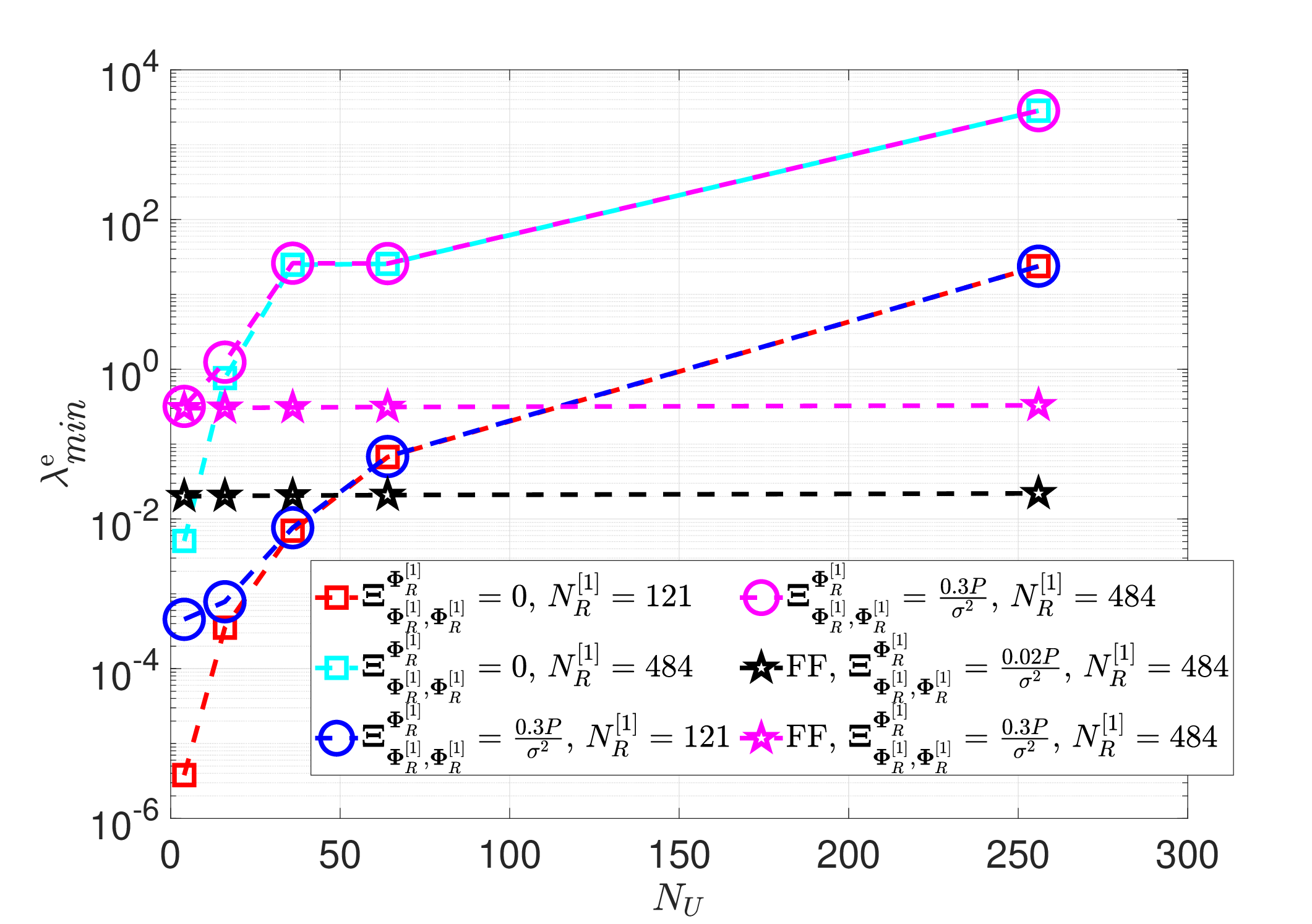}
\label{fig:Results/full_phi_r_lambda_vs_NU}}
\caption{(a) Case (b) investigates the direct relationship between the orientation offsets and other angle parameters by showing the OEB of the orientation offset of the first RIS vs. the number of receive antennas 
(b) Case (b) investigates the direct relationship between the orientation offsets and other angle parameters by showing $\lambda^{\mathrm{e}}_{\textit{min}}$ vs. $N_U$.}
\label{Results:full_phi_r_lambda_vs_NU}
\end{figure}
\begin{figure}[htb!]
\centering
\subfloat[]{\includegraphics[width=\linewidth]{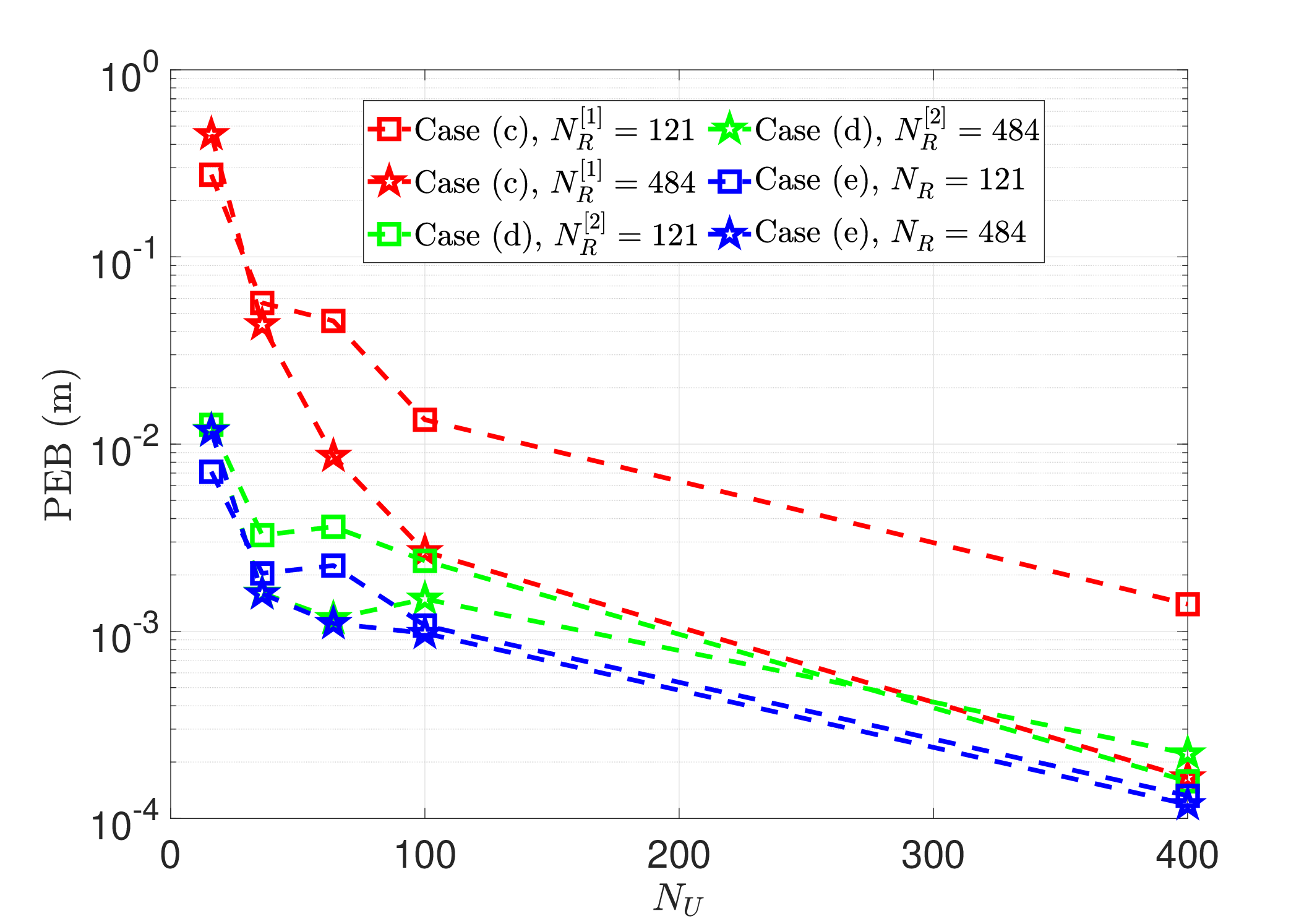}
\label{fig:Results/PEB_vs_NU}}
\hfil
\subfloat[]{\includegraphics[width=\linewidth]{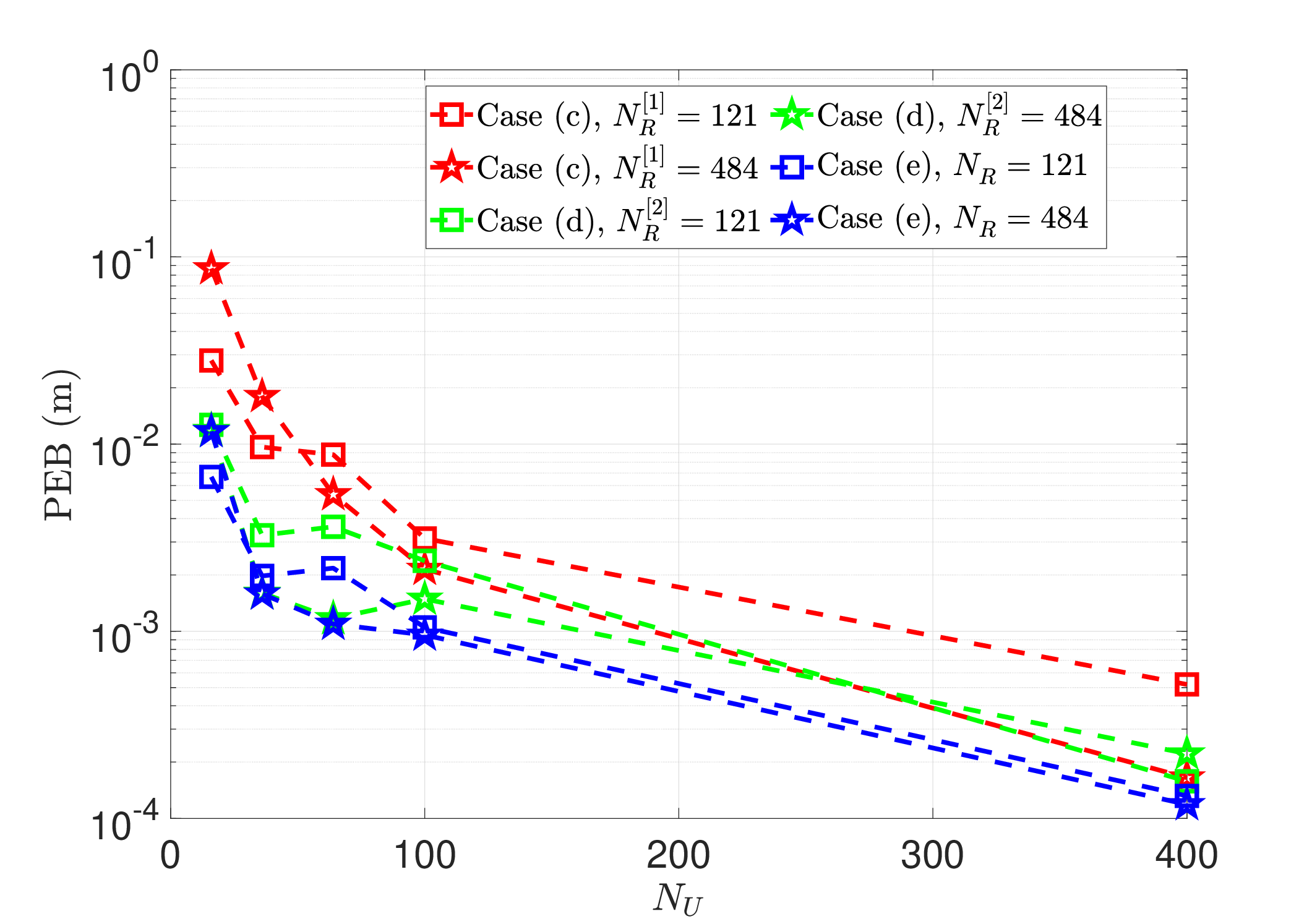}
\label{fig:Results/PEB_vs_NU_1}}
\caption{  PEB of UE vs. $N_U$: we consider the following cases; case (c) (red curves), 
Case (d) (green curves), 
and case (e) (blue curves) (a)  $\bm{\Xi}_{\bm{\Phi}_{R}^{[1]},\bm{\Phi}_{R}^{[1]}}^{\bm{\Phi}_{R}^{[1]}} = 0$
 and (b) $\bm{\Xi}_{\bm{\Phi}_{R}^{[1]},\bm{\Phi}_{R}^{[1]}}^{\bm{\Phi}_{R}^{[1]}} = 0.5 \frac{P}{\sigma^2}$.}
\label{Results:PEB_vs_NU_11}
\end{figure}

This trend is also observed in Fig. \ref{fig:Results/full_phi_r_oeb_vs_NU} and this serves as a sanity check for Theorem \ref{theorem:channel_parameter_estimatable}. 
While using the near-field model, the OEB is prohibitively large when $N_U$ is small, especially for a small RIS. 
More specifically, when $N_U$ is small, and there is little or no {\em a priori} information, the resulting EFIM, $\mathbf{J}_{\bm{y};\bm{\eta}^{[1]}_1}^{\mathrm{e}}$, is sometimes almost singular; hence $\mathbf{J}_{\bm{y};\bm{\eta}^{[1]}_1}^{\mathrm{e}}$ has a relatively small eigenvalue.  This is confirmed in Figs. \ref{fig:Results/phi_r_lambda_vs_NU} and \ref{fig:Results/full_phi_r_lambda_vs_NU} by presenting the smallest eigenvalue, $\lambda^{\mathrm{e}}_{\textit{min}}$, as a function of $N_U$.   \color{black}
\begin{figure}[htb!]
\centering
\subfloat[]{\includegraphics[width=\linewidth]{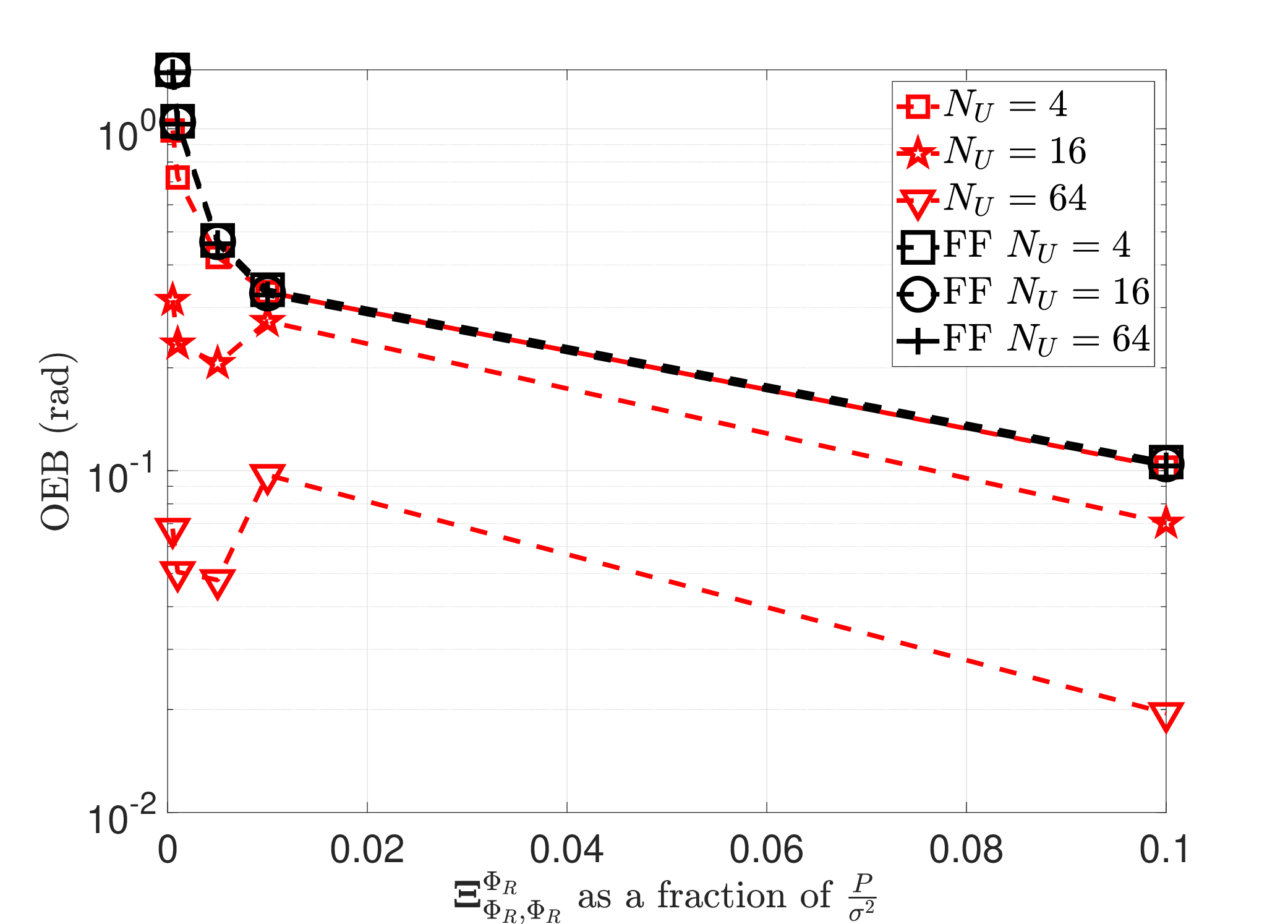}
\label{fig:Results/phi_r_oeb_vs_prior}}
\hfil
\subfloat[]{\includegraphics[width=\linewidth]{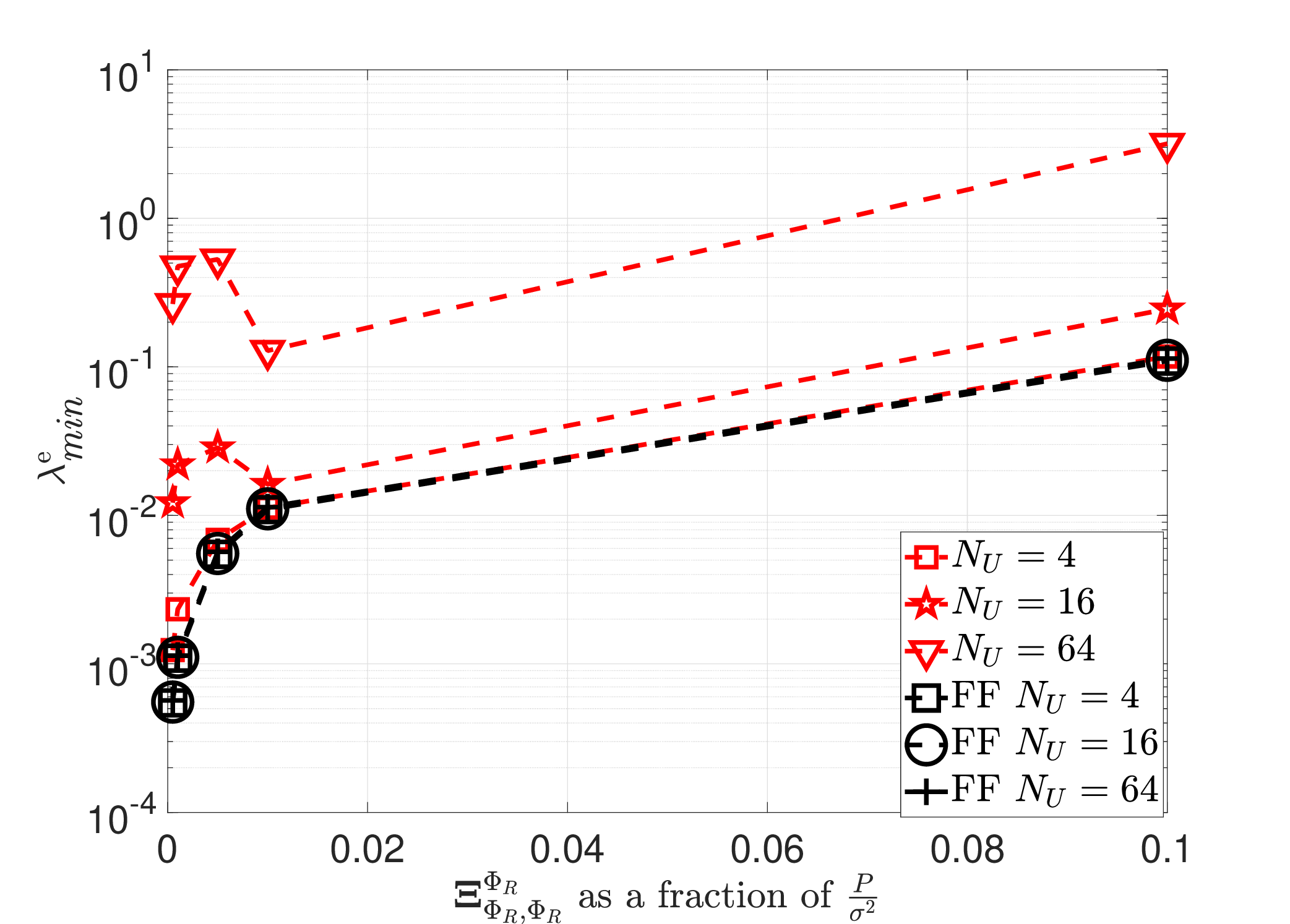}
\label{fig:Results/phi_r_lambda_vs_prior}}
\caption{(a) Case (a) - this investigates the direct relationship between the RIS orientation offset and the complex path gain by showing the OEB of the orientation offset of the first RIS vs. the available {\em a priori} information about the orientation offset 
(b) Case (a): the smallest eigenvalue (normalized by the SNR) of $\mathbf{J}_{\bm{y};\bm{\eta}^{[1]}_1}^{\mathrm{e}}$ vs. the available {\em a priori} information about the orientation offset 
.}
\label{Results:phi_r_oeb_lambda_vs_prior}
\end{figure}
In Fig. \ref{fig:Results/phi_r_lambda_vs_NU}, while using the near-field model, the smallest eigenvalue increases significantly with an increase in $N_U$, when using the far-field model, the smallest eigenvalue stays relatively constant irrespective of $N_U$. This observation validates Lemmas \ref{lemma:farfield_channel_parameter_estimatable} and \ref{lemma:near-field_channel_parameter_estimatable}. More specifically, it validates that in the far-field, the OEB is independent of $N_U$ and only depends on the {\em a priori} information.

\subsection{Effect of Number of Receive Antennas on the Positioning Error of the UE}
We investigate the UE's PEB under near-field propagation as a function of a varying number of RIS elements and a function of a varying number of receive antennas.  The PEB is obtained by inverting $\mathbf{J}_{\bm{y};\bm{\kappa}_1}^{\mathrm{e}}$ in Lemma \ref{lemma:prior_near_location_EFIM_1}, and the resulting PEB is presented in Fig. \ref{fig:Results/PEB_vs_NU} assuming there is no {\em a priori} information and in Fig \ref{fig:Results/PEB_vs_NU_1} assuming $\bm{\Xi}_{\bm{\Phi}_{R}^{[1]},\bm{\Phi}_{R}^{[1]}}^{\bm{\Phi}_{R}^{[1]}} = 0.5 \frac{P}{\sigma^2}$. \color{black} The LOS is ignored to focus on the impact of misorientation on the PEB, and positioning the UE with the RISs is considered under three different cases of parameterization. In case (c), only the signal received at the UE through the misoriented RIS is used to position the UE; in case (d), only the signal received at the UE through the perfectly located RIS is used to position the UE. In case (e), the signals received at the UE from both RISs (misoriented and perfectly located) are used to position the UE. For case (c) and at small values of $N_U$, the PEB is significantly affected by the misorientation of the RIS. However, as $N_U$ increases, the effect of misorientation is reduced, especially when the misoriented RIS is large, i.e., $N_R^{[1]} = 484$. Finally, in Fig. \ref{fig:Results/PEB_vs_NU}, we observe that the PEB is dominated by the PEB of the perfectly located RIS.

\subsection{Effect of {\em a priori} Information about the RIS Orientation Offset}
We now examine RIS offset correction for the first two parameterization cases under the conditions of increasing {\em a priori} information about the RIS offset, $\bm{\Xi}_{\bm{\Phi}_{R}^{[1]},\bm{\Phi}_{R}^{[1]}}^{\bm{\Phi}_{R}^{[1]}}$, and with no {\em a priori} information about the complex path gains, $\mathbf{\Xi}_{{\beta}_{R}^{[1]},{\beta}_{R}^{[1]}}^{{\beta}_{R}^{[1]}}$. The near-field and far-field models are investigated with different numbers of receive antennas. In Fig. \ref{fig:Results/phi_r_oeb_vs_prior}, as intuition would suggest, the OEB decreases significantly with an increase in $\bm{\Xi}_{\bm{\Phi}_{R}^{[1]},\bm{\Phi}_{R}^{[1]}}^{\bm{\Phi}_{R}^{[1]}}$. When using the near-field model, $N_U$ significantly affects the OEB, this is not the case when the far-field model is used. 
The disparity in the OEB behavior between the near-field and the far-field models is also evident in Fig.  \ref{fig:Results/phi_r_lambda_vs_prior}; the variations in the smallest eigenvalue of the $\mathbf{J}_{\bm{y};\bm{\eta}^{[1]}_1}^{\mathrm{e}}$ mirror the variations in OEB under similar conditions. 
\section{Conclusion}
In this work, we developed a Bayesian framework for localization that accounts for uncertainty in the RIS location. Specifically, we considered the downlink of a MIMO system operating with OFDM with RISs providing reflected signals to a UE in addition to an LOS path under both near-field and far-field propagation regimes. We showed through the EFIM for channel parameters that an unknown phase offset in the received signal at the UE makes it impossible to estimate and correct any RIS orientation offset through these received signals in the far-field. However, in the near-field, an unknown phase offset in the received signal at the UE does not hinder the estimation of the RIS orientation offset when there is more than one receive antenna. This non-trivial result indicates that the possibility of correcting an RIS orientation offset through the received signals at the UE only exists while the UE is in the near-field. Furthermore, through the EFIM, we showed that regardless of RIS size and propagation regime, the RISs are only helpful for localization if there is {\em a priori} information about their location. Finally, through numerical analysis of the EFIM, we demonstrated the loss in information when the far-field model is {\em incorrectly} applied to the signals received at a UE which is operating in the near-field propagation regime. A rigorous extension of this work involves investigating the degradation in UE localization performance when the localization algorithm has no idea about the misorientation and misalignment of the RISs. Such an extension would involve the computation of the Kullback divergence and a misspecified Cramer Rao bound.

\appendix 
	 \label{appendix_distance_approximation:far_field_1} 
	\subsection{Far-Field Distance Approximation}
	The distance from the $g^{\text{th}}$ element on the $G^{\text{th}}$ entity located at $\bm{p}_{G}$ to the $v^{\text{th}}$ element on the $V^{\text{th}}$ entity located at $\bm{p}_{V}$ can be written as $$d_{\bm{p}_{g} \bm{p}_{v}} = \norm{{\bm{p}}_{v} - {\bm{p}}_{g}   } =  \norm{d_{\bm{p}_{G}\bm{p}_V} \Delta_{\bm{p}_{G}\bm{p}_V} + ( {\bm{s}}_{v}- {\bm{s}}_{g})}.$$
Using the first order binomial expansion of $\sqrt{1 + x}$ and ignoring the term $\norm{{\bm{s}}_{v}- \bm{s}_g}^2/d_{\bm{p}_{G}\bm{p}_V}^2$, we have  $$d_{\bm{p}_{g} \bm{p}_{v}} = d_{\bm{p}_{G}\bm{p}_V} + \Delta_{\bm{p}_{G}\bm{p}_V}^{\mathrm{T}} ({\bm{s}}_{v}- {\bm{s}}_{g}),$$
the consequence of the corresponding delay, $\tau_{\bm{p}_{g} \bm{p}_{v}}$, on the signal can be expressed as $$e^{-j 2 \pi f_{n} \tau_{\bm{p}_{g}\bm{p}_v}}= e^{-j 2 \pi f_{n}
\tau_{\bm{p}_{G}\bm{p}_V}} e^{-j 2 \frac{\pi}{\lambda_n}  \Delta_{\bm{p}_{G}\bm{p}_V}^{\mathrm{T}} ({\bm{s}}_{v}- {\bm{s}}_{g})}.$$
\subsection{Proof of Proposition \ref{proposition:EFIM_channel}}
\label{appendix:proof_PD}
In general, an  ${N_{\eta}} \times {N_{\eta}}$ symmetric matrix, $\mathbf{J}$ is positive definite, if $\bm{v}^{\mathrm{T}}\mathbf{J}\bm{v} > 0$, for all non-zero $\bm{v} \in \mathbb{C}^{{N_{\eta}} \times 1}$. Now, suppose a zero lies on the diagonal of $\mathbf{J}$ such that $[\mathbf{J}]_{[n,n]} = 0$ and suppose we select a vector $\bm{v}$ with all zero entries except at its $n^{\text{th}}$ entry. With this selection, we have $\bm{v}^{\mathrm{T}}\mathbf{J}\bm{v} = [\bm{v}]_{[n]}^{\mathrm{T}}[\mathbf{J}]_{[n,n]}[\bm{v}]_{[n]} = 0$ and by definition, $\mathbf{J} \nsucc 0$. The proof is complete.

	\subsection{Misoriented and Misaligned First Derivatives}
	\label{Appendix:first_derivative}
	We rewrite the noise free part of the signal as
$$
	\mu_{t,u}[n]=\mu_{t,u, \mathrm{DB}}[n]+\mu_{t,u, \mathrm{DBR}}[n], $$$$
		\mu_{t,u}[n]=\mu_{t,u, \mathrm{DB}}[n]+\sum_{m=1}^{M_1}\mu_{t,u, \mathrm{DBR}}^{[m]}[n],
		$$
with
$$
\mu_{t,u, \mathrm{DB}}[n] \triangleq  \beta^{[{0}]} {\rho_{}^{[{0}]}} \sum_{{d} = 1}^{{N}_{D}} \sum_{{{b}} = 1}^{N_{{B}}} \mu_{t,u, {db}}[n] \\
$$ 
and 
$$
\mu_{t,u, \mathrm{DBR}}^{[m]}[n] \triangleq \gamma_{t}^{[{m}]}\beta^{[{m}]} \sum_{{d} = 1}^{{N}_{D}} \sum_{{{b}} = 1}^{N_{{B}}}  \sum_{r=1}^{N_{R}^{[m]}}   e^{j  \vartheta_{{{r}}}^{[{m}]} } {\rho_{\bm{r}_r}^{[{m}]}} \mu_{t,u, {dbr}}^{[m]}[n].
$$
The signals inside the summations are $$
\mu_{t,u, {db}}[n] \triangleq x_{{d}}[n]  e^{-j2\pi f_{n}({\epsilon}_{}^{[{0}]} - \tau_{ { {\bm{b}_{b}} \bm{p}_{{d}}}} + \tau_{{\bm{b}_{b}} {\bm{u}_{u}}}^{[0]})}$$ and $$
 \mu_{t,u, {dbr}}^{[m]}[n]\triangleq x_{{d}}[n] \times    e^{-j 2 \pi f_{n}({\epsilon}_{}^{[m]} - \tau_{ { {\bm{b}_{b}} \bm{p}_{{d}}}} +\tau_{{\bm{b}_{b}} {\bm{r}_{r}}}^{[{m}]} + \tau_{{\bm{r}_{r}} {\bm{u}_{u}}}^{[{m}]}) }.
$$
\subsubsection{Misoriented and Misaligned First Derivatives Concerning the Geometric Channel Parameters}
	For the LOS path, we define $\nu \in \{{\theta_{{\bm{b}}_{{B}^{}}{\bm{u}}_{{U}^{}}}},{\phi_{{\bm{b}}_{{B}^{}}{\bm{u}}_{{U}^{}}}}, \bm{\Phi}_U  \}$ and present the first derivatives concerning the path from the BS to a possibly misoriented and misaligned UE
$$
\nabla_{{\nu} } {\mu}_{t, {{u}}}[n] =   \beta^{[{0}]} {\rho_{}^{[{0}]}} \sum_{{d} = 1}^{{N}_{D}} \sum_{{{b}} = 1}^{N_{{B}}} \nabla_{{\nu} }\mu_{t,u, {db}}[n], 
$$
$$
\begin{aligned}
    \nabla_{{\tau_{{\bm{b}}_{{B}^{}}{\bm{u}}_{{U}^{}}}^{[{0}]}} } {\mu}_{t, {{u}}}[n] &=   \beta^{[{0}]}  \sum_{{d} = 1}^{{N}_{D}} \sum_{{{b}} = 1}^{N_{{B}}} \left[ \mu_{t,u, {db}}[n] \nabla_{{\tau_{{\bm{b}}_{{B}^{}}{\bm{u}}_{{U}^{}}}^{[{0}]}} }{\rho_{}^{[{0}]}} \right.  \\ & \left.+  {\rho_{}^{[{0}]}}  \nabla_{{\tau_{{\bm{b}}_{{B}^{}}{\bm{u}}_{{U}^{}}}^{[{0}]}} }\mu_{t,u, {db}}[n] \right].
    \end{aligned}
$$
	To derive the FIM related to the RIS, we define $\nu \in \{ \theta_{{\bm{g}}_{{G}^{}}{\bm{v}}_{{V}^{}}}, \phi_{{\bm{g}}_{{G}^{}}{\bm{v}}_{{V}^{}}}, \tau_{{\bm{g}}_{{G}^{}}{\bm{v}}_{{V}^{}}},\bm{\Phi}_{R}^{[m]}\}$ and present the first derivatives  concerning the parameters in the RIS path for both the BS-RIS link and the RIS-UE link
	\begin{equation}
\label{appendix_equ:FIM_first_derivative_theta_RIS}
\begin{aligned}
\nabla_{{\nu} } {\mu}_{t, {{u}}}[n] &= \gamma_{t}^{[{m}]}\beta^{[{m}]} \sum_{{d} = 1}^{{N}_{D}} \sum_{{{b}} = 1}^{N_{{B}}}  \sum_{r=1}^{N_{R}^{[m]}}   e^{j  \vartheta_{{{r}}}^{[{m}]} } \\ & \times \left[ \mu_{t,u, {dbr}}^{[m]}[n]  \nabla_{{\nu} } {\rho_{\bm{r}_r}^{[{m}]}} + {\rho_{\bm{r}_r}^{[{m}]}} \nabla_{{\nu} } \mu_{t,u, {dbr}}^{[m]}[n]  \right],
   \end{aligned}
\end{equation}
also,
	\begin{equation}
\label{appendix_equ:FIM_first_derivative_orient}
\begin{aligned}
\nabla_{{\bm{\Phi}^{}_{U}} } {\mu}_{t, {{u}}}[n] &= \gamma_{t}^{[{m}]}\beta^{[{m}]} \sum_{{d} = 1}^{{N}_{D}} \sum_{{{b}} = 1}^{N_{{B}}} \sum_{m = 1}^{M_1} \sum_{r=1}^{N_{R}^{[m]}}   e^{j  \vartheta_{{{r}}}^{[{m}]} } \\ & \times \left[ \mu_{t,u, {dbr}}^{[m]}[n]  \nabla_{{\bm{\Phi}^{}_{U}} } {\rho_{\bm{r}_r}^{[{m}]}} + {\rho_{\bm{r}_r}^{[{m}]}} \nabla_{{\bm{\Phi}^{}_{U}} } \mu_{t,u, {dbr}}^{[m]}[n]  \right].
   \end{aligned}
\end{equation}
Defining $\nu \in \{\theta_{{\bm{b}}_{{B}^{}}{\bm{u}}_{{U}^{}}}, \phi_{{\bm{b}}_{{B}^{}}{\bm{u}}_{{U}^{}}}, \tau_{{\bm{b}}_{{B}^{}}{\bm{u}}_{{U}^{}}},\bm{\Phi}^{}_{U}\}$, the derivatives related to the BS-UE path are given by
$$
\label{appendix_equ:FIM_first_derivative_theta_BS_UE_1}
\nabla_{{\nu} } \mu_{t,u, {db}}[n] = (-j 2 \pi f_{n}) \nabla_{{\nu} }\tau_{{\bm{b}_{b}} {\bm{u}_{u}}} \mu_{t,u, {db}}[n]$$ { and }
$$
\nabla_{{\nu} } \tau_{{\bm{b}_{b}} {\bm{u}_{u}}} =  \nabla_{{\nu} }d_{{\bm{b}_{b}} {\bm{u}_{u}}} / {c}.$$
Defining $\nu \in \{\theta_{{\bm{g}}_{{G}^{}}{\bm{v}}_{{V}^{}}},\phi_{{\bm{g}}_{{G}^{}}{\bm{v}}_{{V}^{}}},\tau_{{\bm{g}}_{{G}^{}}{\bm{v}}_{{V}^{}}},\bm{\Phi}_{R}^{[m]},\bm{\Phi}^{}_{U}\}$, the derivatives related to both the RIS-UE and the BS-RIS links are given by
$$\nabla_{{\nu} } \mu_{t,u, {dbr}}^{[m]}[n] = (-j 2 \pi f_{n}) \nabla_{{\nu} }\tau_{{\bm{g}_{g}} {\bm{v}_{v}}} \mu_{t,u, {dbr}}^{[m]}[n]$$
{ and }
$$\nabla_{{\nu} } \tau_{{\bm{g}_{g}} {\bm{v}_{v}}} =  \nabla_{{\nu} }d_{{\bm{g}_{g}} {\bm{v}_{v}}} / {c}.$$
The derivatives related to the distances with respect to geometric channel parameters are presented next
$$
\nabla_{{\theta_{{\bm{g}}_{{G}^{}}{\bm{v}}_{{V}^{}}}} } d_{{\bm{g}_{g}} {\bm{v}_{v}}} =  \frac{d_{{\bm{g}_{G}} {\bm{v}_{V}}}}{d_{{\bm{g}_{g}} {\bm{v}_{v}}}} (\nabla_{{\theta_{{\bm{g}}_{{G}^{}}{\bm{v}}_{{V}^{}}}} } \Delta_{{{\bm{g}}_{{G}^{}}{\bm{v}}_{{V}^{}}}} ) ^{\mathrm{T}} (\bm{s}_{v} - \bm{s}_{g}),$$

$$
\nabla_{{\phi_{{\bm{g}}_{{G}^{}}{\bm{v}}_{{V}^{}}}} } d_{{\bm{g}_{g}} {\bm{v}_{v}}} =  \frac{d_{{\bm{g}_{G}} {\bm{v}_{V}}}}{d_{{\bm{g}_{g}} {\bm{v}_{v}}}} (\nabla_{{\phi_{{\bm{g}}_{{G}^{}}{\bm{v}}_{{V}^{}}}} } \Delta_{{{\bm{g}}_{{G}^{}}{\bm{v}}_{{V}^{}}}} ) ^{\mathrm{T}} (\bm{s}_{v} - \bm{s}_{g}), $$
$$
\nabla_{{\tau_{{\bm{g}}_{{G}^{}}{\bm{v}}_{{V}^{}}}} } d_{{\bm{g}_{g}} {\bm{v}_{v}}} =  {c} \frac{d_{{\bm{g}_{G}} {\bm{v}_{V}}} + \Delta_{{{\bm{g}}_{{G}^{}}{\bm{v}}_{{V}^{}}}} ^{\mathrm{T}} (\bm{s}_{v} - \bm{s}_{g})}{d_{{\bm{g}_{g}} {\bm{v}_{v}}}}.$$
The derivatives related to the distances with respect to orientation of the entities are presented next
\begin{align}
\label{appendix_equ:FIM_first_derivative_theta_orient}
\nabla_{{\bm{\Phi}_V} } d_{{\bm{g}_{g}} {\bm{v}_{v}}} &=  \frac{d_{{\bm{g}_{G}} {\bm{v}_{V}}}}{d_{{\bm{g}_{g}} {\bm{v}_{v}}}} ( \Delta_{{{\bm{g}}_{{G}^{}}{\bm{v}}_{{V}^{}}}} ) ^{\mathrm{T}} (\nabla_{{\bm{\Phi}_V} }\bm{Q}_V\Tilde{\bm{s}_{v}}) 
\\ &- \frac{1}{d_{{\bm{g}_{g}} {\bm{v}_{v}}}} (\nabla_{{\bm{\Phi}_V} }\bm{Q}_V\Tilde{\bm{s}_{v}})^{\mathrm{T}}({\bm{s}_{g}}), \\
\label{appendix_equ:FIM_first_derivative_phi_orient}
\nabla_{{\bm{\Phi}_G} } d_{{\bm{g}_{g}} {\bm{v}_{v}}} &= - \frac{d_{{\bm{g}_{G}} {\bm{v}_{V}}}}{d_{{\bm{g}_{g}} {\bm{v}_{v}}}} ( \Delta_{{{\bm{g}}_{{G}^{}}{\bm{v}}_{{V}^{}}}} ) ^{\mathrm{T}} (\nabla_{{\bm{\Phi}_G} }\bm{Q}_G\Tilde{\bm{s}_{g}}) \\ &-\frac{1}{d_{{\bm{g}_{g}} {\bm{v}_{v}}}}  (\bm{s}_v)^{\mathrm{T}}(\nabla_{{\bm{\Phi}_G} }\bm{Q}_G{\Tilde{\bm{s}}_{g}}).
\end{align}
The derivatives related to the unit vectors from one entity to another entity with respect to geometric  parameters are presented as $\nabla_{{\theta_{{\bm{g}}_{{G}^{}}{\bm{v}}_{{V}^{}}}} } \Delta_{{{\bm{g}}_{{G}^{}}{\bm{v}}_{{V}^{}}}}  = [\cos {{\phi_{{\bm{g}}_{{G}^{}}{\bm{v}}_{{V}^{}}}} } \cos {{\theta_{{\bm{g}}_{{G}^{}}{\bm{v}}_{{V}^{}}}} }, \sin {{\phi_{{\bm{g}}_{{G}^{}}{\bm{v}}_{{V}^{}}}} } \cos {{\theta_{{\bm{g}}_{{G}^{}}{\bm{v}}_{{V}^{}}}} }, -\sin {{\theta_{{\bm{g}}_{{G}^{}}{\bm{v}}_{{V}^{}}}} }]^{\mathrm{T}}$ and $\nabla_{{\phi_{{\bm{g}}_{{G}^{}}{\bm{v}}_{{V}^{}}}} } \Delta_{{{\bm{g}}_{{G}^{}}{\bm{v}}_{{V}^{}}}}  = [-\sin {{\phi_{{\bm{g}}_{{G}^{}}{\bm{v}}_{{V}^{}}}} } \sin {{\theta_{{\bm{g}}_{{G}^{}}{\bm{v}}_{{V}^{}}}} }, \cos {{\phi_{{\bm{g}}_{{G}^{}}{\bm{v}}_{{V}^{}}}} } \sin {{\theta_{{\bm{g}}_{{G}^{}}{\bm{v}}_{{V}^{}}}} }, 0]^{\mathrm{T}}.
$

The pathloss related derivatives are also presented. The derivatives related to the LOS are  $\nabla_{{\tau_{{\bm{b}}_{{B}^{}}{\bm{u}}_{{U}^{}}}} }{\rho_{}^{[{0}]}} = -{c}\frac{\lambda}{4\pi}{d^{-2}_{{\bm{b}}_{{B}^{}}{\bm{u}}_{{U}^{}}}}.$
The derivative of the pathloss with respect to $\nu \in \{{\theta_{{\bm{r}}_{{R}^{}}{\bm{u}}_{{U}^{}}}}, {\theta_{{\bm{r}}_{{R}^{}}{\bm{u}}_{{U}^{}}}}, {\tau_{{\bm{r}}_{{R}^{}}{\bm{u}}_{{U}^{}}}} \}$ are presented
	\begin{equation}
\label{appendix_equ:FIM_first_derivative_pathloss_theta_RIS_1}
\begin{aligned}
 \nabla_{{\nu} }\rho_{\bm{r}_{r}} &=    \frac{\lambda^2 \rho_{r, {\bm{r}}_{{r}^{}}{\bm{b}}_{{b}^{}}}^{q_0}}{16 \pi \left(d_{{\bm{b}}_{{b}^{}}{\bm{r}}_{{r}^{}}}\right)^{q_0 +1}}  \left[ q_0 d_{ {\bm{r}}_{{r}^{}}{\bm{u}}_{{u}^{}}^{}}^{-(q_0 +1)}   \rho_{r,{\bm{r}}_{{r}^{}}{\bm{u}}_{{u}^{}}}^{q_0 - 1} \nabla_{{\nu}  } \rho_{r,{\bm{r}}_{{r}^{}}{\bm{u}}_{{u}^{}}}^{} \right.  \\ & \left. -(q_0 +1)  \rho_{r,{\bm{r}}_{{r}^{}}{\bm{u}}_{{u}^{}}}^{q_0}  d_{ {\bm{r}}_{{r}^{}}{\bm{u}}_{{u}^{}}^{}}^{-q_0 -2 }  \nabla_{{\nu} }d_{ {\bm{r}}_{{r}^{}}{\bm{u}}_{{u}^{}}^{}}^{}    \right]\epsilon_{p},
        \end{aligned}
\end{equation}
and the derivative of the pathloss with respect to $ \nu \in \{ {\theta_{{\bm{b}}_{{B}^{}}{\bm{r}}_{{R}^{}}}}, {\phi_{{\bm{b}}_{{B}^{}}{\bm{r}}_{{R}^{}}}}, {\tau_{{\bm{b}}_{{B}^{}}{\bm{r}}_{{R}^{}}}} \}$ are presented
	\begin{equation}
\label{appendix_equ:FIM_first_derivative_pathloss_theta_RIS_3}
\begin{aligned}
 \nabla_{{\nu} }\rho_{\bm{r}_{r}} &=    \frac{\lambda^2 \rho_{r, {\bm{r}}_{{r}^{}}{\bm{u}}_{{u}^{}}}^{q_0}}{16 \pi \left(d_{{\bm{r}}_{{r}^{}}{\bm{u}}_{{u}^{}}}\right)^{q_0 +1}}  \left[ q_0 d_{ {\bm{b}}_{{b}^{}}{\bm{r}}_{{r}^{}}^{}}^{-(q_0 +1)}   \rho_{r,{\bm{r}}_{{r}^{}}{\bm{b}}_{{b}^{}}}^{q_0 - 1} \nabla_{{\nu}  } \rho_{r,{\bm{r}}_{{r}^{}}{\bm{b}}_{{b}^{}}}^{}  \right. \\ & \left. - (q_0 +1)  \rho_{r,{\bm{r}}_{{r}^{}}{\bm{b}}_{{b}^{}}}^{q_0}  d_{ {\bm{b}}_{{b}^{}}{\bm{r}}_{{r}^{}}^{}}^{-q_0 -2 }  \nabla_{{\nu} }d_{ {\bm{b}}_{{b}^{}}{\bm{r}}_{{r}^{}}^{}}^{}    \right]\epsilon_{p},
        \end{aligned}
\end{equation}
where
$$
\nabla_{{\theta_{{\bm{g}}_{{G}^{}}{\bm{v}}_{{V}^{}}}}  } \rho_{r,{{\bm{g}}_{{g}^{}}{\bm{v}}_{{v}^{}}}}^{} =   \left[{{d_{{\bm{g}}_{{G}^{}}{\bm{v}}_{{V}^{}}}}}  \nabla_{{\theta_{{\bm{g}}_{{G}^{}}{\bm{v}}_{{V}^{}}}} }\Delta_{ {{\bm{g}}_{{G}^{}}{\bm{v}}_{{V}^{}}}}^{}   \right]^{\mathrm{T}} \bm{Q}_{G} \bm{a}_{\Tilde{\bm{p}}_{{G}}}, $$
$$
\nabla_{{\phi_{{\bm{g}}_{{G}^{}}{\bm{v}}_{{V}^{}}}}  } \rho_{r,{{\bm{g}}_{{g}^{}}{\bm{v}}_{{v}^{}}}}^{} =   \left[{{d_{{\bm{g}}_{{G}^{}}{\bm{v}}_{{V}^{}}}}}  \nabla_{{\phi_{{\bm{g}}_{{G}^{}}{\bm{v}}_{{V}^{}}}} }\Delta_{ {{\bm{g}}_{{G}^{}}{\bm{v}}_{{V}^{}}}}^{}   \right]^{\mathrm{T}} \times \bm{Q}_{G} \bm{a}_{\Tilde{\bm{p}}_{{G}}}, $$
$$
\nabla_{{\tau_{{\bm{g}}_{{G}^{}}{\bm{v}}_{{V}^{}}}}  } \rho_{r,{{\bm{g}}_{{g}^{}}{\bm{v}}_{{v}^{}}}}^{} =   {c}\left[\Delta_{ {{\bm{g}}_{{G}^{}}{\bm{v}}_{{V}^{}}}}^{}   \right]^{\mathrm{T}} \bm{Q}_{G} \bm{a}_{\Tilde{\bm{p}}_{{G}}}, $$
$$
\nabla_{{\bm{\Phi}_G{}}  } \rho_{r,{{\bm{g}}_{{g}^{}}{\bm{v}}_{{v}^{}}}}^{} =   \bigg[  d_{{\bm{p}}_{{G}}{\bm{p}}_{{V}^{}}}{\Delta_{{\bm{p}}_{{G}}{\bm{p}}_{{V}^{}}}} +  \bm{Q}_{V} \Tilde{\bm{s}}_{v}    \bigg]^{\mathrm{T}}
\nabla_{{\bm{\Phi}_G{}}  }\bm{Q}_{G} \bm{a}_{\Tilde{\bm{p}}_{{G}}},        
$$
and 
$$
\nabla_{{\bm{\Phi}_V{}}  } \rho_{r,{{\bm{g}}_{{g}^{}}{\bm{v}}_{{v}^{}}}}^{} =   \bigg[  \nabla_{{\bm{\Phi}_V{}}  } \bm{Q}_{V} \Tilde{\bm{s}}_{v}    \bigg]^{\mathrm{T}}
\bm{Q}_{G} \bm{a}_{\Tilde{\bm{p}}_{{G}}}.  $$        
\subsubsection{Misoriented and Misaligned First Derivatives Concerning the Nuisance Channel Parameters} 
The derivatives of the signal with respect to the  LOS related nuisance parameters in the BS-UE path are $$\nabla_{{\epsilon}_{}^{[{0}]} } {\mu}_{t, {{u}}}[n] =  (-j 2 \pi f_{n}) \beta^{[{0}]} {\rho_{}^{[{0}]}} \sum_{{d} = 1}^{{N}_{D}} \sum_{{{b}} = 1}^{N_{{B}}} \mu_{t,u, {db}}[n],$$ $$\nabla_{{\beta}_{\mathrm{R}}^{[{0}]}} {\mu}_{t, {{u}}}[n] =  {\rho_{}^{[{0}]}} \sum_{{d} = 1}^{{N}_{D}} \sum_{{{b}} = 1}^{N_{{B}}} \mu_{t,u, {db}}[n],$$ and $$\nabla_{{\beta}_{\mathrm{I}}^{[{0}]}} {\mu}_{t, {{u}}}[n] =  j{\rho_{}^{[{0}]}} \sum_{{d} = 1}^{{N}_{D}} \sum_{{{b}} = 1}^{N_{{B}}} \mu_{t,u, {db}}[n]. 
$$
The derivatives of the signal with respect to the  RIS related nuisance parameters are
$$\nabla_{{\epsilon}_{}^{[{m}]} } {\mu}_{t, {{u}}}[n] =  (-j 2 \pi f_{n}) \gamma_{t}^{[{m}]}\beta^{[{m}]} \sum_{{d} = 1}^{{N}_{D}} \sum_{{{b}} = 1}^{N_{{B}}}  \sum_{r=1}^{N_{R}^{[m]}}   e^{j  \vartheta_{{{r}}}^{[{m}]} } {\rho_{\bm{r}_r}^{[{m}]}} \mu_{t,u, {dbr}}^{[m]}[n],$$ $$\nabla_{{\beta}_{\mathrm{R}}^{[{m}]}} {\mu}_{t, {{u}}}[n] =  \gamma_{t}^{[{m}]} \sum_{{d} = 1}^{{N}_{D}} \sum_{{{b}} = 1}^{N_{{B}}}  \sum_{r=1}^{N_{R}^{[m]}}   e^{j  \vartheta_{{{r}}}^{[{m}]} } {\rho_{\bm{r}_r}^{[{m}]}} \mu_{t,u, {dbr}}^{[m]}[n],$$ and $$\nabla_{{\beta}_{\mathrm{I}}^{[{m}]}} {\mu}_{t, {{u}}}[n] =   j\nabla_{{\beta}_{\mathrm{I}}^{[{m}]}} {\mu}_{t, {{u}}}[n].$$

\subsection{Proof of Lemma \ref{lemma:farfield_channel_parameter_estimatable}}
\label{Appendix:farfield_channel_parameter_estimatable}
This proof is concerned with the $m^{\text{th}}$ RIS path, hence the superscript $(\cdot)^{[m]}$ is dropped when notationally convenient. Due to Proposition \ref{proposition:EFIM_channel}, it suffices to show that the EFIM of a parameter vector consisting of the RIS orientation, $\bm{\Phi}^{[m]}_{R}$, and the complex path gains $\beta^{[m]}$ is zero. More specifically, it suffices to show that for $\Tilde{\bm{\eta}}^{[m]} = [\bm{\Phi}^{[m]}_{R} , \beta^{[m]}]$, the resultant EFIM, $ \mathrm{J}_{ \bm{\bm{y}}; \Tilde{\bm{\eta}}^{[m]}}^{\mathrm{e}} = 0$. To show this, we drop the unit vector parameterization  in Proposition \ref{proposition:far_field_1} (for example, $\bm{a}_{UR} = \bm{a}_{UR}(\Delta_{\bm{r}_{R}\bm{u}_U}^{[m]})$), and we derive the first derivatives based on  Assumptions \ref{assumption:RIS_restrict} and \ref{assumption:pathloss_subsumed} 
\begin{align}
\begin{split}
    \label{lemma_appendix_equ:far_field_1}
\nabla_{\bm{\Phi}_R}\bm{\mu}_{t}[n] &=    \frac{j2\pi}{\lambda} \gamma_{t}^{[{m}]} \beta^{[{m}]}   \bm{a}_{UR} \bm{a}_{RU}^{\mathrm{H}}  \bm{K}_{RU} \bm{\Gamma}^{[m]} \bm{a}_{RB} \bm{a}_{BR}^{\mathrm{H}} \\ & \times \bm{F}[n] \bm{x}[n]  e^{-j 2 \pi f_{c}(\tau_{{\bm{b}_{B}} \bm{r}_{R}}^{[m]}   + \tau_{\bm{r}_{R} {\bm{u}_{U}}}^{[m]})} \\ &-    \frac{j2\pi}{\lambda} \gamma_{t}^{[{m}]} \beta^{[{m}]} \bm{a}_{UR} \bm{a}_{RU}^{\mathrm{H}}   \bm{\Gamma}^{[m]} \bm{K}_{RB}\bm{a}_{RB} \bm{a}_{BR}^{\mathrm{H}}   \\ & \times\bm{F}[n] \bm{x}[n]  e^{-j 2 \pi f_{c}(\tau_{{\bm{b}_{B}} \bm{r}_{R}}^{[m]}   + \tau_{\bm{r}_{R} {\bm{u}_{U}}}^{[m]})}, \end{split}
\end{align}
\begin{align}
\begin{split}
\label{lemma_appendix_equ:far_field_2}
\nabla_{{\beta}_R}\bm{\mu}_{t}[n] &=  \gamma_{t}^{[{m}]}    \bm{a}_{UR} \bm{a}_{RU}^{\mathrm{H}}  \bm{\Gamma}^{[m]} \bm{a}_{RB} \bm{a}_{BR}^{\mathrm{H}} \bm{F}[n] \bm{x}[n] \\ & \times e^{-j 2 \pi f_{c}(\tau_{{\bm{b}_{B}} \bm{r}_{R}}^{[m]}   + \tau_{\bm{r}_{R} {\bm{u}_{U}}}^{[m]})},
\end{split}
\end{align}
\begin{align}
\label{lemma_appendix_equ:far_field_3}
\nabla_{{\beta}_I}\bm{\mu}_{t}[n] &=  j\nabla_{{\beta}_R}\bm{\mu}_{t}[n],
\end{align}
where $ \bm{K}_{RU} =   \text{diag}\left[\Delta_{{{\bm{r}}_{{R}^{}}{\bm{u}}_{{U}^{}}}}  ^{\mathrm{T}} (\nabla_{{\bm{\Phi}_R} }\bm{Q}_R\Tilde{\bm{S}_{r}})\right]$ and $  \bm{K}_{RB} =   \text{diag}\left[\Delta_{{{\bm{b}}_{{B}^{}}{\bm{r}}_{{R}^{}}}}  ^{\mathrm{T}} (\nabla_{{\bm{\Phi}_R} }\bm{Q}_R\Tilde{\bm{S}_{r}})\right]$. The FIM from observations, $\mathbf{J}_{ \bm{\bm{y}}| \Tilde{\bm{\eta}}^{[m]}}$, is obtained by substituting the above first derivatives into (\ref{equ:observation_FIM_1}) and it has the structure 
$$
\begin{aligned}
\mathbf{J}_{ \bm{\bm{y}}| \Tilde{\bm{\eta}}^{[m]}} =\left[\begin{array}{ccccc}
\mathbf{J}_{ \bm{\bm{y}}|{\bm{\Phi}}_{R}} & 
\mathbf{J}_{ \bm{\bm{y}}|{\bm{\Phi}}_{R},{{\beta}}_R} &  \mathbf{J}_{ \bm{\bm{y}}|{\bm{\Phi}}_{R},{{\beta}}_I} \\
\mathbf{J}_{ \bm{\bm{y}}|{\bm{\Phi}}_{R},{{\beta}}_R}^{\mathrm{T}} & 
\mathbf{J}_{ \bm{\bm{y}}|{{\beta}}_R} &  0 \\
\mathbf{J}_{ \bm{\bm{y}}|{\bm{\Phi}}_{R},{{\beta}}_I}^{\mathrm{T}} & 
0 &  \mathbf{J}_{ \bm{\bm{y}}|{{\beta}}_I} \\
\end{array}\right],\end{aligned}
$$
and the general FIM is $\mathbf{J}_{ \bm{\bm{y}}; \Tilde{\bm{\eta}}^{[m]}} =  \mathbf{J}_{ \bm{\bm{y}}| \Tilde{\bm{\eta}}^{[m]}} + \bm{\Xi}_{\Tilde{\bm{\eta}}^{[m]},\Tilde{\bm{\eta}}^{[m]}}^{\Tilde{\bm{\eta}}^{[m]}}$. The diagonal nature of the FIM from {\em a priori} information about the orientation and the complex path gains allows us to write the general EFIM as
\begin{align}
\begin{split}
    \mathbf{J}_{ \bm{\bm{y}}; \Tilde{\bm{\eta}}^{[m]}}^{\mathrm{e}} &= \mathbf{J}_{ \bm{\bm{y}}|{\bm{\Phi}}_{R}} + \bm{\Xi}_{\bm{\Phi}_{R},\bm{\Phi}_{R}}^{\bm{\Phi}_{R}}  - [\mathbf{J}_{ \bm{\bm{y}}|{{\beta}}_{R}} + \bm{\Xi}_{{\beta}_{R},{\beta}_{R}}^{{\beta}_{R}} ]^{-1} \\ & \times [\mathbf{J}_{ \bm{\bm{y}}|{\bm{\Phi}}_{R},{{\beta}}_R}\mathbf{J}_{ \bm{\bm{y}}|{\bm{\Phi}}_{R},{{\beta}}_R}^{\mathrm{T}} + \mathbf{J}_{ \bm{\bm{y}}|{\bm{\Phi}}_{R},{{\beta}}_I}\mathbf{J}_{ \bm{\bm{y}}|{\bm{\Phi}}_{R},{{\beta}}_I}^{\mathrm{T}}], \\
    \end{split}
    \end{align}
\begin{align}
    \begin{split}
\mathbf{J}_{ \bm{\bm{y}}; \Tilde{\bm{\eta}}^{[m]}}^{\mathrm{e}}  &= \mathbf{J}_{ \bm{\bm{y}}|{\bm{\Phi}}_{R}} +  \bm{\Xi}_{\bm{\Phi}_{R},\bm{\Phi}_{R}}^{\bm{\Phi}_{R}}  - [\mathbf{J}_{ \bm{\bm{y}}|{{\beta}}_{R}} + \bm{\Xi}_{{\beta}_{R},{\beta}_{R}}^{{\beta}_{R}} ]^{-1}\mathbf{J}_{ \bm{\bm{y}}|{{\beta}}_{R}}\mathbf{J}_{ \bm{\bm{y}}|{\bm{\Phi}}_{R}},
\end{split}
\end{align}
the second equation results from noticing $$\mathbf{J}_{ \bm{\bm{y}}|{{\beta}}_{R}}\mathbf{J}_{ \bm{\bm{y}}|{\bm{\Phi}}_{R}} = \mathbf{J}_{ \bm{\bm{y}}|{\bm{\Phi}}_{R},{{\beta}}_R}\mathbf{J}_{ \bm{\bm{y}}|{\bm{\Phi}}_{R},{{\beta}}_R}^{\mathrm{T}} + \mathbf{J}_{ \bm{\bm{y}}|{\bm{\Phi}}_{R},{{\beta}}_I}\mathbf{J}_{ \bm{\bm{y}}|{\bm{\Phi}}_{R},{{\beta}}_I}^{\mathrm{T}}$$. The proof follows from the second equation, as $\mathbf{J}_{ \bm{\bm{y}}; \Tilde{\bm{\eta}}^{[m]}}^{\mathrm{e}} = 0$, when both $\bm{\Xi}_{\bm{\Phi}_{R},\bm{\Phi}_{R}}^{\bm{\Phi}_{R}} = 0$ and $\bm{\Xi}_{{\beta}_{R},{\beta}_{R}}^{{\beta}_{R}} = 0$.

\subsection{Proof of Lemma \ref{lemma:near-field_channel_parameter_estimatable}}
\label{Appendix:near-field_channel_parameter_estimatable}
This proof is concerned with the $m^{\text{th}}$ RIS path, hence the superscript $(\cdot)^{[m]}$ is dropped when notationally convenient. Similar to Lemma \ref{lemma:farfield_channel_parameter_estimatable},  it suffices to show that the EFIM of a parameter vector, $\Tilde{\bm{\eta}}^{[m]} = [\bm{\Phi}^{[m]}_{R},\beta^{[m]}]$ is zero. To proceed, the first derivatives of $\bm{\Phi}_{R},\beta_{R},\beta_{I}$ can be obtained  using the derivatives presented in Appendix \ref{Appendix:first_derivative}. However, a simplified expression for the resulting FIM will not be easily obtainable. Hence, without loss of generality we assume $N_B =  1$ and $N_D = 1$. These conditions allow us to rewrite the portion of the received signal in (\ref{equ:receive_processing_2}) related to the RIS as
\begin{equation}
\label{appendix_equ:receive_processing_1}
\begin{aligned}
{\mu}_{t, {{u}}}[n] &=   
  \sum_{{m} = 1}^{{M}_1} \gamma_{t}^{[{m}]}\beta^{[{m}]}    \sum_{{{r}} = 1}^{N_{{R}}^{[{m}]}} e^{j  \vartheta_{{{r}}}^{[{m}]} } e^{-j 2 \pi f_{c}(\tau_{{\bm{b}_{b}} {\bm{r}_{r}}}^{[{m}]} + \tau_{{\bm{r}_{r}} {\bm{u}_{u}}}^{[{m}]}) } \\ & \times e^{j 2 \pi f_{n}\tau_{ { {\bm{b}_{b}} \bm{p}_{{D}}}} } x_{{d}}[n].
  \end{aligned}
\end{equation}
In matrix form, we can write $${\mu}_{t, {{u}}}[n] =   
  \sum_{{m} = 1}^{{M}_1} \gamma_{t}^{[{m}]}\beta^{[{m}]} \bm{a}^{\mathrm{T}}(\bm{u}_u)\bm{\Gamma}^{[m]} \bm{a}(\bm{b}_b)  e^{j 2 \pi f_{n}\tau_{ { {\bm{b}_{b}} \bm{p}_{{d}}}} } x_{{d}}[n],$$
where $\bm{a}(\bm{u}_u) = [e^{-j 2 \pi f_{c} \tau_{{\bm{r}_{1}} {\bm{u}_{u}}}^{[{m}]}},\cdots, e^{-j 2 \pi f_{c} \tau_{{\bm{r}_{N_R^{[m]}}} {\bm{u}_{u}}}^{[{m}]} }]^{\mathrm{T}}$ and $\bm{a}(\bm{b}_b) = [e^{-j 2 \pi f_{c} \tau_{{\bm{b}_{b}} {\bm{r}_{1}} }^{[{m}]}},\cdots, e^{-j 2 \pi f_{c} \tau_{{\bm{b}_{b}}{\bm{r}_{N_R^{[m]}}} }^{[{m}]} }]^{\mathrm{T}}$. With the above equation, the general FIM is 
\begin{equation}
\begin{aligned}
&\mathbf{J}_{ \bm{\bm{y}};\Tilde{\bm{\eta}}^{[m]}} = \\
&\left[\begin{array}{ccccc}
\mathbf{J}_{ \bm{\bm{y}}|{\bm{\Phi}}_{R}} + \bm{\Xi}_{\bm{\Phi}_{R},\bm{\Phi}_{R}}^{\bm{\Phi}_{R}}& 
\mathbf{J}_{ \bm{\bm{y}}|{\bm{\Phi}}_{R},{{\beta}}_R} &  \mathbf{J}_{ \bm{\bm{y}}|{\bm{\Phi}}_{R},{{\beta}}_I} \\
\mathbf{J}_{ \bm{\bm{y}}|{\bm{\Phi}}_{R},{{\beta}}_R}^{\mathrm{T}} & 
\mathbf{J}_{ \bm{\bm{y}}|{{\beta}}_R} + \bm{\Xi}_{{\beta}_{R},{\beta}_{R}}^{{\beta}_{R}} &  0 \\
\mathbf{J}_{ \bm{\bm{y}}|{\bm{\Phi}}_{R},{{\beta}}_I}^{\mathrm{T}} & 
0 &  \mathbf{J}_{ \bm{\bm{y}}|{{\beta}}_I} + \bm{\Xi}_{{\beta}_{R},{\beta}_{R}}^{{\beta}_{R}}\\
\end{array}\right].\end{aligned}
\end{equation}
Defining $\bm{K}^{}(\bm{g}_g) = \text{diag}\left[\nabla_{{\bm{\Phi}}_{R} }\tau_{{\bm{r}_{1}} {\bm{g}_{g}}}, \nabla_{{\bm{\Phi}}_{R} }\tau_{{\bm{r}_{2}} {\bm{g}_{g}}}, \cdots, \nabla_{{\bm{\Phi}}_{R} }\tau_{{\bm{r}_{N_R}} {\bm{g}_{g}}}\right]$, the FIMs from observations are written as shown in (\ref{equ:near_field}) - (\ref{equ:near_field_3}).
\begin{figure*}
\begin{align}
\begin{split}
\label{equ:near_field}
&\mathbf{J}_{ \bm{\bm{y}}|{\bm{\Phi}}_{R}} = 2/N_0 (2\pi f_c)^2\sum_{n = 1}^{N} |x_d[n]|^2 |\beta^{[m]}|^2 \times \\
&\sum_{u = 1}^{N_U}\Re \left\{\bm{a}^{\mathrm{H}}(\bm{b}_b) \bigg[ \bm{\Gamma}^{[m] \mathrm{H}}\bm{K}^{*}(\bm{u}_u)  + \bm{K}^{\mathrm{H}}(\bm{b}_b)\bm{\Gamma}^{[m] \mathrm{H}}     \bigg]\bm{a}(\bm{u}_u) \bm{a}^{\mathrm{T}}(\bm{u}_u)\bigg[ \bm{K}^{\mathrm{T}}(\bm{u}_u)\bm{\Gamma}^{[m]}  + \bm{\Gamma}^{[m] }\bm{K}^{}(\bm{b}_b)     \bigg] \bm{a}(\bm{b}_b)  \right\},
\end{split}
\end{align}
\end{figure*}
\begin{figure*}
\begin{align}
\begin{split}
\label{equ:near_field_1}
\mathbf{J}_{ \bm{\bm{y}}|{\bm{\Phi}}_{R},{{\beta}}_R} =2/ N_0 (2 \pi f_c)&\sum_{n = 1}^{N} |x_d[n]|^2 \times \\ &\sum_{u = 1}^{N_U}\Re \left\{ j\beta^{[m] \mathrm{H}} \bm{a}^{\mathrm{H}}(\bm{b}_b) \bigg[ \bm{\Gamma}^{[m] \mathrm{H}}\bm{K}^{*}(\bm{u}_u)  + \bm{K}^{\mathrm{H}}(\bm{b}_b)\bm{\Gamma}^{[m] \mathrm{H}}     \bigg]\bm{a}(\bm{u}_u) \bm{a}^{\mathrm{T}}(\bm{u}_u) \bm{\Gamma}^{[m]} \bm{a}(\bm{b}_b)  \right\},
\end{split}
\end{align}
\end{figure*}
\begin{figure*}
\begin{align}
\begin{split}
\label{equ:near_field_2}
\mathbf{J}_{ \bm{\bm{y}}|{\bm{\Phi}}_{R},{{\beta}}_I} = -2/N_0 (2 \pi f_c)&\sum_{n = 1}^{N} |x_d[n]|^2 \times \\ &\sum_{u = 1}^{N_U}\Re \left\{ \beta^{[m] \mathrm{H}} \bm{a}^{\mathrm{H}}(\bm{b}_b) \bigg[ \bm{\Gamma}^{[m] \mathrm{H}}\bm{K}^{*}(\bm{u}_u)  + \bm{K}^{\mathrm{H}}(\bm{b}_b)\bm{\Gamma}^{[m] \mathrm{H}}     \bigg]\bm{a}(\bm{u}_u) \bm{a}^{\mathrm{T}}(\bm{u}_u) \bm{\Gamma}^{[m]} \bm{a}(\bm{b}_b)  \right\},
\end{split}
\end{align}
\end{figure*}

\begin{figure*}
\begin{align}
\label{equ:near_field_3}
\mathbf{J}_{ \bm{\bm{y}}|{{\beta}}_R} =\mathbf{J}_{ \bm{\bm{y}}|{{\beta}}_I} = 2/N_0 \sum_{n = 1}^{N} |x_d[n]|^2 \sum_{u = 1}^{N_U} | \bm{a}^{\mathrm{T}}(\bm{u}_u) \bm{\Gamma}^{[m]} \bm{a}(\bm{b}_b)|^2.
\end{align}
\end{figure*}
Hence, the general EFIM is
\begin{equation}
\begin{aligned}
\mathbf{J}_{ \bm{\bm{y}}; \Tilde{\bm{\eta}}^{[m]}}^{\mathrm{e}} &= \mathbf{J}_{ \bm{\bm{y}}|{\bm{\Phi}}_{R}} + \bm{\Xi}_{\bm{\Phi}_{R},\bm{\Phi}_{R}}^{\bm{\Phi}_{R}}  - [\mathbf{J}_{ \bm{\bm{y}}|{{\beta}}_{R}} + \bm{\Xi}_{{\beta}_{R},{\beta}_{R}}^{{\beta}_{R}} ]^{-1}\\ & \times[\mathbf{J}_{ \bm{\bm{y}}|{\bm{\Phi}}_{R},{{\beta}}_R}\mathbf{J}_{ \bm{\bm{y}}|{\bm{\Phi}}_{R},{{\beta}}_R}^{\mathrm{T}} + \mathbf{J}_{ \bm{\bm{y}}|{\bm{\Phi}}_{R},{{\beta}}_I}\mathbf{J}_{ \bm{\bm{y}}|{\bm{\Phi}}_{R},{{\beta}}_I}^{\mathrm{T}}],\end{aligned}
\end{equation}
and with appropriate substitutions, it can be shown that $\mathbf{J}_{ \bm{\bm{y}}; \Tilde{\bm{\eta}}^{[m]}}^{\mathrm{e}} = 0$ when $N_U = 1$, and $\mathbf{J}_{ \bm{\bm{y}}; \Tilde{\bm{\eta}}^{[m]}}^{\mathrm{e}} > 0$ when $N_U > 1$. The proof is complete.

\subsection{Entries of the Transformation Matrix}
\label{appendix:entries_of_trans}
Defining $\nu$ as a parameter related to position, $\bm{p}$, of the centroid of an entity  or a parameter related to the misalignment, $\bm{\xi}$, of an entity; the entries of the transformation matrix related to ${\theta_{{\bm{g}}_{{G}^{}}{\bm{v}}_{{V}^{}}}}$, ${\phi_{{\bm{g}}_{{G}^{}}{\bm{v}}_{{V}^{}}}}$, and ${\tau_{{\bm{g}}_{{G}^{}}{\bm{v}}_{{V}^{}}}}$ are 
$$
\begin{aligned}\nabla_{{\nu}} {\theta_{{\bm{g}}_{{G}^{}}{\bm{v}}_{{V}^{}}}} &= -\sqrt{\frac{1}{\sin^2{\theta_{{\bm{g}}_{{G}^{}}{\bm{v}}_{{V}^{}}}}}} \left[ d^{-1}_{{\bm{g}_{G}} {\bm{v}_{V}}} \nabla_{{\nu}}(z_{{{V}^{}}}   
- z_{{{G}^{}}}) \right. \\ & \left. - (z_{{{V}^{}}}  - z_{{{G}^{}}})d^{-2}_{{\bm{g}_{G}} {\bm{v}_{V}}} \nabla_{{\nu}}d^{}_{{\bm{g}_{G}} {\bm{v}_{V}}} \right],
\end{aligned}
$$

$$
\begin{aligned}\nabla_{{\nu}} {\phi_{{\bm{g}}_{{G}^{}}{\bm{v}}_{{V}^{}}}} &= \cos^2{\theta_{{\bm{g}}_{{G}^{}}{\bm{v}}_{{V}^{}}}} \left[ (x_{{{V}^{}}}  - x_{{{G}^{}}})^{-1} \nabla_{{\nu}}(y_{{{V}^{}}}  -  y_{{{G}^{}}}) \right. \\ & \left. - (x_{{{V}^{}}}  - x_{{{G}^{}}})^{-2} (y_{{{V}^{}}}  - y_{{{G}^{}}}) \nabla_{{\nu}}(x_{{{V}^{}}}  - x_{{{G}^{}}})\right],
\end{aligned}$$ and finally $$\nabla_{{\nu}} {\tau_{{\bm{g}}_{{G}^{}}{\bm{v}}_{{V}^{}}}} = \nabla_{{\nu}} {d_{{\bm{g}}_{{G}^{}}{\bm{v}}_{{V}^{}}}} / {c}.
$$
If $\bm{\nu} =   {\bm{p}_{V}}$ or $\bm{\nu} =   {\bm{\xi}_{V}}$, then we have
$\nabla_{{\bm{\nu}}} {d_{{\bm{g}}_{{G}^{}}{\bm{v}}_{{V}^{}}}}= \bm{p}_V /  {d_{{\bm{g}}_{{G}^{}}{\bm{v}}_{{V}^{}}}}$. Likewise, if $\bm{\nu} =   {\bm{p}_{G}}$ or $\bm{\nu}  = {\bm{\xi}_{G}}$, then we have $\nabla_{{\bm{\nu}}} {d_{{\bm{g}}_{{G}^{}}{\bm{v}}_{{V}^{}}}}= -\bm{p}_G /  {d_{{\bm{g}}_{{G}^{}}{\bm{v}}_{{V}^{}}}}$. 

\subsection{Proof of Remark \ref{remark_no_priori_RIS}}
\label{appendix_no_priori_RIS} 
Now to analyze the contribution of the $m^{\text{th}}$ RIS. We pick the $m^{\text{th}}$ RIS path and analyze Lemma 5. Here, $m \in {1, \cdots, M_1}$. The contribution to $\mathbf{J}_{\bm{y};\bm{\kappa}_1}^{\mathrm{e}}$ from only the $m^{\text{th}}$ RIS is presented in (\ref{equ:prior_location_EFIM_1_only_1}).
\begin{figure*}
\begin{align}
\begin{split}
\label{equ:prior_location_EFIM_1_only_1}
&\mathbf{J}_{\bm{y};\bm{\kappa}_1}^{\mathrm{e}}=  \overline{\mathbf{\Upsilon}}_{\bm{\kappa}}^{[m]} \mathbf{J}_{\bm{y}|\bm{\eta}^{[m]}_1}^{\mathrm{e}} \overline{\mathbf{\Upsilon}}_{\bm{\kappa}}^{[m]\mathrm{T}} - \\
& \Bigg[ \Bigg[ \overline{\mathbf{\Upsilon}}_{\bm{\kappa}}^{[m]} \mathbf{J}_{\bm{y}|\bm{\eta}^{[m]}_1}^{\mathrm{e}} \overline{\overline{\mathbf{\Upsilon}}}_{\bm{\kappa}}^{[m]\mathrm{T}} + \Tilde{\bm{\Xi}}_{\bm{\kappa}^{}_{0},\bm{\kappa}^{[m]}} \Bigg] \left(\overline{\overline{\mathbf{\Upsilon}}}_{\bm{\kappa}}^{[m]} \mathbf{J}_{\bm{y}|\bm{\eta}^{[m]}_1}^{\mathrm{e}} \overline{\overline{\mathbf{\Upsilon}}}_{\bm{\kappa}}^{[m]\mathrm{T}} + \Tilde{\bm{\Xi}}_{\bm{\kappa}^{[m]},\bm{\kappa}^{[m]}}\right)^{-1}\Bigg[ \overline{\mathbf{\Upsilon}}_{\bm{\kappa}}^{[m]} \mathbf{J}_{\bm{y}|\bm{\eta}^{[m]}_1}^{\mathrm{e}} \overline{\overline{\mathbf{\Upsilon}}}_{\bm{\kappa}}^{[m]\mathrm{T}} + \Tilde{\bm{\Xi}}_{\bm{\kappa}^{}_{0},\bm{\kappa}^{[m]}} \Bigg]^{\mathrm{T}}\Bigg],
\end{split}
\end{align}
\end{figure*}
Now, with no {\em a priori} information, $\Tilde{\bm{\Xi}}_{\bm{\kappa}^{}_{0},\bm{\kappa}^{[m]}} = 0$ and $\Tilde{\bm{\Xi}}_{\bm{\kappa}^{[m]}_{},\bm{\kappa}^{[m]}} = 0$. Hence, we can write (\ref{equ:prior_location_EFIM_1_only_1}) as (\ref{equ:prior_location_EFIM_1_only_1b}).
\begin{figure*}
\begin{align}
\label{equ:prior_location_EFIM_1_only_1b}
&\mathbf{J}_{\bm{y};\bm{\kappa}_1}^{\mathrm{e}}=  \overline{\mathbf{\Upsilon}}_{\bm{\kappa}}^{[m]} \mathbf{J}_{\bm{y}|\bm{\eta}^{[m]}_1}^{\mathrm{e}} \overline{\mathbf{\Upsilon}}_{\bm{\kappa}}^{[m]\mathrm{T}} -\Bigg[ \Bigg[ \overline{\mathbf{\Upsilon}}_{\bm{\kappa}}^{[m]} \mathbf{J}_{\bm{y}|\bm{\eta}^{[m]}_1}^{\mathrm{e}} \overline{\overline{\mathbf{\Upsilon}}}_{\bm{\kappa}}^{[m]\mathrm{T}}  \Bigg] \left(\overline{\overline{\mathbf{\Upsilon}}}_{\bm{\kappa}}^{[m]} \mathbf{J}_{\bm{y}|\bm{\eta}^{[m]}_1}^{\mathrm{e}} \overline{\overline{\mathbf{\Upsilon}}}_{\bm{\kappa}}^{[m]\mathrm{T}} \right)^{-1}\Bigg[ \overline{\mathbf{\Upsilon}}_{\bm{\kappa}}^{[m]} \mathbf{J}_{\bm{y}|\bm{\eta}^{[m]}_1}^{\mathrm{e}} \overline{\overline{\mathbf{\Upsilon}}}_{\bm{\kappa}}^{[m]\mathrm{T}}\Bigg]^{\mathrm{T}}\Bigg].
\end{align}
\end{figure*}

Finally, (\ref{equ:prior_location_EFIM_1_only_1b}) can be reduced to
$\mathbf{J}_{\bm{y};\bm{\kappa}_1}^{\mathrm{e}}=  \overline{\mathbf{\Upsilon}}_{\bm{\kappa}}^{[m]} \mathbf{J}_{\bm{y}|\bm{\eta}^{[m]}_1}^{\mathrm{e}} \overline{\mathbf{\Upsilon}}_{\bm{\kappa}}^{[m]\mathrm{T}} -\Bigg[ \overline{\mathbf{\Upsilon}}_{\bm{\kappa}}^{[m]} 
\mathbf{J}_{\bm{y}|\bm{\eta}^{[m]}_1}^{\mathrm{e}}\overline{\mathbf{\Upsilon}}_{\bm{\kappa}}^{[m]\mathrm{T}}  \Bigg] = \bm{0}.$
Hence, the $m^{\text{th}}$ RIS  contributes no information for localization when there is no {\em a priori} information about its location.
\color{black}
{
\bibliographystyle{IEEEtran}
\bibliography{refs}

\begin{thebibliography}{10}
\providecommand{\url}[1]{#1}
\csname url@samestyle\endcsname
\providecommand{\newblock}{\relax}
\providecommand{\bibinfo}[2]{#2}
\providecommand{\BIBentrySTDinterwordspacing}{\spaceskip=0pt\relax}
\providecommand{\BIBentryALTinterwordstretchfactor}{4}
\providecommand{\BIBentryALTinterwordspacing}{\spaceskip=\fontdimen2\font plus
\BIBentryALTinterwordstretchfactor\fontdimen3\font minus
  \fontdimen4\font\relax}
\providecommand{\BIBforeignlanguage}[2]{{%
\expandafter\ifx\csname l@#1\endcsname\relax
\typeout{** WARNING: IEEEtran.bst: No hyphenation pattern has been}%
\typeout{** loaded for the language `#1'. Using the pattern for}%
\typeout{** the default language instead.}%
\else
\language=\csname l@#1\endcsname
\fi
#2}}
\providecommand{\BIBdecl}{\relax}
\BIBdecl

\bibitem{zekavat2011handbook}
R.~Zekavat and R.~M. Buehrer, \emph{{Handbook of Position Location: Theory,
  Practice and Advances}}.\hskip 1em plus 0.5em minus 0.4em\relax John Wiley \&
  Sons, 2011, vol.~27.

\bibitem{6894213}
S.~D'Oro, L.~Galluccio, G.~Morabito, and S.~Palazzo, ``Exploiting object group
  localization in the internet of things: Performance analysis,'' \emph{IEEE
  Trans. on Veh. Technol.}, vol.~64, no.~8, pp. 3645--3656, Aug. 2015.

\bibitem{nguyen2017wireless}
C.~L. Nguyen, O.~Georgiou, Y.~Yonezawa, and Y.~Doi, ``The wireless localization
  matching problem,'' \emph{IEEE Internet Things J.}, vol.~4, no.~5, pp.
  1312--1326, Oct. 2017.

\bibitem{8454389}
M.~Z. Win, F.~Meyer, Z.~Liu, W.~Dai, S.~Bartoletti, and A.~Conti, ``Efficient
  multisensor localization for the internet of things: Exploring a new class of
  scalable localization algorithms,'' \emph{IEEE Signal Processing Magazine},
  vol.~35, no.~5, pp. 153--167, Sep. 2018.

\bibitem{8306879}
S.~Kuutti, S.~Fallah, K.~Katsaros, M.~Dianati, F.~Mccullough, and
  A.~Mouzakitis, ``A survey of the state-of-the-art localization techniques and
  their potentials for autonomous vehicle applications,'' \emph{IEEE Internet
  Things J.}, vol.~5, no.~2, pp. 829--846, Apr. 2018.

\bibitem{7707439}
S.~Güler, B.~Fidan, S.~Dasgupta, B.~D.~O. Anderson, and I.~Shames, ``Adaptive
  source localization based station keeping of autonomous vehicles,''
  \emph{IEEE Trans. on Automatic Control}, vol.~62, no.~7, pp. 3122--3135, July
  2017.

\bibitem{9781656}
M.~Z. Win, Z.~Wang, Z.~Liu, Y.~Shen, and A.~Conti, ``Location awareness via
  intelligent surfaces: A path toward holographic {NLN},'' \emph{IEEE Veh.
  Technology Magazine}, vol.~17, no.~2, pp. 37--45, June 2022.

\bibitem{9729782}
Z.~Wang, Z.~Liu, Y.~Shen, A.~Conti, and M.~Z. Win, ``Location awareness in
  beyond {5G} networks via reconfigurable intelligent surfaces,'' \emph{IEEE J.
  Sel. Areas Commun.}, vol.~40, no.~7, pp. 2011--2025, July 2022.

\bibitem{8264743}
S.~Hu, F.~Rusek, and O.~Edfors, ``Beyond massive {MIMO}: The potential of
  positioning with large intelligent surfaces,'' \emph{IEEE Trans. on Signal
  Processing}, vol.~66, no.~7, pp. 1761--1774, Apr. 2018.

\bibitem{9508872}
A.~Elzanaty, A.~Guerra, F.~Guidi, and M.-S. Alouini, ``Reconfigurable
  intelligent surfaces for localization: Position and orientation error
  bounds,'' \emph{IEEE Trans. on Signal Processing}, vol.~69, pp. 5386--5402,
  Aug. 2021.

\bibitem{9625826}
D.~Dardari, N.~Decarli, A.~Guerra, and F.~Guidi, ``{LOS/NLOS} near-field
  localization with a large reconfigurable intelligent surface,'' \emph{IEEE
  Trans. on Wireless Commun.}, vol.~21, no.~6, pp. 4282--4294, June 2022.

\bibitem{9500663}
Z.~Abu-Shaban, K.~Keykhosravi, M.~F. Keskin, G.~C. Alexandropoulos,
  G.~Seco-Granados, and H.~Wymeersch, ``Near-field localization with a
  reconfigurable intelligent surface acting as lens,'' in \emph{Proc., IEEE
  Intl. Conf. on Commun. (ICC)}, 2021.

\bibitem{9782100}
A.~Fascista, M.~F. Keskin, A.~Coluccia, H.~Wymeersch, and G.~Seco-Granados,
  ``{RIS}-aided joint localization and synchronization with a single-antenna
  receiver: Beamforming design and low-complexity estimation,'' \emph{IEEE J.
  of Sel. Topics in Signal Processing}, to appear.

\bibitem{9528041}
K.~Keykhosravi, M.~F. Keskin, S.~Dwivedi, G.~Seco-Granados, and H.~Wymeersch,
  ``Semi-passive {3D} positioning of multiple {RIS}-enabled users,'' \emph{IEEE
  Trans. on Veh. Technol.}, vol.~70, no.~10, pp. 11\,073--11\,077, Oct. 2021.

\bibitem{9774917}
K.~Keykhosravi, M.~F. Keskin, G.~Seco-Granados, P.~Popovski, and H.~Wymeersch,
  ``{RIS}-enabled {SISO} localization under user mobility and spatial-wideband
  effects,'' \emph{IEEE J. of Sel. Topics in Signal Processing}, pp. 1--1, to
  appear.

\bibitem{5109631}
M.~Sun and K.~C. Ho, ``Successive and asymptotically efficient localization of
  sensor nodes in closed-form,'' \emph{IEEE Trans. on Signal Processing},
  vol.~57, no.~11, pp. 4522--4537, Nov. 2009.

\bibitem{4802193}
H.~Wymeersch, J.~Lien, and M.~Z. Win, ``Cooperative localization in wireless
  networks,'' \emph{Proceedings of the IEEE}, vol.~97, no.~2, pp. 427--450,
  Feb. 2009.

\bibitem{6725647}
A.~Simonetto and G.~Leus, ``Distributed maximum likelihood sensor network
  localization,'' \emph{IEEE Trans. on Signal Processing}, vol.~62, no.~6, pp.
  1424--1437, Mar. 2014.

\bibitem{7383332}
W.~Yuan, N.~Wu, B.~Etzlinger, H.~Wang, and J.~Kuang, ``Cooperative joint
  localization and clock synchronization based on gaussian message passing in
  asynchronous wireless networks,'' \emph{IEEE Trans. Veh. Technol.}, vol.~65,
  no.~9, pp. 7258--7273, Sep. 2016.

\bibitem{6731596}
H.~Shen, Z.~Ding, S.~Dasgupta, and C.~Zhao, ``Multiple source localization in
  wireless sensor networks based on time of arrival measurement,'' \emph{IEEE
  Trans. on Signal Processing}, vol.~62, no.~8, pp. 1938--1949, Apr. 2014.

\bibitem{6955781}
S.~Bartoletti, W.~Dai, A.~Conti, and M.~Z. Win, ``A mathematical model for
  wideband ranging,'' \emph{IEEE J. of Sel. Topics in Signal Processing},
  vol.~9, no.~2, pp. 216--228, Mar. 2015.

\bibitem{4600186}
D.~Dardari, C.-C. Chong, and M.~Win, ``Threshold-based time-of-arrival
  estimators in uwb dense multipath channels,'' \emph{IEEE Trans. on Commun.},
  vol.~56, no.~8, pp. 1366--1378, Aug. 2008.

\bibitem{8264804}
S.~Mazuelas, A.~Conti, J.~C. Allen, and M.~Z. Win, ``Soft range information for
  network localization,'' \emph{IEEE Trans. on Signal Processing}, vol.~66,
  no.~12, pp. 3155--3168, June 2018.

\bibitem{8827486}
A.~Conti, S.~Mazuelas, S.~Bartoletti, W.~C. Lindsey, and M.~Z. Win, ``Soft
  information for localization-of-things,'' \emph{Proceedings of the IEEE},
  vol. 107, no.~11, pp. 2240--2264, Nov. 2019.

\bibitem{5762798}
M.~Z. Win, A.~Conti, S.~Mazuelas, Y.~Shen, W.~M. Gifford, D.~Dardari, and
  M.~Chiani, ``Network localization and navigation via cooperation,''
  \emph{IEEE Commun. Magazine}, vol.~49, no.~5, pp. 56--62, May 2011.

\bibitem{5571900}
Y.~Shen and M.~Z. Win, ``Fundamental limits of wideband localization — part
  {I}: A general framework,'' \emph{IEEE Trans. on Info. Theory}, vol.~56,
  no.~10, pp. 4956--4980, Oct. 2010.

\bibitem{5571889}
Y.~Shen, H.~Wymeersch, and M.~Z. Win, ``Fundamental limits of wideband
  localization — part {II}: Cooperative networks,'' \emph{IEEE Trans. on
  Info. Theory}, vol.~56, no.~10, pp. 4981--5000, Oct. 2010.

\bibitem{9606768}
Y.~Xiong, N.~Wu, Y.~Shen, and M.~Z. Win, ``Cooperative localization in massive
  networks,'' \emph{IEEE Trans. on Info. Theory}, vol.~68, no.~2, pp.
  1237--1258, Feb. 2022.

\bibitem{7364259}
Y.~Han, Y.~Shen, X.-P. Zhang, M.~Z. Win, and H.~Meng, ``Performance limits and
  geometric properties of array localization,'' \emph{IEEE Trans. on Info.
  Theory}, vol.~62, no.~2, pp. 1054--1075, Feb. 2016.

\bibitem{garcia2017direct}
N.~Garcia, H.~Wymeersch, E.~G. Larsson, A.~M. Haimovich, and M.~Coulon,
  ``Direct localization for massive {MIMO},'' \emph{IEEE Trans. on Signal
  Processing}, vol.~65, no.~10, pp. 2475--2487, May 2017.

\bibitem{8240645}
A.~Shahmansoori, G.~E. Garcia, G.~Destino, G.~Seco-Granados, and H.~Wymeersch,
  ``Position and orientation estimation through millimeter-wave {MIMO in {5G}}
  systems,'' \emph{IEEE Trans. on Wireless Commun.}, vol.~17, no.~3, pp.
  1822--1835, Mar. 2018.

\bibitem{8356190}
Z.~Abu-Shaban, X.~Zhou, T.~Abhayapala, G.~Seco-Granados, and H.~Wymeersch,
  ``Error bounds for uplink and downlink {3D} localization in {5G} millimeter
  wave systems,'' \emph{IEEE Trans. on Wireless Commun.}, vol.~17, no.~8, pp.
  4939--4954, Aug. 2018.

\bibitem{8515231}
R.~Mendrzik, H.~Wymeersch, G.~Bauch, and Z.~Abu-Shaban, ``Harnessing {NLOS}
  components for position and orientation estimation in {5G} millimeter wave
  {MIMO},'' \emph{IEEE Trans. on Wireless Commun.}, vol.~18, no.~1, pp.
  93--107, Jan. 2019.

\bibitem{8755880}
A.~Fascista, A.~Coluccia, H.~Wymeersch, and G.~Seco-Granados, ``Millimeter-wave
  downlink positioning with a single-antenna receiver,'' \emph{IEEE Trans. on
  Wireless Commun.}, vol.~18, no.~9, pp. 4479--4490, Sep. 2019.

\bibitem{fascista2021downlink}
------, ``Downlink single-snapshot localization and mapping with a
  single-antenna receiver,'' \emph{IEEE Trans. on Wireless Commun.}, vol.~20,
  no.~7, pp. 4672--4684, July 2021.

\bibitem{li2019massive}
X.~Li, E.~Leitinger, M.~Oskarsson, K.~{\AA}str{\"o}m, and F.~Tufvesson,
  ``Massive {MIMO}-based localization and mapping exploiting phase information
  of multipath components,'' \emph{IEEE Trans. on Wireless Commun.}, vol.~18,
  no.~9, pp. 4254--4267, Sep. 2019.

\bibitem{guerra2018single}
A.~Guerra, F.~Guidi, and D.~Dardari, ``Single-anchor localization and
  orientation performance limits using massive arrays: {MIMO vs.
  beamforming},'' \emph{IEEE Trans. on Wireless Commun.}, vol.~17, no.~8, pp.
  5241--5255, Aug. 2018.

\bibitem{9082200}
F.~Ghaseminajm, Z.~Abu-Shaban, S.~S. Ikki, H.~Wymeersch, and C.~R. Benson,
  ``Localization error bounds for {5G mmWave} systems under {I/Q} imbalance,''
  \emph{IEEE Trans. on Veh. Technol.}, vol.~69, no.~7, pp. 7971--7975, July
  2020.

\bibitem{boulogeorgos2022outage}
A.-A.~A. Boulogeorgos, A.~Alexiou, and M.~Di~Renzo, ``Outage performance
  analysis of {RIS}-assisted {UAV} wireless systems under disorientation and
  misalignment,'' \emph{arXiv:2201.12056}, 2022.

\bibitem{9201330}
T.~Ma, Y.~Xiao, X.~Lei, W.~Xiong, and Y.~Ding, ``Indoor localization with
  reconfigurable intelligent surface,'' \emph{IEEE Commun. Letters}, vol.~25,
  no.~1, pp. 161--165, Sep. 2021.

\bibitem{emenonye2022fundamentals}
D.-R. Emenonye, H.~S. Dhillon, and R.~M. Buehrer, ``Fundamentals of {RIS}-aided
  localization in the far-field,'' \emph{submitted to {\em IEEE Trans. on
  Wireless Commun.}, available online: arxiv.org/abs/2206.01652}, 2022.

\bibitem{9963716}
Z.~Esmaeilbeig, K.~V. Mishra, A.~Eamaz, and M.~Soltanalian, ``Cramér–rao
  lower bound optimization for hidden moving target sensing via
  multi-{IRS}-aided radar,'' \emph{IEEE Signal Processing Letters}, vol.~29,
  pp. 2422--2426, 2022.

\bibitem{esmaeilbeig2022joint}
Z.~Esmaeilbeig, A.~Eamaz, K.~V. Mishra, and M.~Soltanalian, ``Joint waveform
  and passive beamformer design in multi-irs aided radar,''
  \emph{arXiv:2210.14458}, 2022.

\bibitem{9827797}
Z.~Esmaeilbeig, K.~V. Mishra, and M.~Soltanalian, ``{IRS}-aided radar: Enhanced
  target parameter estimation via intelligent reflecting surfaces,'' in
  \emph{2022 IEEE 12th Sensor Array and Multichannel Signal Processing Workshop
  (SAM)}, Jun. 2022, pp. 286--290.

\bibitem{9732186}
S.~Buzzi, E.~Grossi, M.~Lops, and L.~Venturino, ``Foundations of {MIMO} radar
  detection aided by reconfigurable intelligent surfaces,'' \emph{IEEE Trans.
  on Signal Processing}, vol.~70, pp. 1749--1763, Mar. 2022.

\bibitem{9264225}
X.~Wang, Z.~Fei, J.~Guo, Z.~Zheng, and B.~Li, ``Ris-assisted spectrum sharing
  between mimo radar and mu-miso communication systems,'' \emph{IEEE Wireless
  Commun. Letters}, vol.~10, no.~3, pp. 594--598, Nov. 2021.

\bibitem{9454375}
S.~Buzzi, E.~Grossi, M.~Lops, and L.~Venturino, ``Radar target detection aided
  by reconfigurable intelligent surfaces,'' \emph{IEEE Signal Processing
  Letters}, vol.~28, pp. 1315--1319, 2021.

\bibitem{9500724}
C.~Psomas, H.~A. Suraweera, and I.~Krikidis, ``On the association with
  intelligent reflecting surfaces in spatially random networks,'' in
  \emph{Proc., IEEE Intl. Conf. on Commun. (ICC)}, 2021.

\bibitem{lavalle2006planning}
S.~M. LaValle, \emph{{Planning Algorithms}}.\hskip 1em plus 0.5em minus
  0.4em\relax Cambridge University Press, 2006.

\bibitem{hampton2013introduction}
J.~R. Hampton, \emph{Introduction to {MIMO} communications}.\hskip 1em plus
  0.5em minus 0.4em\relax Cambridge university press, 2013.

\bibitem{ellingson2021path}
S.~W. Ellingson, ``Path loss in reconfigurable intelligent surface-enabled
  channels,'' in \emph{Proc. IEEE 32nd Annu. Int. Symp. Pers., Indoor, Mobile
  Radio Commun. (PIMRC)}, Oct. 2021.

\bibitem{stutzman2012antenna}
W.~L. Stutzman and G.~A. Thiele, \emph{{Antenna Theory and Design}}.\hskip 1em
  plus 0.5em minus 0.4em\relax John Wiley \& Sons, 2012.

\bibitem{horn2012matrix}
R.~A. Horn and C.~R. Johnson, \emph{{Matrix Analysis}}.\hskip 1em plus 0.5em
  minus 0.4em\relax Cambridge University Press, 2012.

\bibitem{kay1993fundamentals}
S.~M. Kay, \emph{{Fundamentals of Statistical Signal Processing: Estimation
  Theory}}.\hskip 1em plus 0.5em minus 0.4em\relax Prentice-Hall, Inc., 1993.

\bibitem{keykhosravi2021multi}
K.~Keykhosravi and H.~Wymeersch, ``Multi-{RIS} discrete-phase encoding for
  interpath-interference-free channel estimation,'' \emph{arXiv:2106.07065},
  2021.

\bibitem{9064586}
Z.~Abu-Shaban, H.~Wymeersch, T.~Abhayapala, and G.~Seco-Granados,
  ``Single-anchor two-way localization bounds for {5G} mmwave systems,''
  \emph{IEEE Trans. on Veh. Technology}, vol.~69, no.~6, pp. 6388--6400, Apr.
  2020.

\bibitem{1097833}
J.-Y. Lee and R.~Scholtz, ``Ranging in a dense multipath environment using an
  {UWB} radio link,'' \emph{IEEE Journal on Sel. Areas in Commun.}, vol.~20,
  no.~9, pp. 1677--1683, Dec. 2002.

\bibitem{9335528}
F.~Guidi and D.~Dardari, ``Radio positioning with {EM} processing of the
  spherical wavefront,'' \emph{IEEE Trans. on Wireless Commun.}, vol.~20,
  no.~6, pp. 3571--3586, Jan. 2021.

\end{thebibliography}
}

\end{document}